\providecommand{\tabularnewline}{\\}
\newcommand{\mincut}{{\mathsf{mincut}}}
\definecolor{yxc}{RGB}{0,0,200}
\date{April 2015; ~Revised May 2016}
\begin{document}

\title{Information Recovery from Pairwise Measurements}

\author{Yuxin Chen \and Changho Suh \and Andrea J. Goldsmith \thanks{Y. Chen is with the Department of Statistics, Stanford University,
Stanford, CA 94305, USA (email: yxchen@stanford.edu). C. Suh is with
the Department of Electrical Engineering, Korea Advanced Institute
of Science and Technology, Daejeon 305-701, Korea (e-mail: chsuh@kaist.ac.kr).
A. J. Goldsmith is with the Department of Electrical Engineering,
Stanford University, Stanford, CA 94305, USA (email: andrea@wsl.stanford.edu).
This paper has been presented in part at \cite{Chen2014Pairwise,chen2015information}.}}

\maketitle
\theoremstyle{plain}\newtheorem{lem}{\textbf{Lemma}}\newtheorem{theorem}{\textbf{Theorem}}\newtheorem{corollary}{\textbf{Corollary}}\newtheorem{prop}{\textbf{Proposition}}\newtheorem{fact}{\textbf{Fact}}

\theoremstyle{definition}\newtheorem{definition}{\textbf{Definition}}\newtheorem{example}{\textbf{Example}}

\theoremstyle{remark}\newtheorem{remark}{\textbf{Remark}}
\begin{abstract}
This paper is concerned with jointly recovering $n$ node-variables
$\left\{ x_{i}\right\} _{1\leq i\leq n}$ from a collection of pairwise
difference measurements. Imagine we acquire a few observations taking
the form of $x_{i}-x_{j}$; the observation pattern is represented
by a measurement graph $\mathcal{G}$ with an edge set $\mathcal{E}$
such that $x_{i}-x_{j}$ is observed if and only if $(i,j)\in\mathcal{E}$.
To account for noisy measurements in a general manner, we model the
data acquisition process by a set of channels with given input/output
transition measures. Employing information-theoretic tools applied
to channel decoding problems, we develop a \emph{unified} framework
to characterize the fundamental recovery criterion, which accommodates
general graph structures, alphabet sizes, and channel transition measures.
In particular, our results isolate a family of \emph{minimum} \emph{channel
divergence measures} to characterize the degree of measurement corruption,
which together with the size of the minimum cut of $\mathcal{G}$
dictates the feasibility of exact information recovery. For various
homogeneous graphs, the recovery condition depends almost only on
the edge sparsity of the measurement graph irrespective of other graphical
metrics; alternatively, the minimum sample complexity required for
these graphs scales like 
\[
\text{minimum sample complexity }\asymp\frac{n\log n}{\mathsf{Hel}_{1/2}^{\min}}
\]
for certain information metric $\mathsf{Hel}_{1/2}^{\min}$ defined
in the main text, as long as the alphabet size is not super-polynomial
in $n$. We apply our general theory to three concrete applications,
including the stochastic block model, the outlier model, and the haplotype
assembly problem. Our theory leads to order-wise tight recovery conditions
for all these scenarios. 
\end{abstract}
\textbf{$\quad$Index Terms}: pairwise difference, information divergence,
random graphs, geometric graphs, homogeneous graphs

\section{Introduction\label{sec:Introduction}}

In various data processing scenarios, one wishes to acquire information
about a large collection of objects, but it is infeasible or difficult
to directly measure each individual object in isolation. Instead,
only certain pairwise relations over a few object pairs can be measured.
Partial examples of pairwise relations include cluster agreements,
relative rotation and translation, pairwise matches, and paired sequencing
reads, as will be discussed in details later. Taken collectively,
these pairwise observations often carry a substantial amount of information
across all objects of interest. As a consequence, reliable joint information
recovery becomes feasible as soon as a sufficiently large number of
pairwise measurements are obtained. 

\begin{figure}
\centering{}\includegraphics[scale=0.35]{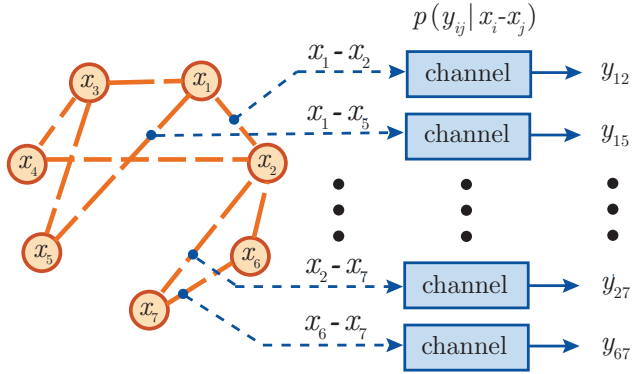}\caption{Measurement graph and equivalent channel model. For each edge $(i,j)$
in the measurement graph, $x_{i}-x_{j}$ is independently passed through
a channel with output $y_{ij}$ and transition probability $p\left(y_{ij}\mid x_{i}-x_{j}\right)$.
\label{fig:Equivalent-channel-model}}
\end{figure}

This paper explores a large family of pairwise measurements, which
we term \emph{pairwise difference measurements}. Consider $n$ variables
$x_{1},\cdots,x_{n}$, and imagine we obtain independent measurements
of the differences\footnote{Here, ``$-$'' represents some algebraic subtraction operation (broadly
defined), as we detail in Section \ref{sec:Problem-Formulation}.} $x_{i}-x_{j}$ over a few pairs ($i,j$). This pairwise difference
functional is represented by a measurement graph $\mathcal{G}$ with
an edge set $\mathcal{E}$ such that $x_{i}-x_{j}$ is observed if
and only if $(i,j)\in\mathcal{E}$. To accommodate the noisy nature
of data acquisition in a general manner, we model the observations
$\left\{ y_{ij}\right\} $ as the output of the following channel:
\begin{equation}
x_{i}-x_{j}\quad\xrightarrow{p\left(y_{ij}\mid x_{i}-x_{j}\right)}\quad y_{ij},\quad\forall(i,j)\in\mathcal{E},\label{eq:model}
\end{equation}
as illustrated in Fig.~\ref{fig:Equivalent-channel-model}. Here,
the output distribution is specified solely by the associated channel
input $x_{i}-x_{j}$, with $p\left(\cdot\mid\cdot\right)$ representing
the channel transition probability. The goal is to recover $\boldsymbol{x}=\left\{ x_{1},\cdots,x_{n}\right\} $
based on these channel outputs $\left\{ y_{ij}\right\} $. Note that
for any connected graph $\mathcal{G}$, the ground truth $\boldsymbol{x}$
is uniquely determined by the pairwise difference functional $\left\{ x_{i}-x_{j}\mid(i,j)\in\mathcal{E}\right\} $,
up to some global offset. Therefore, the problem can alternatively
be posed as decoding the input of the channel (\ref{eq:model}) based
on $\left\{ y_{ij}\right\} $. 

Problems of this kind have received considerable attention across
various fields like social networks, computer science, and computational
biology. A small sample of them are listed as follows.
\begin{itemize}
\item \emph{Community detection and graph partitioning}. Various real-world
networks exhibit community structures \cite{fortunato2010community},
and the nodes are grouped into a few clusters based on shared features.
The aim is to uncover the hidden community structure by observing
the similarities between members. For instance, in the simplest two-community
model, the vertex-variables represent the community assignment, and
the edge variables encode whether two vertices belong to the same
community. This two-community problem, sometimes referred to as graph
partitioning (e.g.~\cite{jalali2011clustering,chen2014improved}),
is a special instance of the pairwise difference model. 
\item \emph{Alignment, registration and synchronization}. Consider $n$
views of a single scene from different angles and positions\footnote{In a variety of applications including structure from motion and cryo-EM,
these views (e.g.~photos of some architectures, or projected images
of 3D molecules) are given to us without revealing their absolute
camera poses / angles with respect to the 3D structure of interest. }. One is allowed to estimate the\emph{ relative} translation~/~rotation
across several pairs of views. The problem aims at simultaneously
aligning all views based on these noisy pairwise estimates. This arises
in many applications including structure from motion in computer vision
\cite{crandall2011discrete,zach2010disambiguating}, spectroscopy
imaging and structural biology \cite{cucuringu2012eigenvector,wang2012exact},
and multi-reference alignment \cite{bandeira2014multireference}. 
\item \emph{Joint matching}. Given $n$ images~/~shapes representing the
same physical object, one wishes to identify common features across
them. The input to a cutting-edge joint matching paradigm is typically
a set of noisy pairwise matches computed between several pairs of
images in isolation \cite{zach2010disambiguating,chen2013matching,pachauri2013solving,huang2013consistent,yan2014graduated},
which falls under the category of pairwise difference measurements.
The goal is to recover globally consistent maps across the features
of all images, by refining these noisy pairwise inputs. This problem
arises in numerous applications in computer vision and graphics, solving
jigsaw puzzles, etc. 
\item \emph{Genome assembly}. The genomes of two unrelated people mostly
differ at specific nucleotide positions called single nucleotide polymorphisms
(SNPs). A haplotype is a collection of associated SNPs on a chromatid,
which is important in understanding genetic causes of various diseases
and developing personalized medicine. Among various sequencing methods,
haplotype assembly is particularly effective from paired sequencing
reads \cite{browning2011haplotype,he2010optimal,donmez2011hapsembler},
which amounts to reconstructing the haplotype based on disagreement
between pairs of single reads \cite{chen2016community,si2014haplotype,kamath2015optimal}
--- a special instance of pairwise difference measurement with binary
alphabet.
\end{itemize}
Many of these practical applications have witnessed a flurry of recent
activity in algorithm development, which are primarily motivated by
computational considerations. For instance, inspired by recent success
in spectral methods \cite{keshavan2010mc,keshavan2010matrix} and
convex relaxation \cite{ExactMC09,CanLiMaWri09,chandrasekaran2011rank}
(particularly those developed for low-rank matrix recovery problems),
many provably efficient algorithms have been proposed for graph clustering
\cite{jalali2011clustering}, joint matching \cite{huang2013consistent,chen2013matching},
synchronization \cite{wang2012exact}, and so on. While these algorithms
have been shown to enjoy intriguing recovery guarantees under simple
randomized models, the choices of performance metrics have mainly
been studied in a model-specific manner. On the fundamental-limit
side, there have been several results in place for a few applications,
e.g.~stochastic block models \cite{abbe2014exact,mossel2014consistency},
synchronization \cite{Chen2014Pairwise}, and haplotype assembly \cite{kamath2015optimal,si2014haplotype,chen2016community}.
Despite their intrinsic connections, these results were developed
primarily on a case-by-case basis instead of accounting for the most
general observation models. 

In the present paper, we emphasize the similarities and connections
among all these motivating applications, by viewing them as a graph-based
functional fed into a collection of general channels. We wish to explore
the following questions from an information-theoretic perspective: 
\begin{enumerate}
\item Are there any distance metrics of the channel transition measures
and graphical properties that dictate the success of exact information
recovery from pairwise difference measurements? 
\item If so, can we characterize the interplay between these channel separation
metrics and graphical constraints and provide insights into the feasibility
of simultaneous recovery? 
\end{enumerate}
All in all, the aim of this work is to gain a \emph{unified} understanding
about the performance limits that underlie various applications falling
in the realm of pairwise-measurement based recovery. In turn, these
fundamental criteria will provide a general benchmark for algorithm
evaluation and comparison.

\subsection{Main Contributions}

The main contribution of this paper is towards a unified characterization
of the fundamental information recovery criterion, using both information-theoretic
and graph-theoretic tools. In particular, we single out and emphasize
a family of minimum channel separation measures (i.e.~the minimum
Kullback\textendash Leibler (KL), Hellinger, and Rényi divergence),
as well as two graphical metrics (i.e.~the minimum cut size and the
cut-homogeneity exponent defined in Section \ref{sub:Key-Graphical-Metrics}),
that play central roles in determining the feasibility of exact recovery.
Equipped with these metrics, we develop a sufficient and a necessary
condition for information recovery, which apply to general graphs,
any type of input alphabets, and general channel transition measures.
Encouragingly, as long as the alphabet size is not super-polynomial
in $n$, these two conditions coincide (modulo some explicit universal
constant) for the broad class of homogeneous graphs, subsuming as
special cases Erd\H{o}s\textendash Rényi models, homogeneous geometric
graphs (e.g.~generalized rings and grids), and many other expander
graphs. 

In a nutshell, the fundamental recovery criterion is specified by
the product of the minimum channel divergence measures and the size
of the minimum cut. Intuitively, this product characterizes the amount
of information one has available to differentiate two minimally separated
input hypotheses. Somewhat surprisingly, for a variety of homogeneous
graphs, the recovery criterion relies only on the edge sparsity of
the measurement graph. Equivalently, the minimum sample complexity
required for exact recovery in these homogeneous graphs scales as
\[
\text{minimum sample complexity }\asymp\frac{n\log n}{\mathsf{Hel}_{1/2}^{\min}}
\]
for some information metric $\mathsf{Hel}_{1/2}^{\min}$ to be specified
later, provided that the alphabet size is polynomial in $n$. This
result holds irrespective of other second-order graphical metrics
like the spectral gap. 

The unified framework we develop is \emph{non-asymptotic}, in the
sense that it accommodates the most general settings without fixing
either the alphabet size or channel transition probabilities. This
allows full characterization of the high-dimensional regime where
all parameters are allowed to scale (possibly with different rates)
--- a setting that has received increasing attention compared to the
classical asymptotics where only $n$ is tending to infinity.

Finally, to illustrate the effectiveness of our general theory, we
develop concrete consequences for three canonical applications that
have been investigated in prior literature, including the stochastic
block model, the outlier model, and the haplotype assembly problem.
In each case, our theory recovers order-wise correct recovery guarantees,
and even strengthens existing results in certain regimes.

\subsection{Related Work\label{sub:Related-Work}}

On the information-theoretic side, most prior works focused on \emph{binary
input and output alphabets}. Among them, Abbe et al.~\cite{abbe2014decoding}
characterized the orderwise information-theoretic limits under the
Erd\H{o}s\textendash Rényi model, uncovering the intriguing observation
that a decoding method based on convex relaxation achieves nearly-optimal
recover guarantees under sparsely connected graphs. In addition, Si
et al. \cite{si2014haplotype} and Kamath et al.~\cite{kamath2015optimal}
determined the information-theoretic limits for a similar setup motivated
from genome sequencing, which correspond to random graphs and (generalized)
ring graphs, respectively. A sufficient recovery condition for general
graphs has also been derived in \cite{abbe2014decoding}, although
it was not guaranteed to be order optimal. Our preliminary work \cite{Chen2014Pairwise}
explored the fundamental recovery limits under general alphabets and
graph structures, but was restricted to the simplistic outlier model
rather than general channel distributions. In contrast, the framework
developed in the current work allows orderwise tight characterization
of the recovery criterion for general alphabets and channel characteristics. 

The pairwise measurement models considered in this paper and the aforementioned
works \cite{abbe2014decoding,si2014haplotype,kamath2015optimal,Chen2014Pairwise}
can all be treated as a special type of \emph{``}graphical channel''
as coined by Abbe and Montanari \cite{abbe2013conditional,Abbe13Allerton},
which refers to a general family of channels whose transition probabilities
factorize over a set of hyper-edges. This previous work on graphical
channels centered on the metric of conditional entropy that quantifies
the residual input uncertainty given the channel output, and uncovered
the stability and concentration of this metric under random sparse
graphs. In comparison, the present paper primarily aims to investigate
how the channel transition measures affect the recovery limits in
the absence of channel coding, which was previously out of reach.
Specifically, the information limit under optimal channel coding is
determined by the mutual information metric; in contrast, the information
limit without channel coding is often dictated by certain minimum
divergence metrics, which could sometimes be much smaller than the
mutual information. This arises because optimal encoding enables us
to code against the channel variation by maximizing the output separation
between distinct input hypotheses, while in the non-coding applications
one has to deal with the minimally separated input hypotheses determined
by the practical applications. In addition, we focus on full recovery
in this work, but in some applications this might be too stringent.
Recent interesting work \cite{globerson2014tight,globersonhard2015}
explored the notion of \emph{partial recovery} under binary alphabets,
which highlighted the two-dimensional grids and supplied a two-step
polynomial-time recovery algorithm. A more general theory regarding
partial or approximate recovery is left for future work.

Finally, the input variables $\left\{ x_{i}\right\} $ can be viewed
as discrete signals on the graph $\mathcal{G}$. Recent years have
seen much activity regarding discrete signal processing on graphs
\cite{shuman2013emerging,sandryhaila2013discrete}. For instance,
it has been studied in \cite{chen2015discrete} how to optimally subsample
band-limited graphs signals, subject to a sampling rate constraint,
while enabling perfect signal recovery. Our model differs from this
line of work in that the samples we take are highly constrained---that
is, we only allow pairwise difference samples taken over the edges---and
hence the resulting sample complexity significantly exceeds the sampling
rate limit.

\subsection{Terminology and Notation\label{sub:Notation}}

\textbf{Graph terminology}. Let $\mathrm{deg}\left(v\right)$ represent
the degree of a vertex $v$. For any two vertex sets $\mathcal{S}_{1}$
and $\mathcal{S}_{2}$, denote by $\mathcal{E}(\mathcal{S}_{1},\mathcal{S}_{2})$
(resp.~$e(\mathcal{S}_{1},\mathcal{S}_{2})$) the set (resp.~the
number) of edges with exactly one endpoint in $\mathcal{S}_{1}$ and
another in $\mathcal{S}_{2}$. A complete graph of $n$ vertices,
denoted by $K_{n}$, is a graph in which every pair of vertices is
connected by an edge. Below we introduce several widely used (random)
graph models; see \cite{durrett2007random,penrose2003random} and
the references therein for in-depth discussion. 
\begin{enumerate}
\item \emph{Erd\H{o}s\textendash Rényi graph}. An Erd\H{o}s\textendash Rényi
graph of $n$ vertices, denoted by $\mathcal{G}_{n,p}$, is constructed
in such a way that each pair of vertices is independently connected
by an edge with probability $p$. 
\item \emph{Random geometric graph}. A random geometric graph, denoted by
$\mathcal{G}_{n,r}$, is generated via a 2-step procedure: (i) place
$n$ vertices uniformly and independently on the surface of a unit
sphere\footnote{We consider $\mathcal{G}_{n,r,}$ on a unit sphere instead of $[0,1]^{2}$
to eliminate edge effects.}; (ii) connect two vertices by an edge if the Euclidean distance between
them is at most $r$.
\item \emph{Expander graph}. A graph $\mathcal{G}$ is said to be an expander
graph with edge expansion $h_{\mathcal{G}}$ if $e\left(\mathcal{S},\mathcal{S}^{\mathrm{c}}\right)\geq h_{\mathcal{G}}\left|\mathcal{S}\right|$
for all vertex set $\mathcal{S}$ satisfying $|\mathcal{S}|\leq n/2$.
\end{enumerate}
\textbf{Divergence measures}. Our results are established upon a family
of divergence measures. Formally, for any two probability measures
$P$ and $Q$, if $P$ is absolutely continuous with respect to $Q$,
then the KL divergence of $Q$ from $P$ is defined as
\begin{equation}
\mathsf{KL}\left(P\hspace{0.2em}\|\hspace{0.2em}Q\right):={\displaystyle \int}\mathrm{d}P\log\left(\frac{\mathrm{d}P}{\mathrm{d}Q}\right),\label{eq:KL-div}
\end{equation}
whereas the Hellinger divergence of order $\alpha\in\left(0,1\right)$
of $Q$ from $P$ is defined to be \cite{liese2006divergences,topsoe2000some}
\begin{equation}
\mathsf{Hel}_{\alpha}\left(P\hspace{0.2em}\|\hspace{0.2em}Q\right):=\frac{1}{1-\alpha}\left[1-{\displaystyle \int}\left(\mathrm{d}P\right)^{\alpha}\left(\mathrm{d}Q\right)^{1-\alpha}\right].\label{eq:Hellinger-alpha}
\end{equation}
When $\alpha=1/2$, this reduces to the so-called \emph{squared Hellinger
distance}\footnote{Several other sources introduce a prefactor of $1/2$ in order to
normalize the squared Hellinger distance, resulting in the definition
$\int\frac{1}{2}(\sqrt{\mathrm{d}P}-\sqrt{\mathrm{d}Q})^{2}$. Here,
we adopt the unnormalized version as given in \cite[Section 2.4]{tsybakov2009introduction}. }
\begin{equation}
\mathsf{Hel}_{\frac{1}{2}}\left(P\hspace{0.2em}\|\hspace{0.2em}Q\right)=2-2{\displaystyle \int}\sqrt{\mathrm{d}P}\sqrt{\mathrm{d}Q}={\displaystyle \int}\big(\sqrt{\mathrm{d}P}-\sqrt{\mathrm{d}Q}\big)^{2}.\label{eq:Hellinger}
\end{equation}
The $\chi^{2}$ divergence is defined as 
\begin{equation}
\chi^{2}\left(P\hspace{0.2em}\|\hspace{0.2em}Q\right)={\displaystyle \int}\left(\frac{\mathrm{d}P}{\mathrm{d}Q}-1\right)^{2}\mathrm{d}Q.\label{eq:ChiSquare}
\end{equation}
In particular, when $ $$P=\mathsf{Bernoulli}\left(p\right)$ and
$Q=\mathsf{Bernoulli}\left(q\right)$, we abuse the notation and let
\begin{equation}
\mathsf{KL}\left(p\hspace{0.2em}\|\hspace{0.2em}q\right)=\mathsf{KL}\left(P\hspace{0.2em}\|\hspace{0.2em}Q\right),\quad\mathsf{Hel}_{\alpha}\left(p\hspace{0.2em}\|\hspace{0.2em}q\right)=\mathsf{Hel}_{\alpha}\left(P\hspace{0.2em}\|\hspace{0.2em}Q\right),\quad\text{and}\quad\chi^{2}\left(p\hspace{0.2em}\|\hspace{0.2em}q\right)=\chi^{2}\left(P\hspace{0.2em}\|\hspace{0.2em}Q\right).\label{eq:Bernoulli}
\end{equation}
More generally, the $f$-divergence of $Q$ from $P$ is defined as
\begin{equation}
D_{f}\left(P\hspace{0.2em}\|\hspace{0.2em}Q\right):={\displaystyle \int}f\left(\frac{\mathrm{d}P}{\mathrm{d}Q}\right)\mathrm{d}Q\label{eq:F-div}
\end{equation}
for any convex function $f\left(\cdot\right)$ such that $f(1)=0$
\cite{liese2006divergences,topsoe2000some}. Note that the Hellinger
divergence of order $\alpha$, the KL divergence, and the $\chi^{2}$
divergence are special cases of $f$-divergence generated by $f(x)=\frac{1}{1-\alpha}(1-x^{\alpha})$,
$f(x)=x\log x$ (or $f(x)=x\log x-x+1$), and $f\left(x\right)=\left(x-1\right)^{2}$,
respectively. These divergence measures can often be efficiently estimated
even under large alphabets; see, e.g., \cite{jiao2014minimax} and
their subsequent work.

Finally, we introduce the Rényi divergence of positive order $\alpha$,
where $\alpha\neq1$, of a distribution $P$ from another distribution
$Q$ as \cite{rrnyi1961measures,van2014renyi}
\begin{eqnarray}
D_{\alpha}\left(P\hspace{0.2em}\|\hspace{0.2em}Q\right): & = & -\frac{1}{1-\alpha}\log\left({\displaystyle \int}\left(\mathrm{d}P\right)^{\alpha}\left(\mathrm{d}Q\right)^{1-\alpha}\right)\label{eq:defn-RenyiDiv}\\
 & = & -\frac{1}{1-\alpha}\log\left(1-\left(1-\alpha\right)\mathsf{Hel}_{\alpha}\right).\label{eq:defn-Renyi-Hel}
\end{eqnarray}
It follows from the elementary inequality $1-x\leq e^{-x}$ that $D_{\alpha}\left(P\hspace{0.2em}\|\hspace{0.2em}Q\right)\geq\mathsf{Hel}_{\alpha}\left(P\hspace{0.2em}\|\hspace{0.2em}Q\right)$.
This together with the monotonicity of $D_{\alpha}$ \cite[Theorem 3]{van2014renyi}
gives 
\begin{equation}
\mathsf{Hel}_{\alpha}\left(P\hspace{0.2em}\|\hspace{0.2em}Q\right)\leq D_{\alpha}\left(P\hspace{0.2em}\|\hspace{0.2em}Q\right)\leq\mathsf{KL}\left(P\hspace{0.2em}\|\hspace{0.2em}Q\right),\quad0<\alpha<1.\label{eq:Hel-D-KL}
\end{equation}

\vspace{0.7em}

\noindent \textbf{Other notation}. Let ${\bf 1}$ and ${\bf 0}$
be the all-one and all-zero vectors, respectively. We denote by $\mathrm{supp}\left(\boldsymbol{x}\right)$
(resp.~$\left\Vert \boldsymbol{x}\right\Vert _{0}$) the support
(resp.~the support size) of $\boldsymbol{x}$. The standard notion
$f(n)=o\left(g(n)\right)$ means $\underset{n\rightarrow\infty}{\lim}f(n)/g(n)=0$;
$f(n)=\omega\left(g(n)\right)$ means $\underset{n\rightarrow\infty}{\lim}g(n)/f(n)=0$;
$f(n)=\Omega\left(g(n)\right)$ or $f(n)\gtrsim g(n)$ mean there
exists a constant $c$ such that $f(n)\geq cg(n)$; $f(n)=\mathcal{O}\left(g(n)\right)$
or $f(n)\lesssim g(n)$ mean there exists a constant $c$ such that
$f(n)\leq cg(n)$; $f(n)=\Theta\left(g(n)\right)$ or $f(n)\asymp g(n)$
mean there exist constants $c_{1}$ and $c_{2}$ such that $c_{1}g(n)\leq f(n)\leq c_{2}g(n)$.
Throughout this paper, $\log\left(\cdot\right)$ represents the natural
logarithm.

\subsection{Organization\label{sub:Organization}}

The remainder of the paper is organized as follows. In Section \ref{sec:Problem-Formulation},
we describe the formal problem setup and introduce the key channel
distance measures. We develop non-asymptotic sufficient and necessary
recovery conditions for the special Erd\H{o}s\textendash Rényi model
in Section \ref{sec:Erd=000151s=002013R=0000E9nyi-Graphs}, along
with some intuitive interpretation of the results. Section \ref{sec:General-Graphs}
presents the recovery conditions in full generality, which accommodate
general alphabets, graph structures, and channel characteristics,
with particular emphasis on the family of homogeneous graphs. To illustrate
the effectiveness of our framework, we apply our general theory to
a few concrete examples in Section \ref{sec:Applications}. Section
\ref{sec:Concluding-Remarks} concludes the paper with a summary of
our findings and a discussion of future directions. The proofs of
the main results and auxiliary lemmas are deferred to the appendices.

\section{Problem Formulation and Key Metrics\label{sec:Problem-Formulation}}

\subsection{Models}

Imagine a collection of $n$ vertices $\mathcal{V}=\left\{ 1,\cdots,n\right\} $,
each represented by a vertex-variable $x_{i}$ over the\emph{ }input
alphabet $\mathcal{X}:=\left\{ 0,1,\cdots,M-1\right\} $, where $M$
represents the alphabet size. 
\begin{itemize}
\item \textbf{Object representation and pairwise difference. }Consider an
additive group formed over $\mathcal{X}$ together with an associative
addition operation ``+'' (broadly defined). For any $x_{i},x_{j}\in\mathcal{X}$,
the pairwise difference operation is defined as
\begin{equation}
x_{i}-x_{j}\text{ }:=\text{ }x_{i}+(-x_{j}),\label{eq:PairwiseDiff}
\end{equation}
where $-x$ stands for the unique additive inverse of $x$. We assume
throughout that ``$+$'' satisfies the following bijective property:
\begin{equation}
\forall x_{i}\in\mathcal{X}:\quad\begin{cases}
x_{i}+x_{j}\neq x_{i}+x_{l},\quad & \forall x_{l}\neq x_{j};\\
x_{i}+x_{j}\neq x_{l}+x_{j}, & \forall x_{l}\neq x_{i}.
\end{cases}\label{eq:CrossCutAssumption-alpha}
\end{equation}
A partial list of examples includes:

\begin{enumerate}
\item \emph{Modular arithmetic}: if we define ``$+$'' to be the modular
addition over integers $\left\{ 0,1,\cdots,M-1\right\} $, then $x_{i}-x_{j}$
($\mathsf{mod}$ $M$) is a valid example of (\ref{eq:PairwiseDiff}).
\item \emph{Relative rotation}: set $x_{i}=\boldsymbol{R}_{i}$ for some
rotation matrix $\boldsymbol{R}_{i}$ and let ``$+$'' denote matrix
multiplication. Then $x_{i}-x_{j}$ stands for $\boldsymbol{R}_{i}\boldsymbol{R}_{j}^{-1}$,
which represents the relative rotation between $i$ and $j$, and
hence is a special case of (\ref{eq:PairwiseDiff}). 
\item \emph{Pairwise map}: if we set $x_{i}$ to be some permutation matrix
$\boldsymbol{\Pi}_{i}$ and let ``$+$'' be matrix multiplication,
then the pairwise map between two isomorphic sets---captured by $\boldsymbol{\Pi}_{i}\boldsymbol{\Pi}_{j}^{\top}$---also
belongs to the pairwise difference model. 
\end{enumerate}
\end{itemize}
\begin{itemize}[listparindent =1em] \item \textbf{Measurement graph
and channel model}. The measurement pattern is represented by a \emph{measurement
graph} $\mathcal{G}$ that comprises an undirected edge set $\mathcal{E}$,
so that $x_{i}-x_{j}$ is measured if and only if $(i,j)\in\mathcal{E}$.
As illustrated in Fig.~\ref{fig:Equivalent-channel-model}, for each
$\left(i,j\right)\in\mathcal{E}$ ($i>j$), the pairwise difference
$x_{i}-x_{j}$ is independently passed through a channel, whose output
$y_{ij}$ follows the conditional distribution
\begin{equation}
p\left(y_{ij}\Big|x_{i}-x_{j}=l\right)=\mathbb{P}_{l}\left(y_{ij}\right),\quad0\leq l<M.\label{eq:channel-model}
\end{equation}
Here, $\mathbb{P}_{l}\left(\cdot\right)$ denotes the transition measure
that maps a given input $l$ to the output alphabet $\mathcal{Y}$;
see Fig.~\ref{fig:defn-Pl} for an illustration. With a slight abuse
of notation, we let $\mathbb{P}_{i}=\mathbb{P}_{i\text{ \ensuremath{\mathsf{mod}\text{ }}}M}$
for any $i\notin\left\{ 0,1,\cdots,M-1\right\} $. We assume throughout
that the observations are symmetric\footnote{We assume the observation model is symmetric because this is the case
in all motivating applications listed in Section \ref{sec:Introduction}.
We note, however, that all results and analyses immediately extend
to the non-asymmetric case, provided that $\mathbb{P}_{l}\left(\cdot\right)$
is defined with respect to $(y_{ij},y_{ji})$, that is, $\mathbb{P}_{l}(y_{ij},y_{ji}):=p(y_{ij},y_{ji}\Big|x_{i}-x_{j}=l)$.} in the sense that there exists a one-to-one mapping between $y_{ij}$
and $y_{ji}$ for any $(i,j)\in\mathcal{E}$; that said, all information
are contained in the upper triangular part $\left\{ y_{ij}\right\} _{1\leq i<j\leq n}$.
The output alphabet $\mathcal{Y}$ can be either continuous or discrete,
finite or infinite, which allows general modeling of distortion, corruption,
etc. As opposed to conventional information theory settings, no coding
is employed across channel uses. \end{itemize}

\begin{figure}
\centering{}\includegraphics[width=0.7\textwidth]{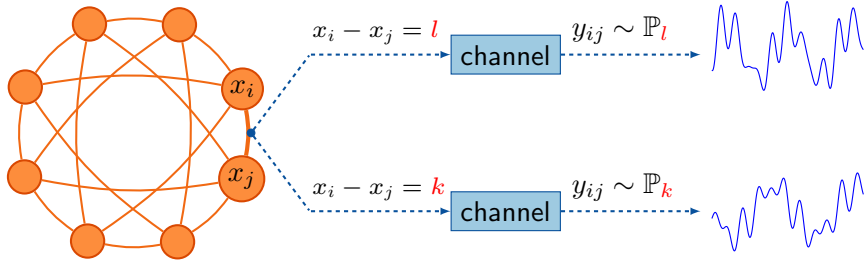}\caption{The probability measure $\mathbb{P}_{l}(\cdot)$ is defined to be
the distribution of $y_{ij}$ given $x_{i}-x_{j}=l$. \label{fig:defn-Pl}}
\end{figure}

This paper centers on exact information recovery, that is, to reconstruct
all input variables $\boldsymbol{x}=\left\{ x_{1},\cdots,x_{n}\right\} $
precisely, except for some \emph{global} \emph{offset}. This is all
one can hope for since there is absolutely no basis to distinguish
$\boldsymbol{x}$ from its shifted version $\boldsymbol{x}+l\cdot{\bf 1}=\left\{ x_{1}+l,\cdots,x_{n}+l\right\} $
given only the output $\boldsymbol{y}:=\{y_{ij}\mid\left(i,j\right)\in\mathcal{E}\}$.
In light of this, we introduce the zero-one distance modulo a global
offset factor as follows
\begin{eqnarray}
\mathrm{dist}\left(\boldsymbol{w},\boldsymbol{x}\right) & := & 1-\max_{0\leq l<M}\mathbb{I}\left\{ \boldsymbol{w}=\boldsymbol{x}+l\cdot{\bf 1}\right\} ,\label{eq:Defn-dist}
\end{eqnarray}
where $\mathbb{I}$ is the indicator function. Apparently, $\mathrm{dist}\left(\boldsymbol{w},\boldsymbol{x}\right)=0$
holds for all $\boldsymbol{w}$ that differ from $\boldsymbol{x}$
only by a global offset. With this metric in place, we define, for
any recovery procedure $\psi:\text{ }\mathcal{Y}^{\left|\mathcal{E}\right|}\mapsto\mathcal{X}^{n}$,
the \emph{probability of error} as
\begin{eqnarray}
P_{\mathsf{e}}\left(\psi\right) & := & \max_{\boldsymbol{x}\in\mathcal{X}^{n}}\mathbb{P}\left\{ \mathrm{dist}\left(\psi\left(\boldsymbol{y}\right),\boldsymbol{x}\right)\neq0\text{ }\Big|\text{ }\boldsymbol{x}\right\} .\label{eq:Prob-Error}
\end{eqnarray}
The aim is to characterize the regime where the minimax probability
of error $\inf_{\psi}P_{\mathsf{e}}\left(\psi\right)$ is vanishing.

\subsection{Key Separation Metrics on Channel Transition Measures\label{sec:Key-Metrics}}

Before proceeding to the main results, we introduce a few channel
separation measures that capture the resolutions of the measurements,
which will be critical in subsequent development of our theory. Specifically,
we isolate the minimum KL, Hellinger, and Rényi divergence with respect
to the channel transition measures as follows\footnote{Here and throughout, we assume that $\mathbb{P}_{l}$ is absolutely
continuous with respect to $\mathbb{P}_{k}$ for any $l$ and $k$. }
\begin{eqnarray}
\mathsf{KL}^{\min} & := & \min_{l\neq k}\mathsf{KL}\left(\mathbb{P}_{l}\hspace{0.2em}\|\hspace{0.2em}\mathbb{P}_{k}\right);\label{eq:def-KL_min}\\
\mathsf{Hel}_{\alpha}^{\min} & := & \min_{l\neq k}\mathsf{Hel}_{\alpha}\left(\mathbb{P}_{l}\hspace{0.2em}\|\hspace{0.2em}\mathbb{P}_{k}\right);\label{eq:DefnHel_min}\\
D_{\alpha}^{\min} & := & \min_{l\neq k}D_{\alpha}\left(\mathbb{P}_{l}\hspace{0.2em}\|\hspace{0.2em}\mathbb{P}_{k}\right)=-\frac{1}{1-\alpha}\log\left(1-\left(1-\alpha\right)\mathsf{Hel}_{\alpha}^{\min}\right).\label{eq:DefnRenyi-min}
\end{eqnarray}
These minimum divergence measures essentially reflect the distinguishability
of channel outputs given minimally separated inputs\footnote{One natural question arises as so how to estimate such divergence
metrics from measured data, which has become an active research topic.
When both the input and output alphabet sizes are small, one can first
estimate the entire probability measures via low-rank matrix recovery
schemes (e.g.~\cite{huang2016recovering}), and then plug them in
to calculate the divergence metrics. When the alphabet size is large
or when the output is continuous-valued, one might resort to more
careful functional estimation algorithms (e.g.~\cite{jiao2014minimax,kraskov2004estimating}).}. As will be seen later, the minimum Hellinger and Rényi divergence
are crucial in developing sufficient recovery conditions, while the
minimum KL divergence plays an important role in deriving minimax
lower bounds. It is well known (see \cite{sason2015bounds,tsybakov2009introduction,liese2006divergences,topsoe2000some}
for various inequalities connecting them) that these measures are
almost equivalent (modulo some small constant) when any two probability
measures under study are close to each other---a regime where two
measures are the hardest to differentiate. In particular, we underscore
one fact that links the KL divergence and the squared Hellinger distance,
which we shall use several times in the rest of the paper; see \cite[Proposition 2]{dragomir2002upper}
for an alternative version. 

\begin{fact}\label{Fact:KL-Hellinger}Suppose that $P$ and $Q$
are two probability measures such that 
\[
\frac{\mathrm{d}P}{\mathrm{d}Q}\leq R\quad\text{and}\quad\frac{\mathrm{d}Q}{\mathrm{d}P}\leq R
\]
hold uniformly over the probability space. Then, one has
\begin{equation}
\max\left\{ 2-0.5\log R,\text{ }1\right\} \cdot\mathsf{Hel}_{\frac{1}{2}}\left(P\hspace{0.2em}\|\hspace{0.2em}Q\right)\leq\mathsf{KL}\left(P\hspace{0.2em}\|\hspace{0.2em}Q\right)\leq\left(2+\log R\right)\cdot\mathsf{Hel}_{\frac{1}{2}}\left(P\hspace{0.2em}\|\hspace{0.2em}Q\right).\label{eq:KL-Hellinger}
\end{equation}
Furthermore, if $R\leq4.5$, then one has 
\begin{equation}
\left(2-0.4\log R\right)\cdot\mathsf{Hel}_{\frac{1}{2}}\left(P\hspace{0.2em}\|\hspace{0.2em}Q\right)\leq\mathsf{KL}\left(P\hspace{0.2em}\|\hspace{0.2em}Q\right)\leq\left(2+0.4\log R\right)\cdot\mathsf{Hel}_{\frac{1}{2}}\left(P\hspace{0.2em}\|\hspace{0.2em}Q\right).\label{eq:KL-Hellinger-small-R}
\end{equation}
\end{fact}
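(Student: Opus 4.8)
The plan is to reduce the two-sided estimate to a pointwise comparison between the integrands defining $\mathsf{KL}$ and $\mathsf{Hel}_{\frac12}$. Set $t:=\mathrm{d}P/\mathrm{d}Q$; the two hypotheses $\mathrm{d}P/\mathrm{d}Q\le R$ and $\mathrm{d}Q/\mathrm{d}P\le R$ force $1/R\le t\le R$ almost everywhere (in particular $P$ and $Q$ are mutually absolutely continuous). Since $\int(t-1)\,\mathrm{d}Q=0$, I may write $\mathsf{KL}(P\,\|\,Q)=\int\phi(t)\,\mathrm{d}Q$ and $\mathsf{Hel}_{\frac12}(P\,\|\,Q)=\int\psi(t)\,\mathrm{d}Q$, where $\phi(t):=t\log t-t+1$ and $\psi(t):=(\sqrt t-1)^{2}$ are both nonnegative and vanish to second order at $t=1$. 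It therefore suffices to sandwich the ratio $h(t):=\phi(t)/\psi(t)$ over $t\in[1/R,R]$ by the stated affine functions of $\log R$ and then integrate against $\mathrm{d}Q$. Passing to $u:=\sqrt t$ gives the convenient forms $\phi=2u^{2}\log u-u^{2}+1$ and $\psi=(u-1)^{2}$, with $h(1)=2$ and $h(t)-2\approx\tfrac13\log t$ near $t=1$.

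For the general bounds \eqref{eq:KL-Hellinger} I would assemble three elementary one-variable facts. First, $\phi-\psi=2u\,(u\log u-u+1)\ge0$, so $\mathsf{KL}\ge\mathsf{Hel}_{\frac12}$, which supplies the factor $\max\{\,\cdot\,,1\}$. Second, writing $\phi-2\psi$ in the variable $u$ as $N(u)=2u^{2}\log u-3u^{2}+4u-1$, one has $N(1)=N'(1)=0$ and $N''(u)=4\log u$; hence $N'$ attains its global minimum $0$ at $u=1$, $N$ is nondecreasing, and so $\phi\le2\psi$ on $(0,1]$ while $\phi\ge2\psi$ on $[1,\infty)$. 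Third, for the logarithmic corrections: on $t\ge1$ I would verify $\phi\le(2+\log t)\psi$ through $D(u):=(2+2\log u)(u-1)^{2}-\phi=3u^{2}-4u+1-4u\log u+2\log u$, which vanishes together with its first two derivatives at $u=1$ and has $D''(u)=6-4/u-2/u^{2}\ge0$ for $u\ge1$; on $t\le1$ I would verify $\phi\ge(2+\tfrac12\log t)\psi$ through $W(u):=\phi-(2+\log u)(u-1)^{2}$, which likewise vanishes with its first two derivatives at $u=1$ and satisfies $W'''(u)=\tfrac{2}{u^{3}}(u^{2}-u-1)<0$ on $(0,1)$, forcing $W>0$ there. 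Combining these with $(\log t)_{+}\le\log R$ and $\tfrac12\log(1/t)\le\tfrac12\log R$ on $[1/R,R]$, and integrating, yields \eqref{eq:KL-Hellinger}.

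For the refined bounds \eqref{eq:KL-Hellinger-small-R} the ``replace $\log R$ by $\log t$'' device fails on the upper side, since $h(t)-2$ grows like $\log t$ as $t\to\infty$ and eventually overtakes $0.4\log t$. Instead I would prove the sharper pointwise estimate $|h(t)-2|\le0.4\,|\log t|$ on the restricted range $t\in[1/4.5,\,4.5]$, which for any $[1/R,R]$ with $R\le4.5$ integrates to $|\mathsf{KL}-2\,\mathsf{Hel}_{\frac12}|\le0.4\,(\log R)\,\mathsf{Hel}_{\frac12}$. Concretely this splits into $\phi\le(2+0.4\log t)\psi$ for $t\in[1,4.5]$ and $\phi\ge(2+0.4\log t)\psi$ for $t\in[1/4.5,1]$, each a single-variable inequality on a compact interval handled by the same value-plus-derivative sign analysis. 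I expect the main obstacle to be the calibration of the constants $0.5$, $1$, and especially $0.4$ against the cutoff $4.5$: the value $R=4.5$ is essentially the largest for which the upper inequality survives with constant $0.4$ (numerically $h(4.5)\approx2.60$ is already within $10^{-2}$ of $2+0.4\log 4.5$), so near this boundary the derivative estimates must be carried out sharply rather than crudely.
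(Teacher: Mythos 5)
Your proposal is correct and takes essentially the same approach as the paper's own proof: both write $\mathsf{KL}$ and $\mathsf{Hel}_{\frac{1}{2}}$ as integrals of $f_{1}(x)=x\log x-x+1$ and $f_{2}(x)=(\sqrt{x}-1)^{2}$ against $\mathrm{d}Q$, reduce the claim to pointwise inequalities of the form $(2-c\,|\log x|)\,f_{2}(x)\leq f_{1}(x)\leq(2+c'\,|\log x|)\,f_{2}(x)$, and integrate using $|\log x|\leq\log R$ on $[1/R,R]$. The only difference is level of detail: the paper merely asserts these one-variable inequalities (``one can verify''), whereas you actually establish them for (\ref{eq:KL-Hellinger}) via $u=\sqrt{t}$ with correct computations ($N''=4\log u$, $D''=6-4/u-2/u^{2}\geq0$ on $u\geq1$, $W'''=\tfrac{2}{u^{3}}(u^{2}-u-1)<0$ on $(0,1)$), and for (\ref{eq:KL-Hellinger-small-R}) you rightly flag the near-sharpness at $t=4.5$---just note that there the auxiliary function $(2+0.4\log t)\psi(t)-\phi(t)$ is not monotone on $[1,4.5]$ (it rises and then falls back to roughly $3\times10^{-3}$ at $t=4.5$), so closing that case requires tracking a single sign change of the derivative (unimodality) plus an endpoint evaluation, rather than the constant-sign derivative arguments that sufficed for $D$ and $W$.
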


\begin{proof}See Appendix \ref{sec:Proof-of-Fact-KL-Hellinger}.\end{proof}

We conclude this part with another quantity that will often prove
useful in tightening our results. Specifically, for any $\zeta>0$,
we define
\begin{eqnarray}
m^{\mathsf{kl}}\left(\zeta\right) & := & \max_{l}\left|\left\{ i\hspace{0.2em}\hspace{0.2em}\big|\hspace{0.2em}\hspace{0.2em}i\neq l,\text{ }\mathsf{KL}\left(\mathbb{P}_{i}\hspace{0.2em}\|\hspace{0.2em}\mathbb{P}_{l}\right)\leq\left(1+\zeta\right)\mathsf{KL}^{\min}\right\} \right|.\label{eq:m-KL}
\end{eqnarray}
It is self-evident that $1\leq m^{\mathsf{kl}}\left(\zeta\right)<M$
holds regardless of $\zeta$. This quantity determines the number
of distinct input pairs under study that result in nearly-minimal
output separation.

\section{Main Results: Erd\H{o}s\textendash Rényi Graphs\label{sec:Erd=000151s=002013R=0000E9nyi-Graphs}}

At an intuitive level, faithful decoding is feasible only when (i)
the measurement graph $\mathcal{G}$ is sufficiently connected so
that we have enough measurements involving each vertex variable, and
(ii) the channel output distributions given any two distinct inputs
are sufficiently separated and hence distinguishable. To develop a
more quantitative understanding about these two factors, we start
with the Erd\H{o}s\textendash Rényi model, a tractable yet the most
widely adopted random graph model for numerous applications. Specifically,
we suppose that the measurement graph $\mathcal{G}$ is drawn from
$\mathcal{G}_{n,p_{\mathrm{obs}}}$ for some edge probability $p_{\mathrm{obs}}\gtrsim\log n/n$.
As will be shown in Section \ref{sec:General-Graphs}, many properties
and intuitions that we develop for this specific graph model hold
in greater generality.

\subsection{Maximum Likelihood Decoding\label{sub:ML-ERG}}

To begin with, we analyze the performance guarantees of the maximum
likelihood (ML) decoder
\begin{equation}
\psi_{\mathrm{ml}}\left(\boldsymbol{y}\right):=\arg\max_{\boldsymbol{x}\in\mathcal{X}^{n}}\mathbb{P}\left\{ \boldsymbol{y}\mid\boldsymbol{x}\right\} .\label{eq:MLtest}
\end{equation}
It is well-known that the ML rule minimizes the Bayesian probability
of error under uniform input priors. We develop a sufficient recovery
condition in terms of the edge probability and the minimum information
divergence, which characterizes the tradeoff between the degree of
graph connectivity and the resolution of channel outputs.

\begin{theorem}\label{thm:ML-Hellinger-ERG}Fix $\delta>0$, and
suppose that $\mathcal{G}\sim\mathcal{G}_{n,p_{\mathrm{obs}}}$. Then
there exist some universal constants $C$, $c_{1}>0$ such that if
\begin{eqnarray}
\sup_{0<\alpha<1}\left\{ \left(1-\alpha\right)\mathsf{Hel}_{\alpha}^{\min}\right\} \cdot\left(p_{\mathrm{obs}}n\right) & \geq & \left(1+\delta\right)\log\left(2n\right)+2\log\left(M-1\right),\label{eq:Achievability-Hel}
\end{eqnarray}
then the ML decoder $\psi_{\mathrm{ml}}$ obeys
\begin{eqnarray*}
P_{\mathrm{e}}\left(\psi_{\mathrm{ml}}\right) & \leq & \frac{1}{\left(2n\right)^{\max\left\{ \frac{3}{4}\delta-\frac{1}{4}\delta^{2},\text{ }\frac{\delta-1}{2}\right\} }-1}+\frac{3}{n^{10}-1}+Cn^{-c_{1}\delta n}.
\end{eqnarray*}
\end{theorem}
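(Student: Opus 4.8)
The plan is to control the ML error by a union bound over all competing hypotheses, reduced by the global-offset symmetry, and to pair this with a Chernoff bound that converts each pairwise error into the Hellinger quantity appearing in (\ref{eq:Achievability-Hel}). Because the channels act independently across edges, the likelihood factorizes as $\mathbb{P}\{\boldsymbol{y}\mid\boldsymbol{x}\}=\prod_{(i,j)\in\mathcal{E}}\mathbb{P}_{x_i-x_j}(y_{ij})$, so for a fixed ground truth $\boldsymbol{x}$ the decoder can only err toward a hypothesis $\boldsymbol{w}$ whose likelihood is at least as large. The key combinatorial observation is that an edge $(i,j)$ enters the log-likelihood comparison only when $x_i-x_j\neq w_i-w_j$; declaring two vertices equivalent whenever $\boldsymbol{w}$ and $\boldsymbol{x}$ induce identical pairwise differences on them yields a genuine partition of $\mathcal{V}$ into blocks, and the disagreement edges are exactly those crossing between blocks. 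First I would fix the offset by regarding the largest block as the majority, so that each nonconstant hypothesis is indexed by its minority set $\mathcal{T}$ (the union of the remaining blocks) of some size $t$, together with one of $M-1$ offset choices per minority vertex; there are at most $\binom{n}{t}(M-1)^{t}$ such hypotheses, and their disagreement set always contains the cut $e(\mathcal{T},\mathcal{T}^{\mathrm{c}})$.

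Next I would bound the pairwise error for a fixed $\boldsymbol{w}$. Writing the error event as $\prod_{(i,j)}\mathbb{P}_{w_i-w_j}(y_{ij})\geq\prod_{(i,j)}\mathbb{P}_{x_i-x_j}(y_{ij})$ and tilting by any exponent $s\in(0,1)$, independence across edges gives
\[
\mathbb{P}\left\{\mathbb{P}\{\boldsymbol{y}\mid\boldsymbol{w}\}\geq\mathbb{P}\{\boldsymbol{y}\mid\boldsymbol{x}\}\right\}\leq\prod_{(i,j):\,x_i-x_j\neq w_i-w_j}\int(\mathrm{d}\mathbb{P}_{w_i-w_j})^{s}(\mathrm{d}\mathbb{P}_{x_i-x_j})^{1-s}.
\]
By the definition (\ref{eq:Hellinger-alpha}), each factor equals $1-(1-s)\mathsf{Hel}_{s}(\mathbb{P}_{w_i-w_j}\,\|\,\mathbb{P}_{x_i-x_j})\leq\exp(-(1-s)\mathsf{Hel}_{s}^{\min})$, so the whole product is at most $\exp(-(1-s)\mathsf{Hel}_{s}^{\min}\,|\mathcal{E}_{d}|)$, where $\mathcal{E}_{d}$ is the disagreement edge set. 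Optimizing over $s$ and using $|\mathcal{E}_{d}|\geq e(\mathcal{T},\mathcal{T}^{\mathrm{c}})$, the pairwise error is at most $\exp(-\gamma\,e(\mathcal{T},\mathcal{T}^{\mathrm{c}}))$ with $\gamma:=\sup_{0<\alpha<1}\{(1-\alpha)\mathsf{Hel}_{\alpha}^{\min}\}$; this is precisely the divergence quantity on the left of (\ref{eq:Achievability-Hel}).

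It remains to lower bound every minority cut with high probability and to carry out the union sum. Under $\mathcal{G}\sim\mathcal{G}_{n,p_{\mathrm{obs}}}$ the count $e(\mathcal{T},\mathcal{T}^{\mathrm{c}})$ is $\mathsf{Binomial}(t(n-t),p_{\mathrm{obs}})$ with mean $p_{\mathrm{obs}}t(n-t)$; I would combine a Chernoff bound with a union bound over the $\binom{n}{t}$ choices of $\mathcal{T}$ to ensure, off a small exceptional event, that $e(\mathcal{T},\mathcal{T}^{\mathrm{c}})\geq(1-\varepsilon)p_{\mathrm{obs}}t(n-t)$ for all small minority sets and that all large minority sets carry order $n^{2}p_{\mathrm{obs}}$ crossing edges. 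Feeding this into the union bound gives
\[
P_{\mathrm{e}}(\psi_{\mathrm{ml}})\;\lesssim\;\sum_{t\geq1}\binom{n}{t}(M-1)^{t}e^{-\gamma(1-\varepsilon)p_{\mathrm{obs}}t(n-t)}\;+\;\mathbb{P}(\text{exceptional event}),
\]
and bounding $\binom{n}{t}\leq(en/t)^{t}$ turns the sum into a geometric series. In the small-minority regime $n-t\geq(1-o(1))n$, so the hypothesis $\gamma p_{\mathrm{obs}}n\geq(1+\delta)\log(2n)+2\log(M-1)$ forces the $t$-th term to decay geometrically; optimizing the slack $\varepsilon$ against the deviation probability is what produces the two competing exponents $\tfrac{3}{4}\delta-\tfrac{1}{4}\delta^{2}$ (binding for $\delta\leq2$) and $\tfrac{\delta-1}{2}$ (binding for $\delta\geq2$), and summing the geometric series yields the leading term $\frac{1}{(2n)^{\max\{\frac{3}{4}\delta-\frac{1}{4}\delta^{2},\,\frac{\delta-1}{2}\}}-1}$. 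The residual exceptional-event probabilities supply the lower-order terms $\frac{3}{n^{10}-1}$ and $Cn^{-c_{1}\delta n}$.

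The hard part will be obtaining the cut lower bound uniformly over the exponentially many partitions while keeping the constant in the exponent essentially sharp, namely $(1-o(1))\gamma p_{\mathrm{obs}}n$ rather than a constant fraction of it: a naive moment-generating-function estimate would lose a multiplicative factor and degrade the recovery threshold. This forces a split into a small-minority regime, where $n-t\approx n$ makes the concentration tight enough to recover the sharp constant, and a large-minority regime, which can arise once $M\geq3$ so that $t$ approaches $n$; in the latter one must exploit that the presence of many distinct offsets inflates the disagreement set well beyond $e(\mathcal{T},\mathcal{T}^{\mathrm{c}})$, and it is here that the $2\log(M-1)$ budget in (\ref{eq:Achievability-Hel}) and the exponentially small term $Cn^{-c_{1}\delta n}$ are consumed. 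Balancing the union-bound entropy $\log\binom{n}{t}+t\log(M-1)$ against the concentration exponent across both regimes is the delicate computation that pins down the precise form of the error exponent.
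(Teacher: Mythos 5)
Your architecture---fixing the global offset, partitioning the vertices into blocks on which $\boldsymbol{w}$ agrees with the truth, a tilting/Chernoff bound giving a factor $1-(1-s)\mathsf{Hel}_{s}\leq e^{-(1-s)\mathsf{Hel}_{s}^{\min}}$ per disagreement edge, and a union bound over the at most ${n \choose t}(M-1)^{t}$ hypotheses of minority size $t$---is the same as the paper's. The genuine gap is in how the graph randomness is treated. You condition on $\mathcal{G}$ and require a uniform cut lower bound $e(\mathcal{T},\mathcal{T}^{\mathrm{c}})\geq(1-\varepsilon)p_{\mathrm{obs}}t(n-t)$ over all minority sets with $\varepsilon$ small enough to preserve the sharp constant; you flag this as ``the hard part,'' but in the regime the theorem must cover it is false, not merely hard. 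Since $(1-\alpha)\mathsf{Hel}_{\alpha}^{\min}\leq1$ always, condition (\ref{eq:Achievability-Hel}) permits $p_{\mathrm{obs}}n$ as small as $(1+\delta)\log(2n)$. Already for singletons $\mathcal{T}=\{v\}$ the cut is the vertex degree, a $\mathsf{Binomial}(n-1,p_{\mathrm{obs}})$ variable with mean about $(1+\delta)\log(2n)$; the Chernoff-plus-union-bound over $n$ vertices then forces $\varepsilon\geq\sqrt{2/(1+\delta)}$, which is $\geq1$ (vacuous) whenever $\delta\leq1$, and this is not an artifact: at mean degree $(1+\delta)\log n$ the minimum degree is of order $\log n$ but with a constant factor strictly smaller than $1+\delta$ (much smaller when $\delta$ is small), so relative concentration genuinely fails. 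Consequently your exponent degrades from $(1-o(1))\gamma p_{\mathrm{obs}}n$ to $(1-\varepsilon)\gamma p_{\mathrm{obs}}n$ with $\varepsilon$ bounded away from zero, and what this route proves is the theorem with the right-hand side of (\ref{eq:Achievability-Hel}) inflated by a constant factor---a strictly weaker statement. (A commented-out conductance lemma in the paper's source, whose $3/(n^{10}-1)$ failure probability still survives in the theorem statement, is exactly this approach; it requires $p_{\mathrm{obs}}\geq c(\zeta)\log n/n$ for a large constant $c(\zeta)$ and was abandoned.)

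The paper's proof sidesteps the issue by never conditioning on the graph: since $P_{\mathrm{e}}$ averages over $\mathcal{G}$ as well as the channel noise, one may take the expectation of the conditional pairwise bound over $\mathcal{E}$. The disagreement count $\frac{1}{2}\sum_{i=0}^{M-1}e(\mathcal{S}_{i},\mathcal{S}_{i}^{\mathrm{c}})$ is $\mathsf{Binomial}(N_{\boldsymbol{w}},p_{\mathrm{obs}})$ with $N_{\boldsymbol{w}}=\frac{1}{2}\big(n^{2}-\sum_{i}n_{i}^{2}\big)$, and the binomial moment generating function yields the exact identity
\[
\mathbb{E}\left[e^{-(1-\alpha)D_{\alpha}^{\min}\cdot\mathsf{Binomial}(N_{\boldsymbol{w}},p_{\mathrm{obs}})}\right]=\left(1-p_{\mathrm{obs}}\left(1-\alpha\right)\mathsf{Hel}_{\alpha}^{\min}\right)^{N_{\boldsymbol{w}}}\leq\exp\left(-N_{\boldsymbol{w}}\,p_{\mathrm{obs}}\left(1-\alpha\right)\mathsf{Hel}_{\alpha}^{\min}\right),
\]
with no slack, no exceptional event, and no extra assumption on $p_{\mathrm{obs}}$. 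Two further discrepancies: the paper works with the full cross-block count $N_{\boldsymbol{w}}$ (bounded below by an extremal argument on $\sum_{i}n_{i}^{2}$), not just the majority-versus-rest cut $e(\mathcal{T},\mathcal{T}^{\mathrm{c}})$, which is precisely how the large-minority case with $M\geq3$ is absorbed without a separate argument; and the exponents $\frac{3}{4}\delta-\frac{1}{4}\delta^{2}$ and $\frac{\delta-1}{2}$ fall out of the union-bound computation for hypotheses with $k/n>\max\{1-\delta/4,\,1/2\}$, not from optimizing any concentration slack as you suggest.
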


\begin{proof}See Appendix \ref{sec:Proof-of-Theorem-ML-Hellinger-ERG}.\end{proof}

Theorem \ref{thm:ML-Hellinger-ERG} essentially suggests that the
ML rule is guaranteed to work with high probability as long as 
\[
\sup_{\alpha}\left\{ \left(1-\alpha\right)\mathsf{Hel}_{\alpha}^{\min}\right\} \geq\left(1+o\left(1\right)\right)\frac{\log n+2\log M}{p_{\mathrm{obs}}n}.
\]
Our result is non-asymptotic in the sense that it holds for all parameters
$(n,M,\mathsf{Hel}_{\alpha}^{\min})$ instead of limiting to the asymptotic
regime with $n$ tending to infinity. Recognizing that $p_{\mathrm{obs}}n$
is exactly the average vertex degree $d_{\mathrm{avg}}$, our recovery
condition reads
\begin{equation}
\sup_{\alpha}\left\{ \left(1-\alpha\right)\mathsf{Hel}_{\alpha}^{\min}\right\} \cdot d_{\mathrm{avg}}\text{ }\gtrsim\text{ }\log n,\label{eq:RecoveryCond-simple-ER}
\end{equation}
provided that $M\lesssim\mathcal{O}\left(\mathrm{poly}\left(n\right)\right)$
and $\alpha\in(0,1)$ is some fixed constant independent of $n$. 

We pause to develop some intuitive understanding about the condition
(\ref{eq:RecoveryCond-simple-ER}). In contrast to classical information
theory settings, the channel decoding model considered herein concerns
``uncoded'' channel input. Consequently, the recovery bottleneck
for the ML rule is presented by the minimum output distance given
two distinct hypotheses, rather than the mutual information that plays
a crucial rule in coded transmission. To be more precise, two hypotheses
$\boldsymbol{x}$ and $\tilde{\boldsymbol{x}}$ are the least separated
when they differ only by one component, say, $v$. As a concrete example,
one can take $\boldsymbol{x}=[0,\cdots,0]$ and $\tilde{\boldsymbol{x}}=[1,0,\cdots,0]$
as illustrated in Fig.~\ref{fig:Intuition}. The resulting pairwise
outputs $\left\{ y_{ij}\mid(i,j)\in\mathcal{E}\right\} $ thus contain
about $\mathrm{deg}\left(v\right)$ pieces of information for distinguishing
$\boldsymbol{x}$ and $\tilde{\boldsymbol{x}}$; see, e.g., the orange
shaded region highlighted in Fig.~\ref{fig:Intuition}. Since the
information contained in each measurement can be quantified by certain
divergence metrics, namely, $\mathsf{Hel}_{\alpha}^{\min}$ (or $\mathsf{KL}^{\min}$
as adopted in Section \ref{sub:Minimax-Lower-Bound-ER}), the total
amount of information one has available to distinguish two minimally
separated hypotheses is captured by
\begin{equation}
\mathsf{Hel}_{\alpha}^{\min}\cdot d_{\mathrm{avg}}\quad\text{or}\quad\mathsf{KL}^{\min}\cdot d_{\mathrm{avg}}.\label{eq:total-info}
\end{equation}
Furthermore, there are at least $n$ distinct hypotheses that are
all minimally apart from the ground truth $\boldsymbol{x}$ (e.g.~$\tilde{\boldsymbol{x}}=[1,0,\cdots,0]$,
$\tilde{\boldsymbol{x}}=[0,1,\cdots,0]$, $\cdots$, $\tilde{\boldsymbol{x}}=[0,\cdots,0,1]$).
Representation of these hypotheses calls for at least $\log n$ bits,
and hence the information that one can exploit to distinguish $\boldsymbol{x}$
from them---i.e.~(\ref{eq:total-info})---needs to exceed $\log n$.
This offers an intuitive interpretation of the recovery condition
(\ref{eq:RecoveryCond-simple-ER}).

\begin{figure}
\begin{tabular}{cccc}
\includegraphics[width=0.23\textwidth]{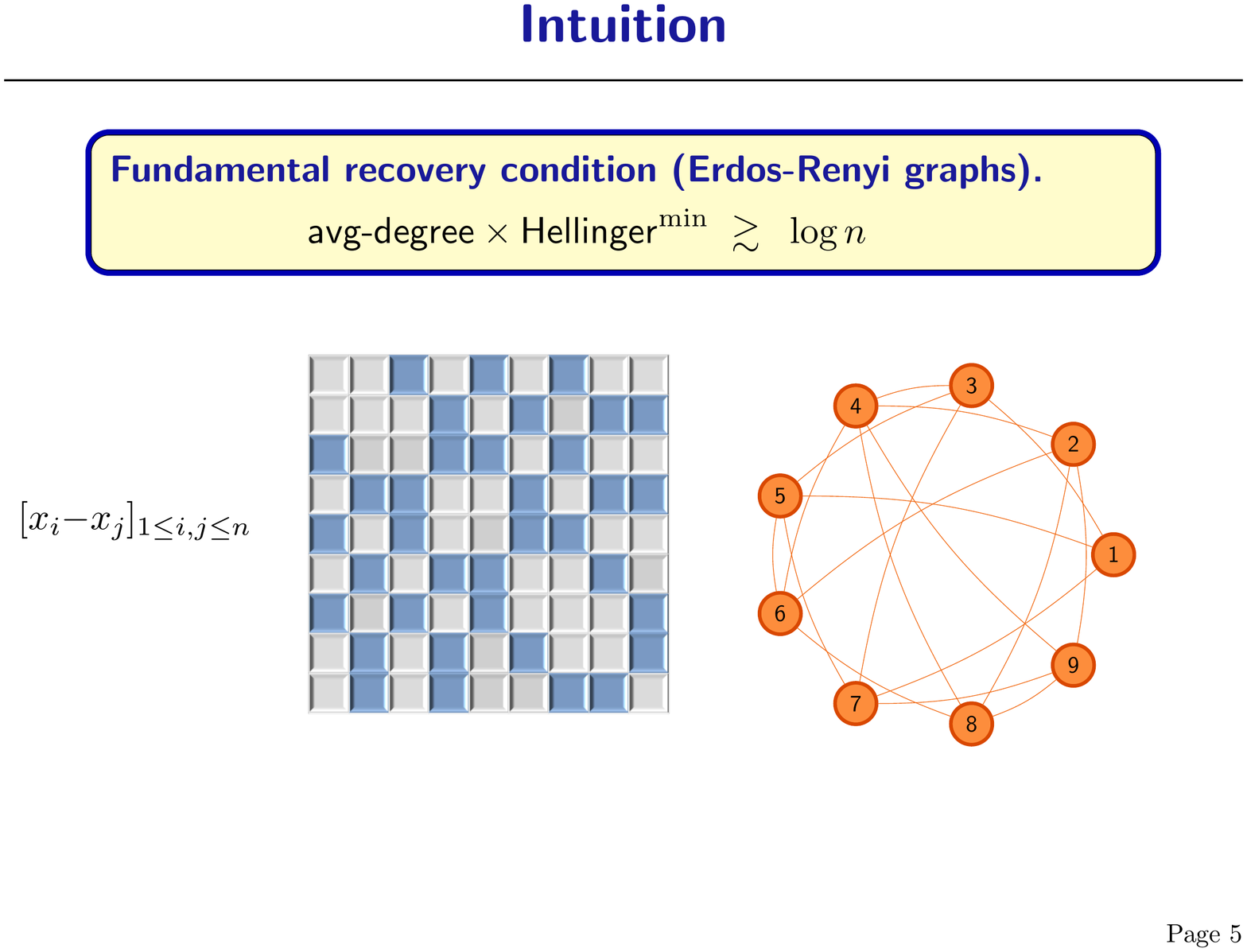} & \includegraphics[width=0.23\textwidth]{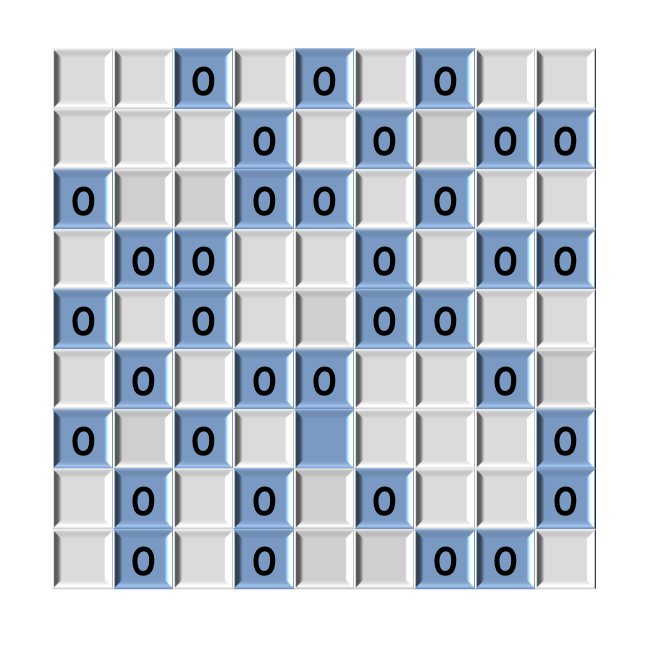} & \includegraphics[width=0.23\textwidth]{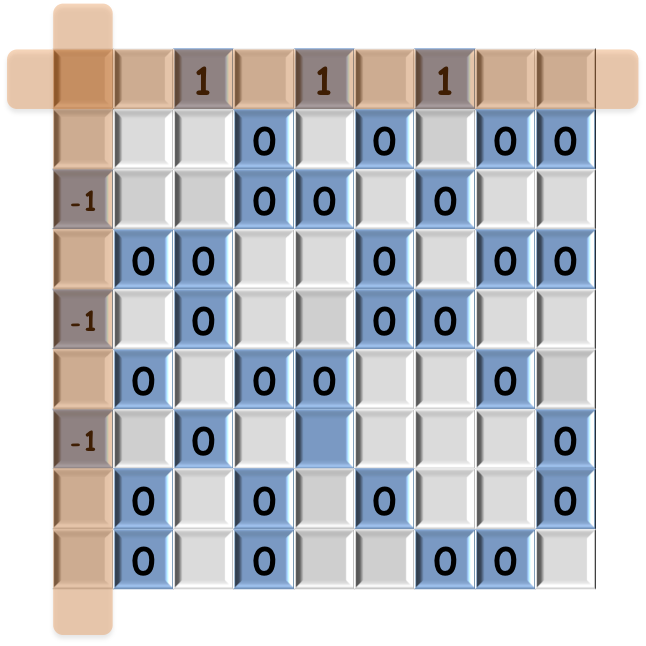} & \includegraphics[width=0.23\textwidth]{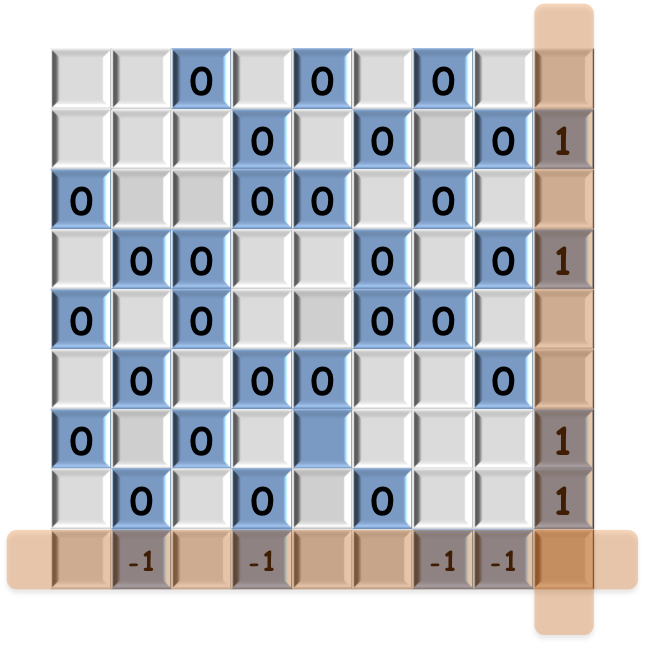}\tabularnewline
(a) & (b) & (c) & (d)\tabularnewline
\end{tabular}\caption{\label{fig:Intuition}The pairwise inputs $[x_{i}-x_{j}]_{1\leq i,j\leq n}$
under a realization of $\mathcal{G}_{n,p_{\mathrm{obs}}}$ with $n=8$
and $p_{\mathrm{obs}}=0.3$ as shown in (a). The input patterns are
shown for (b) the ground truth $\boldsymbol{x}=\boldsymbol{0}$, (c)
the hypothesis $\tilde{\boldsymbol{x}}=[1,0,\cdots,0]$, and (d) the
hypothesis $\tilde{\boldsymbol{x}}=[0,\cdots,0,1]$. The blue parts
represent the entries being measured, and the orange region constitutes
the parts of pairwise inputs that disagree with the ground truth. }
\end{figure}

Careful readers will note that Theorem \ref{thm:ML-Hellinger-ERG}
is presented in terms of the Hellinger divergence rather than the
KL divergence. We remark on these this technical matter as follows.

\begin{remark}In general, we are unable to develop the recovery conditions
in terms of the KL divergence. This arises partly because the KL divergence
cannot be well controlled for all measures, especially when $\left\Vert \frac{\mathrm{d}\mathbb{P}_{l}}{\mathrm{d}\mathbb{P}_{j}}\right\Vert _{\infty}$
($l\neq j$) grows. In contrast, the Hellinger divergence is generally
stable and more convenient to analyze in this case. \end{remark}

We conclude this part with an extension. Examining our analysis reveals
that all arguments continue to hold even if the output distributions
are \emph{location-dependent}. Formally, suppose that the distribution
of $y_{ij}$ is parametrized by 
\begin{equation}
p\left(y_{ij}\Big|x_{i}-x_{j}=l\right)=\mathbb{P}_{l}^{ij}\left(y_{ij}\right),\quad0\leq l<M,\text{ }(i,j)\in\mathcal{E}.\label{eq:channel-model-general}
\end{equation}
This leads to a modified version of the minimum divergence metric
as follows
\begin{eqnarray}
\overline{\mathsf{Hel}}_{\alpha}^{\min} & := & \min\left\{ \left.\mathsf{Hel}_{\alpha}(\mathbb{P}_{l}^{ij}\hspace{0.2em}\|\hspace{0.2em}\mathbb{P}_{k}^{ij})\hspace{0.2em}\right|l\neq k,\hspace{0.2em}0\leq l,k<M,\hspace{0.2em}(i,j)\in\mathcal{E}\right\} ;\label{eq:Helmin_general}
\end{eqnarray}
With these modified metrics in place, the preceding sufficient recovery
condition immediately extends to this generalized model. 

\begin{theorem}\label{theorem:general-Hellinger-ERG}The recovery
condition of Theorem \ref{thm:ML-Hellinger-ERG} continues to hold
under the transition probabilities (\ref{eq:channel-model-general}),
if $\mathsf{Hel}_{\alpha}^{\min}$ is replaced by $\overline{\mathsf{Hel}}_{\alpha}^{\min}$
as defined in (\ref{eq:Helmin_general}).\end{theorem}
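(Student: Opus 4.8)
The plan is to revisit the proof of Theorem~\ref{thm:ML-Hellinger-ERG} and to observe that the channel transition measures enter that analysis \emph{only} through a single quantity: a uniform lower bound on the per-edge Hellinger divergence. Once this is recognized, passing from $\mathsf{Hel}_{\alpha}^{\min}$ to $\overline{\mathsf{Hel}}_{\alpha}^{\min}$ requires no new idea, because the metric $\overline{\mathsf{Hel}}_{\alpha}^{\min}$ defined in (\ref{eq:Helmin_general}) is, by construction, a lower bound on $\mathsf{Hel}_{\alpha}(\mathbb{P}_{l}^{ij}\hspace{0.2em}\|\hspace{0.2em}\mathbb{P}_{k}^{ij})$ \emph{simultaneously} over all edges $(i,j)\in\mathcal{E}$ and all input pairs $l\neq k$. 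In other words, it plays exactly the role in the location-dependent model (\ref{eq:channel-model-general}) that $\mathsf{Hel}_{\alpha}^{\min}$ plays in the homogeneous model.

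The key step is the error analysis of the ML decoder for a fixed competing hypothesis. Let $\boldsymbol{x}$ be the ground truth and $\tilde{\boldsymbol{x}}$ a hypothesis that is not a global shift of $\boldsymbol{x}$, and set the disagreement edge set $\mathcal{D}:=\{(i,j)\in\mathcal{E}\,:\,x_{i}-x_{j}\neq\tilde{x}_{i}-\tilde{x}_{j}\}$. Because the observations $\{y_{ij}\}$ remain mutually independent across edges under (\ref{eq:channel-model-general}), the likelihood ratio factorizes over $\mathcal{E}$ and collapses to a product over $\mathcal{D}$ (the factors for agreeing edges being one). Fixing any Chernoff order $\alpha\in(0,1)$ and invoking the standard Chernoff bound together with the definition (\ref{eq:Hellinger-alpha}) yields
\[
\mathbb{P}\Big\{\mathbb{P}\{\boldsymbol{y}\mid\tilde{\boldsymbol{x}}\}\geq\mathbb{P}\{\boldsymbol{y}\mid\boldsymbol{x}\}\,\Big|\,\boldsymbol{x}\Big\}\;\leq\;\prod_{(i,j)\in\mathcal{D}}\int\big(\mathbb{P}_{\tilde{x}_{i}-\tilde{x}_{j}}^{ij}\big)^{\alpha}\big(\mathbb{P}_{x_{i}-x_{j}}^{ij}\big)^{1-\alpha}\;\leq\;\exp\!\big(-|\mathcal{D}|\,(1-\alpha)\,\overline{\mathsf{Hel}}_{\alpha}^{\min}\big),
\]
where the last inequality uses $1-t\leq e^{-t}$ together with the defining minimality of $\overline{\mathsf{Hel}}_{\alpha}^{\min}$. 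This is \emph{verbatim} the per-hypothesis bound obtained in the homogeneous case, with $\mathsf{Hel}_{\alpha}^{\min}$ replaced throughout by $\overline{\mathsf{Hel}}_{\alpha}^{\min}$.

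The remaining ingredients of the original argument are purely graph- and alphabet-theoretic, and so transfer without modification. Concretely, one groups the competing hypotheses $\tilde{\boldsymbol{x}}$ according to the vertex partition induced by the level sets of $\tilde{\boldsymbol{x}}-\boldsymbol{x}$; then $|\mathcal{D}|$ equals the number of edges of $\mathcal{G}$ crossing this partition. A union bound over hypotheses, combined with the concentration of cut sizes in $\mathcal{G}_{n,p_{\mathrm{obs}}}$ and the counting of hypotheses at each Hamming level (which is precisely what produces the $\log(2n)+2\log(M-1)$ terms), reproduces the recovery condition (\ref{eq:Achievability-Hel}) and the stated bound on $P_{\mathrm{e}}(\psi_{\mathrm{ml}})$. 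As none of these steps invokes the channel measures beyond the per-edge exponent already handled, taking the supremum over $\alpha\in(0,1)$ at the end delivers exactly Theorem~\ref{thm:ML-Hellinger-ERG} with $\mathsf{Hel}_{\alpha}^{\min}$ replaced by $\overline{\mathsf{Hel}}_{\alpha}^{\min}$.

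There is no genuine analytical obstacle here; the result is essentially a bookkeeping statement. The only point that warrants care is confirming that a \emph{single} Chernoff order $\alpha$ may be used uniformly across the heterogeneous edges, and this is automatic: for each fixed $\alpha$ the quantity $\overline{\mathsf{Hel}}_{\alpha}^{\min}$ is already a minimum over all edges \emph{before} the outer supremum over $\alpha$ is taken, so the factorized bound above holds with a common $\alpha$ and a common lower bound on every factor. The main effort therefore lies simply in auditing the proof of Theorem~\ref{thm:ML-Hellinger-ERG} to verify that $\mathsf{Hel}_{\alpha}^{\min}$ enters only as a uniform per-edge lower bound and in no other capacity.
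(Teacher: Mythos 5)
Your proposal is correct and is essentially the paper's own argument: the paper justifies Theorem \ref{theorem:general-Hellinger-ERG} solely by the remark that, upon examining the proof of Theorem \ref{thm:ML-Hellinger-ERG}, the channel measures enter only through the per-edge Chernoff factor $\int\left(\mathrm{d}P\right)^{\alpha}\left(\mathrm{d}Q\right)^{1-\alpha}\leq1-\left(1-\alpha\right)\overline{\mathsf{Hel}}_{\alpha}^{\min}$, which is exactly the audit you carry out. The only small inaccuracy is in your description of the remaining steps: the paper removes the conditioning on $\mathcal{E}$ via the exact binomial moment-generating-function computation rather than by concentration of cut sizes, but this does not change the substance of the argument.
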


\subsection{Minimax Lower Bound\label{sub:Minimax-Lower-Bound-ER}}

In order to assess the tightness of our recovery guarantee for the
ML rule, we develop two necessary conditions that apply to any recovery
procedure. Here and below, $H\left(x\right):=-x\log x-\left(1-x\right)\log\left(1-x\right)$
stands for the binary entropy function. 

\begin{theorem}\label{corollary:ConverseHellinger-ERG}Suppose that
$\mathcal{G}\sim\mathcal{G}_{n,p_{\mathrm{obs}}}$. Fix any $\zeta\geq0$
and $\epsilon>0$, and assume that $p_{\mathrm{obs}}>\frac{c\log n}{n}$
for some sufficiently large constant $c>0$.

(a) If
\begin{eqnarray}
\mathsf{KL}^{\min}\cdot p_{\mathrm{obs}}n & \leq & \frac{\left(1-\epsilon\right)\Big(\log n+\log m^{\mathsf{kl}}\left(\zeta\right)\Big)-H\left(\epsilon\right)}{\left(1+\epsilon\right)\left(1+\zeta\right)},\label{eq:Fano-KL-thm-Gnp}
\end{eqnarray}
then $\inf_{\psi}P_{\mathsf{e}}\left(\psi\right)\geq\epsilon-n^{-10}$.

(b) Suppose that $\alpha\leq\frac{1}{1+\epsilon}$ and $p_{\mathrm{obs}}n>2\epsilon\alpha\log n$.
If
\begin{equation}
\left(1-\alpha\right)\mathsf{Hel}_{\alpha}^{\min}\cdot p_{\mathrm{obs}}n\text{ }<\text{ }\frac{\epsilon\alpha\log n}{1+\zeta}-r_{\epsilon}\label{eq:Converse-Hel-Gnp}
\end{equation}
for some residual \footnote{More precisely, $r_{\epsilon}:=\log2+\frac{2\left[\epsilon\alpha\log n-\log2\right]^{2}}{np_{\mathrm{obs}}}$.}
$r_{\epsilon}$, then $\inf_{\psi}P_{\mathsf{e}}\left(\psi\right)\geq n^{-\epsilon}-n^{-10}.$\end{theorem}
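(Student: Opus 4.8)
The plan is to treat both parts as information-theoretic converses obtained by reducing exact recovery to a hypothesis-testing problem among configurations that differ at a single vertex, run after conditioning on a degree-regularity event for $\mathcal{G}\sim\mathcal{G}_{n,p_{\mathrm{obs}}}$. Since flipping the value of one vertex $v$ alters only the $\deg(v)$ edges incident to $v$, the laws of the observations under two such configurations are product measures over $\deg(v)$ coordinates, and $\deg(v)\approx p_{\mathrm{obs}}n$ by a Chernoff bound; this is the quantity that multiplies the per-edge divergence throughout. First I would record the good event $\{\deg(v)\le(1+\epsilon)(n-1)p_{\mathrm{obs}}\ \forall v\}$ (for part (a)) or its one-vertex analogue with slack $\zeta$ (for part (b)), which holds with probability at least $1-n^{-10}$ precisely because $p_{\mathrm{obs}}>c\log n/n$ with $c$ large; the residual $n^{-10}$ in both conclusions is the price of this event.

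For part (a), I would run Fano's inequality over a family of $T\asymp n\,m^{\mathsf{kl}}(\zeta)$ hypotheses. Each is obtained by perturbing a single vertex to one of the $m^{\mathsf{kl}}(\zeta)$ symbols whose KL to the base symbol is within $(1+\zeta)\mathsf{KL}^{\min}$ of the minimum, the base being chosen to realize the maximizer $l$ in the definition of $m^{\mathsf{kl}}(\zeta)$; anchoring one coordinate keeps all $T$ configurations distinct modulo the global offset. On the good event the KL divergence from each hypothesis to the reference measure is at most $\deg(v)(1+\zeta)\mathsf{KL}^{\min}\le(1+\epsilon)(1+\zeta)\mathsf{KL}^{\min}p_{\mathrm{obs}}n$, so the mutual information obeys $I\le(1+\epsilon)(1+\zeta)\mathsf{KL}^{\min}p_{\mathrm{obs}}n$. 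I would then invoke the sharpened Fano bound $H(P_{\mathsf e})+P_{\mathsf e}\log(T-1)\ge\log T-I$ with $\log T=\log n+\log m^{\mathsf{kl}}(\zeta)$; assuming $P_{\mathsf e}<\epsilon$ and using monotonicity of $x\mapsto H(x)+x\log(T-1)$ yields $I>(1-\epsilon)\log T-H(\epsilon)$, contradicting the hypothesis \eqref{eq:Fano-KL-thm-Gnp}. Hence $P_{\mathsf e}\ge\epsilon$ on the good event and $\ge\epsilon-n^{-10}$ overall, which accounts for every factor appearing in \eqref{eq:Fano-KL-thm-Gnp}.

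For part (b), I would use a two-point (Le Cam) argument with the always-bounded Hellinger/Rényi affinity in place of the KL divergence. Fix the truth $\boldsymbol{x}^{(0)}=\boldsymbol 0$ and a single alternative $\boldsymbol{x}^{(1)}$ flipping one vertex $v$ to the symbol attaining $\mathsf{Hel}_\alpha^{\min}$; writing $P_0,P_1$ for the laws of the $\deg(v)$ incident observations, any $\psi$ obeys $\inf_\psi P_{\mathsf e}\ge\tfrac12[P_0(A)+P_1(A^{\mathrm c})]$ for the induced test $A$. The key estimate is the Hölder bound $\rho_\alpha^{\deg(v)}\le P_0(A)^\alpha+P_1(A^{\mathrm c})^{1-\alpha}$, where $\rho_\alpha:=\int(\mathrm dP_0)^\alpha(\mathrm dP_1)^{1-\alpha}=1-(1-\alpha)\mathsf{Hel}_\alpha^{\min}$ is the per-edge affinity and the product structure over the $\deg(v)$ edges supplies the exponent. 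Inverting this (the constraint $\alpha\le\tfrac{1}{1+\epsilon}$ fixes which Hölder exponent governs the inversion) together with $1-x\le e^{-x}$ gives $P_{\mathsf e}\gtrsim\exp\!\big(-\tfrac{1-\alpha}{\alpha}\deg(v)\,\mathsf{Hel}_\alpha^{\min}\big)$; substituting the degree bound (slack $\zeta$) and condition \eqref{eq:Converse-Hel-Gnp} makes the exponent at most $\epsilon\log n$, so $P_{\mathsf e}\ge n^{-\epsilon}$. The additive $r_\epsilon$ is designed to swallow both the $\log 2$ from the testing prefactor and a Bernstein-type correction for the fluctuation of $\deg(v)$ around $p_{\mathrm{obs}}n$, which explains its $\log 2+2[\epsilon\alpha\log n-\log 2]^2/(np_{\mathrm{obs}})$ form.

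The hard part, and the reason the theorem is phrased via Hellinger rather than KL, is exactly the divergence control in part (b): as flagged in the preceding Remark, $\mathsf{KL}(\mathbb{P}_l\,\|\,\mathbb{P}_j)$ can be uncontrollably large when the likelihood ratios are unbounded, so the converse must route through the affinity $\rho_\alpha\in[0,1]$, which is always well defined, and accept the weaker $n^{-\epsilon}$ conclusion that the Hölder inversion yields. I expect the remaining delicate bookkeeping to be (i) in part (a), building the hypothesis family so that it simultaneously realizes the full count $m^{\mathsf{kl}}(\zeta)$, keeps a uniform KL bound to a common reference, and stays distinct modulo the global offset, and (ii) in part (b), matching the degree-concentration slack to the free parameter $\zeta$ and verifying that $r_\epsilon$ indeed dominates all lower-order losses so that the clean exponent $\epsilon\log n$ survives.
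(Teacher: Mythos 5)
Your part (a) is essentially the paper's own argument: Fano's inequality applied to the family of roughly $n\,m^{\mathsf{kl}}(\zeta)$ single-vertex perturbations of the all-zero input, with the per-hypothesis KL radius bounded by $(1+\zeta)\,d_{\max}\,\mathsf{KL}^{\min}$, and the Chernoff bound $d_{\max}\le(1+\epsilon)np_{\mathrm{obs}}$ (valid with probability at least $1-n^{-10}$ once $p_{\mathrm{obs}}>c\log n/n$) supplying both the $(1+\epsilon)$ factor in the denominator of the condition and the additive $-n^{-10}$ in the conclusion. That part is correct and matches the paper.

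Part (b) has a genuine gap. Your two-point H\"older reduction gives $\rho_\alpha^{\deg(v)}\le P_0(A)^\alpha+P_1(A^{\mathrm{c}})^{1-\alpha}\le P_{\mathsf{e}}^{\alpha}+P_{\mathsf{e}}^{1-\alpha}$, and since $P_{\mathsf{e}}\le 1$ the only inversion this permits is $P_{\mathsf{e}}\ge\bigl(\rho_\alpha^{\deg(v)}/2\bigr)^{1/\min\{\alpha,1-\alpha\}}$. Your claimed bound $P_{\mathsf{e}}\gtrsim\exp\bigl(-\tfrac{1-\alpha}{\alpha}\deg(v)\,\mathsf{Hel}_\alpha^{\min}\bigr)$ corresponds to the exponent $1/\alpha$, which is available only when $P_{\mathsf{e}}^{\alpha}$ is the dominant term, i.e.\ when $\alpha\le 1/2$. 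But the theorem is stated for every $\alpha\le\frac{1}{1+\epsilon}$, and whenever $\epsilon<1$ this range includes $\alpha\in\bigl(1/2,\frac{1}{1+\epsilon}\bigr]$; there the binding term is $P_1(A^{\mathrm{c}})^{1-\alpha}$, the inversion degrades to $P_{\mathsf{e}}\gtrsim\rho_\alpha^{\deg(v)/(1-\alpha)}$, and under condition (\ref{eq:Converse-Hel-Gnp}) the resulting exponent is about $\frac{\epsilon\alpha}{1-\alpha}\log n>\epsilon\log n$, which is too weak to yield $P_{\mathsf{e}}\ge n^{-\epsilon}$. Your parenthetical remark that the constraint $\alpha\le\frac{1}{1+\epsilon}$ "fixes which H\"older exponent governs the inversion" is therefore not correct: in a two-point problem that constraint has no bearing on which term dominates. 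Its real role only appears in the multi-hypothesis version the paper uses: test $\boldsymbol{0}$ against all $n$ single-vertex flips $\mathcal{H}_1$ simultaneously (via the $f$-divergence Fano bound of Guntuboyina, or equivalently by summing your H\"older estimate over the $n$ disjoint decision regions $A_i$ and using concavity to get $\sum_i P_0(A_i)^{1-\alpha}\le n^{\alpha}$), which yields $P_{\mathsf{e}}^{\alpha}\ge\rho_\alpha^{d_{\max}}-n^{-(1-\alpha)}$. Now the problematic term is $n^{-(1-\alpha)}$, controlled by the \emph{number of hypotheses} rather than by $P_{\mathsf{e}}$ itself, and $\alpha\le\frac{1}{1+\epsilon}$ is exactly what guarantees $n^{-(1-\alpha)}\le n^{-\epsilon\alpha}$, so that $P_{\mathsf{e}}\ge n^{-\epsilon}$ survives. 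Replacing your two-point test by this $n$-ary test closes the gap; the rest of your bookkeeping (tensorization of the affinity, $\log(1-x)\ge -x-x^2$, the degree bound with slack matched to $\zeta$, and absorbing the $\log 2$ and quadratic corrections into $r_\epsilon$) is sound and coincides with the paper's computation.
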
\begin{proof}See
Appendices \ref{sec:Proof-of-Theorem-ConverseM-Degree} and \ref{sec:Proof-of-Theorem-Converse-ERG}.\end{proof}

We remark that the two necessary recovery conditions in Theorem \ref{corollary:ConverseHellinger-ERG}
concern two regimes of separate interest. Specifically, Condition
(\ref{eq:Fano-KL-thm-Gnp}) based on the KL divergence is most useful
when investigating first-order convergence, namely, the situation
where we only require the minimax probability of error to be asymptotically
vanishing without specifying convergence rates. In comparison, Condition
(\ref{eq:Converse-Hel-Gnp}) based on the Hellinger distance is more
convenient when we further demand exact recovery to occur with polynomially
high probability (e.g.~$1-\frac{1}{n}$). In various ``big-data''
applications, the term ``with high probability'' might only refer
to the case where the error probability decays at least at a polynomial
rate. 

On the other hand, while Condition (\ref{eq:Fano-KL-thm-Gnp}) is
not directly presented in terms of $M$, we can often capture the
effect of the alphabet size through the surrogate $m^{\mathsf{kl}}$,
provided that $\log m^{\mathsf{kl}}\asymp\log M$. In fact, this arises
in many scenarios of interest. As an example, see the outlier model
to be discussed in Section \ref{sec:Outlier}, where $m^{\mathsf{kl}}=M-1$.

\subsection{Tightness of Theorems \ref{thm:ML-Hellinger-ERG} and \ref{corollary:ConverseHellinger-ERG}
and Minimal Sample Complexity}

Encouragingly, the recovery conditions derived in Theorems \ref{thm:ML-Hellinger-ERG}
and \ref{corollary:ConverseHellinger-ERG} are often tight up to some
small multiplicative constant. In the sequel, we will assume that
$p_{\mathrm{obs}}\gg\log n/n$, and will pay special attention to
two of the most popular divergence metrics: the KL divergence and
the squared Hellinger distance.
\begin{enumerate}
\item Consider the first-order convergence, that is, the regime where $\inf_{\psi}P_{\mathrm{e}}\left(\psi\right)\rightarrow0$
($n\rightarrow\infty$). Combining Theorems \ref{thm:ML-Hellinger-ERG}
and \ref{corollary:ConverseHellinger-ERG}(a) suggests that
\begin{eqnarray*}
\inf_{\psi}P_{\mathsf{e}}\left(\psi\right) & \overset{n\rightarrow\infty}{\longrightarrow} & 0\quad\quad\quad\text{if }\mathsf{Hel}_{\frac{1}{2}}^{\min}\cdot p_{\mathrm{obs}}n>\left(1+o\left(1\right)\right)\left(2\log n+4\log M\right),\\
\inf_{\psi}P_{\mathsf{e}}\left(\psi\right) & \overset{n\rightarrow\infty}{\centernot\longrightarrow} & 0\quad\quad\quad\text{if }\mathsf{KL}^{\min}\cdot p_{\mathrm{obs}}n<\left(1-o\left(1\right)\right)\left(\log n+\log m^{\mathsf{kl}}\right).
\end{eqnarray*}
When applied to the most challenging case where $\frac{\mathrm{d}\mathbb{P}_{l}}{\mathrm{d}\mathbb{P}_{j}}=1+o\left(1\right)$
for all $l\neq j$, these conditions read (with the assistance of
Fact \ref{Fact:KL-Hellinger})
\begin{eqnarray}
\inf_{\psi}P_{\mathsf{e}}\left(\psi\right) & \overset{n\rightarrow\infty}{\longrightarrow} & 0\quad\quad\quad\text{if }\mathsf{Hel}_{\frac{1}{2}}^{\min}\cdot p_{\mathrm{obs}}n>\left(1+o\left(1\right)\right)\left(2\log n+4\log M\right),\label{eq:Hel-ML-ER}\\
\inf_{\psi}P_{\mathsf{e}}\left(\psi\right) & \overset{n\rightarrow\infty}{\centernot\longrightarrow} & 0\quad\quad\quad\text{if }\mathsf{Hel}_{\frac{1}{2}}^{\min}\cdot p_{\mathrm{obs}}n<\left(1-o\left(1\right)\right)\frac{\log n+\log m^{\mathsf{kl}}}{2},\label{eq:Hel-converse-ER}
\end{eqnarray}
which are matching conditions modulo some multiplicative factor not
exceeding 
\begin{equation}
\left(1+o\left(1\right)\right)\frac{4\log n+8\log M}{\log n+\log m^{\mathsf{kl}}}.\label{eq:factor-1}
\end{equation}

\item We now move on to more stringent convergence by considering the regime
where $\lim_{n\rightarrow\infty}\inf_{\psi}P_{\mathrm{e}}\left(\psi\right)\lesssim1/n$.
Putting Theorem \ref{thm:ML-Hellinger-ERG} (with $\delta=3$) and
Theorem \ref{corollary:ConverseHellinger-ERG}(b) (with $\alpha=1/2$
and $\epsilon=1$) together implies that 
\begin{eqnarray}
\inf_{\psi}P_{\mathsf{e}}\left(\psi\right) & < & \frac{1}{n}\quad\quad\quad\text{if }\mathsf{Hel}_{\frac{1}{2}}^{\min}\cdot p_{\mathrm{obs}}n>\left(1+o\left(1\right)\right)\left(8\log n+4\log M\right),\label{eq:Hel-ER-achievability}\\
\inf_{\psi}P_{\mathsf{e}}\left(\psi\right) & > & \frac{1}{n}\quad\quad\quad\text{if }\mathsf{Hel}_{\frac{1}{2}}^{\min}\cdot p_{\mathrm{obs}}n<\left(1-o\left(1\right)\right)\log n,\label{eq:Hel-ER-converse}
\end{eqnarray}
which holds regardless of $\frac{\mathrm{d}\mathbb{P}_{l}}{\mathrm{d}\mathbb{P}_{j}}$.
These conditions do not impose constraints on the alphabet size, and
are tight up to a multiplicative gap of
\begin{equation}
\left(1+o\left(1\right)\right)\left(8+\frac{4\log M}{\log n}\right).\label{eq:factor-2}
\end{equation}

\end{enumerate}
\begin{remark}The multiplicative factor (\ref{eq:factor-1}) is small
when either the alphabet size $M=O\left(\text{poly}\log(n)\right)$
(in which case this factor is $4$) or when $M\asymp m^{\mathrm{kl}}$
(in which case this factor is at most 8). Similarly, the multiplicative
factor (\ref{eq:factor-2}) is the smallest when the alphabet size
is $O\left(\text{poly}\log(n)\right)$. However, both results might
become loose when $\log M$ and $\frac{\log M}{\log m^{\mathrm{kl}}}$
grow. For many practical applications, the alphabet size is typically
much smaller than $n$, in which case our results are tight within
a reasonable constant factor. \end{remark}

In summary, we have characterized the fundamental recovery condition
under the Erd\H{o}s\textendash Rényi model, which reads
\begin{equation}
\mathsf{Hel}_{\frac{1}{2}}^{\min}\cdot d_{\mathrm{avg}}\text{ }\gtrsim\text{ }\log n\label{eq:-2}
\end{equation}
as long as the alphabet size $M$ is not super-polynomial\footnote{When the alphabet size is super-polynomial in $n$, our upper and
lower bounds are within a factor of $O\left(\frac{\log M}{\log n}\right)$
from optimal. We note, however, that in all our motivating applications,
the alphabet size $M$ is typically much smaller than $\exp\left(\Theta\left(n\right)\right)$
and hence the regime with super-polynomial alphabet size is of little
practical relevance. } in $n$. Put another way, in order to allow exact recovery, the \emph{sample
complexity}---i.e.~the total number of edges of $\mathcal{G}$ (which
is around $nd_{\mathrm{avg}}/2$)---necessarily obeys
\begin{equation}
\text{minimum sample complexity }\asymp\text{ }\frac{n\log n}{\mathsf{Hel}_{\frac{1}{2}}^{\min}}.\label{eq:min-sample-size}
\end{equation}
Interestingly, these simple characterizations as well as the underlying
intuitions carry over to many more homogeneous graphs, as will be
seen in the next section.

Before concluding this section, we remark on the possibility of improving
the preconstant. There are a few cases where the tight preconstants
have been settled, including the stochastic block model \cite{abbe2014exact,hajek2014achieving}
and the censor block model \cite{hajek2014achieving,chen2016community},
provided that the alphabet size (or the number of communities) is
$M=2$. Asymptotically, the necessary and sufficient recovery condition
reads
\[
\sup_{0<\alpha<1}\left\{ (1-\alpha)\mathsf{Hel}_{\alpha}^{\min}\right\} >\frac{\log n}{p_{\mathrm{obs}}n}
\]
\[
\text{or}\quad\text{minimum sample complexity}>\frac{n\log n}{2\sup_{0<\alpha<1}\left\{ (1-\alpha)\mathsf{Hel}_{\alpha}^{\min}\right\} },
\]
thus justifying the tightness of the sufficient recovery condition
we derive. In fact, when the alphabet size is a constant, the fundamental
divergence measure that dictates the information limits is often some
variant of the minimum Chernoff information\footnote{Note that the Chernoff information is defined to be $-\log\left\{ 1-\sup_{0<\alpha<1}(1-\alpha)\mathsf{Hel}_{\alpha}\right\} $. }
or the Hellinger divergence (when optimized over $\alpha$). Nevertheless,
it is not clear whether such findings extend to the large alphabet
settings. Part of the reason is that the minimum Chernoff information
or Hellinger divergence do not necessarily capture the precise error
exponent when testing many hypotheses. We leave to future work the
investigation of tight preconstants in the large alphabet scenarios.

\section{Main Results: General Graphs\label{sec:General-Graphs}}

We now broaden our scope by exploring general measurement graphs beyond
the simple Erd\H{o}s\textendash Rényi model, with emphasis on the
family of homogeneous graphs.

\subsection{Preliminaries: Key Graphical Metrics\label{sub:Key-Graphical-Metrics}}

Our theory relies on several widely encountered graphical metrics
including the minimum vertex degree, the average vertex degree, the
maximum vertex degree, and the size of the minimum cut, which we denote
by $d_{\min}$, $d_{\mathrm{avg}}$, $d_{\max}$, and $\mincut$,
respectively. This subsection introduces a few other not-so-common
graphical quantities that prove crucial in presenting our results. 

For any integer $m$, define
\begin{eqnarray}
\mathcal{N}\left(m\right): & = & \left\{ \mathcal{S}\subset\mathcal{V}\mid\text{ }e\left(\mathcal{S},\mathcal{S}^{\mathrm{c}}\right)\leq m\right\} ,\label{eq:DefnNk-D}
\end{eqnarray}
which comprises all cuts of size at most $m$. We are particularly
interested in the peak growth rate of the cardinality of $\mathcal{N}$
as defined below
\begin{equation}
\tau_{k}^{\mathrm{cut}}:=\frac{1}{k}\log\Big|\mathcal{N}\left(k\cdot\mincut\right)\Big|\quad\quad\text{and}\quad\quad\tau^{\mathrm{cut}}:=\underset{k>0}{\max}\text{ }\tau_{k}^{\mathrm{cut}}.\label{eq:DefnExponent}
\end{equation}
In the sequel, we will term $\tau^{\mathrm{cut}}$ the \textbf{\emph{cut-homogeneity
exponent}}. In fact, if we rewrite
\begin{equation}
\tau_{k}^{\mathrm{cut}}:=\mincut\cdot\left\{ \frac{1}{k\cdot\mincut}\log\Big|\mathcal{N}\left(k\cdot\mincut\right)\Big|\right\} ,\label{eq:tau_cut_alternative}
\end{equation}
then we see that $\tau^{\mathrm{cut}}$ relies on two factors: (i)
the cut-set distribution exponents $\left\{ \frac{1}{k}\log\left|\mathcal{N}(k)\right|\right\} _{k>0}$
and (2) the size of the minimum cut, both of which are important in
capturing\emph{ }the degree of homogeneity of the cut-set distribution.
This metric is best illustrated through the following two extreme
examples:
\begin{itemize}
\item \emph{Complete graph $K_{n}$ on $n$ vertices}. This homogeneous
graph obeys $e\left(\mathcal{S},\mathcal{S}^{\mathrm{c}}\right)=\left|\mathcal{S}\right|\left(n-\left|\mathcal{S}\right|\right)$
and $\mincut=n-1$. A simple combinatorial argument suggests that
$|\mathcal{N}\left(m\right)|\asymp{n \choose m/n}\asymp n^{m/n}$,
revealing that
\[
\tau^{\mathrm{cut}}=\max_{k}\frac{1}{k}\log\big|\mathcal{N}\left(k\cdot\mincut\right)\big|\text{ }\asymp\text{ }\max_{k}\frac{1}{k}\frac{k\cdot\mincut}{n}\log n\text{ }\asymp\text{ }\log n.
\]

\item \emph{Two complete subgraphs $K_{n/2}$ connected by a single bridge}.
In this graph, the min-cut size is $\mincut=1$ due to the existence
of a bridge, but we still have $|\mathcal{N}\left(m\right)|\lesssim n^{m/n}$
when $m\geq n$. A little algebra gives
\[
\tau^{\mathrm{cut}}=\max_{k}\frac{1}{k}\log\big|\mathcal{N}\left(k\cdot\mincut\right)\big|\text{ }\lesssim\text{ }\max_{k}\frac{1}{k}\frac{k\cdot\mincut}{n}\log n\text{ }\asymp\text{ }\frac{\log n}{n}.
\]

\end{itemize}
Interestingly, for various homogeneous graphs of interest, $\tau^{\mathrm{cut}}$
can be bounded above in a tight and simple manner, namely, $\tau^{\mathrm{cut}}\lesssim\log n$.
This is formally stated in the following lemma, which accounts for
homogeneous geometric graphs and expander graphs. In words, a graph
is said to be a homogeneous geometric graph if it satisfies two properties:
(i) each connected pair of vertices shares sufficiently many neighbors;
(ii) when two vertices are geometrically close, they share a large
fraction of neighbors. Here and throughout, we shall use $\mathcal{V}\left(u\right)$
to denote the set of neighbors of a vertex $u$. 

\begin{lem}\label{lemma:alpha_random_graphs}\textbf{(1) Homogeneous
geometric graphs}. Suppose that $\mathcal{G}$ is connected and is
embedded in some Euclidean space. Assume that there exist two numerical
constants $\rho>0$ and $0<\kappa<\frac{1}{2}$ such that \begin{enumerate}[(a)]{\setlength\itemindent{15pt}\item for
each $\left(u,v\right)\in\mathcal{E}$, \vspace{-1em}

\begin{equation}
\left|\mathcal{V}\left(u\right)\cap\mathcal{V}\left(v\right)\right|\geq\rho\cdot\mincut;\label{eq:GeometricOverlap}
\end{equation}
}{\setlength\itemindent{15pt}\item for each $\left(u,v\right)\in\mathcal{E}$,
denoting by $w^{(i)}$ the $i^{\text{th}}$ closest vertex to $v$
among the vertices in $\mathcal{V}\left(u\right)\cap\mathcal{V}\left(v\right)$
, one has
\begin{equation}
\left|\mathcal{V}\left(v\right)\text{ }\backslash\text{ }\mathcal{V}(w^{(i)})\right|\leq\frac{1}{2}\rho\cdot\mincut,\quad1\leq i\leq\kappa\rho\cdot\mincut.\label{eq:Closest}
\end{equation}
}\end{enumerate}

Under the above two conditions, one has 
\begin{equation}
\tau^{\mathrm{cut}}\leq\frac{8}{\kappa\rho}\log\left(2n\right).\label{eq:tau-Geometric}
\end{equation}
\textbf{(2) Expander graphs}. If $\mathcal{G}$ is an expander graph
with edge expansion $h_{\mathcal{G}}$, then 
\begin{equation}
\tau^{\mathrm{cut}}\leq\frac{\mincut}{h_{\mathcal{G}}}\log n+\log2.\label{eq:tau-expander}
\end{equation}

\end{lem}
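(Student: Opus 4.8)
The plan is to bound the cardinality $|\mathcal{N}(k\cdot\mincut)|$ for every admissible $k\ge1$ and then divide by $k$, since $\tau^{\mathrm{cut}}=\max_{k}\frac{1}{k}\log|\mathcal{N}(k\cdot\mincut)|$. In both parts the target is an estimate of the form $|\mathcal{N}(k\cdot\mincut)|\le(2n)^{Lk}$ for a suitable constant $L$, which immediately yields $\tau^{\mathrm{cut}}\le L\log(2n)$. The expander case (2) is the warm-up: for any $\mathcal{S}\in\mathcal{N}(m)$ I may assume $|\mathcal{S}|\le n/2$ after possibly replacing $\mathcal{S}$ by $\mathcal{S}^{\mathrm{c}}$ (a factor of $2$, accounting for the additive $\log2$). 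The edge-expansion hypothesis $e(\mathcal{S},\mathcal{S}^{\mathrm{c}})\ge h_{\mathcal{G}}|\mathcal{S}|$ then forces $|\mathcal{S}|\le m/h_{\mathcal{G}}$, so the number of admissible cuts is at most twice the number of vertex subsets of size at most $m/h_{\mathcal{G}}$, namely $|\mathcal{N}(m)|\le 2\sum_{j\le m/h_{\mathcal{G}}}\binom{n}{j}\le 2\,n^{m/h_{\mathcal{G}}}$. Setting $m=k\cdot\mincut$ and dividing by $k\ge1$ gives $\tau_{k}^{\mathrm{cut}}\le\frac{1}{k}\log2+\frac{\mincut}{h_{\mathcal{G}}}\log n\le\log2+\frac{\mincut}{h_{\mathcal{G}}}\log n$, which is exactly (\ref{eq:tau-expander}).

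For the geometric case (1) the crux is that a cut of size at most $k\cdot\mincut$ admits a short description. First I would exploit condition (\ref{eq:GeometricOverlap}): fix a crossing edge $(u,v)$ with $u\in\mathcal{S}$, $v\in\mathcal{S}^{\mathrm{c}}$; each common neighbor $w\in\mathcal{V}(u)\cap\mathcal{V}(v)$ lies on one of the two sides, and whichever side it takes, exactly one of the edges $(w,u)$, $(w,v)$ crosses the cut. Since there are at least $\rho\cdot\mincut$ such common neighbors and the induced crossing edges are distinct, every crossing edge is incident to a vertex $x$ whose crossing-degree $c(x):=|\{\text{crossing edges at }x\}|$ is at least $\tfrac12\rho\cdot\mincut$; call such vertices \emph{heavy}. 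Because $\sum_{x}c(x)=2\,e(\mathcal{S},\mathcal{S}^{\mathrm{c}})\le 2k\cdot\mincut$, there are at most $4k/\rho$ heavy vertices, and every crossing edge touches one of them.

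It remains to reconstruct $\mathcal{S}$ from a bounded amount of information anchored at the heavy vertices, and this is where condition (\ref{eq:Closest}) enters and where the main difficulty lies. After deleting the heavy set $H$, the induced subgraph $\mathcal{G}[\mathcal{V}\setminus H]$ contains no crossing edge, so each of its connected components lies entirely in $\mathcal{S}$ or entirely in $\mathcal{S}^{\mathrm{c}}$; moreover every non-heavy vertex agrees in side with a strict majority of its neighbors (using $\deg(x)\ge\rho\cdot\mincut$, a consequence of (\ref{eq:GeometricOverlap}), together with $c(x)<\tfrac12\rho\cdot\mincut$). The plan is to invoke the neighborhood-smoothness in (\ref{eq:Closest})---which guarantees that the $\kappa\rho\cdot\mincut$ geometrically closest common neighbors of an edge endpoint share all but at most $\tfrac12\rho\cdot\mincut$ of its neighbors---to run a geometric sweep along the boundary and show that, between consecutive heavy anchors, the side assignment is forced. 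This should certify that $\mathcal{S}$ is uniquely determined by a list of at most $\tfrac{8k}{\kappa\rho}$ vertices of $\mathcal{V}$ together with their side labels, each drawn from an alphabet of size $2n$; injectivity of the resulting encoding then gives $|\mathcal{N}(k\cdot\mincut)|\le(2n)^{8k/(\kappa\rho)}$ and hence (\ref{eq:tau-Geometric}).

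The hard part is precisely establishing the last claim: that the smoothness condition rules out independent ``flips'' of the many components that deleting $H$ could in principle create. In other words, I expect the obstacle to be proving that local majority-consistency plus geometric proximity propagate the side assignment coherently from one heavy anchor to the next, so that the slack factor $2/\kappa$ (which exceeds $4$ since $\kappa<\tfrac12$) over the $4k/\rho$ heavy vertices is enough to pin down the entire boundary. The clustering step from condition (\ref{eq:GeometricOverlap}) and the final counting are routine; the geometric propagation argument controlled by (\ref{eq:Closest}) is the substantive content.
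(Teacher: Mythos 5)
Your part (2) is correct and is essentially the paper's own argument: after passing to the smaller side of the cut, edge expansion forces $|\mathcal{S}|\leq k\cdot\mincut/h_{\mathcal{G}}$, and counting vertex subsets of that size gives $|\mathcal{N}(k\cdot\mincut)|\leq2n^{k\cdot\mincut/h_{\mathcal{G}}}$, hence (\ref{eq:tau-expander}). Part (1), however, contains a genuine gap, and it is exactly the step you flag yourself as unresolved: you never prove that the coloring of $\mathcal{V}$ is uniquely reconstructible from the data attached to the high-crossing-degree vertices. That reconstruction claim is not a routine verification to be filled in later --- it is the entire substance of the paper's proof, and it is the only place where hypotheses (\ref{eq:GeometricOverlap}) and (\ref{eq:Closest}) are actually used. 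Concretely, the paper calls a vertex type-1 when its cut-edge degree is at least $\frac{1}{2}\kappa\rho\cdot\mincut$ (note the factor $\kappa$, absent from your threshold $\frac{1}{2}\rho\cdot\mincut$), records for each type-1 vertex its identity, its color, \emph{and its cut-edge degree}, and then colors all remaining vertices by iterating two deterministic cases. Case 1: if a revealed black vertex $v$ has cut-edge degree at most $\left(1-\frac{1}{2}\kappa\right)\rho\cdot\mincut$, every uncolored neighbor $u$ must be black, since a white $u$ would have at least $\rho\cdot\mincut-\left(1-\frac{1}{2}\kappa\right)\rho\cdot\mincut=\frac{1}{2}\kappa\rho\cdot\mincut$ black common neighbors by (\ref{eq:GeometricOverlap}) and would thus be type-1, hence already revealed. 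Case 2: if $v$'s cut-edge degree exceeds $\left(1-\frac{1}{2}\kappa\right)\rho\cdot\mincut$, an uncolored (hence type-2) neighbor $u$ is resolved by majority among the revealed vertices of $\mathcal{V}(u)\cap\mathcal{V}(v)$ unless fewer than $\kappa\rho\cdot\mincut$ of them are revealed; in that event the uncolored common neighbor $w$ nearest to $v$ lies among the $\kappa\rho\cdot\mincut$ closest ones, so (\ref{eq:Closest}) makes $w$ adjacent to at least $\left(1-\frac{1}{2}\kappa\right)\rho\cdot\mincut-\frac{1}{2}\rho\cdot\mincut\geq\frac{1}{2}\kappa\rho\cdot\mincut$ white vertices, forcing $w$ to be white (a black $w$ would be type-1); combined with the observation that all uncolored vertices in $\mathcal{V}(u)$ must share $u$'s color, this pins down $u$ and its uncolored common neighbors. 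Iterating these cases over the connected graph completes the reconstruction, and the count $n^{4k/(\kappa\rho)}\cdot2^{4k/(\kappa\rho)}\cdot\left(2k\cdot\mincut\right)^{4k/(\kappa\rho)}<\left(2n\right)^{8k/(\kappa\rho)}$ gives (\ref{eq:tau-Geometric}).

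Beyond the omission itself, two of your concrete choices would have to change for any such argument to close. First, your heavy threshold $\frac{1}{2}\rho\cdot\mincut$ is incompatible with (\ref{eq:Closest}): with it, a non-heavy vertex can carry up to $\frac{1}{2}\rho\cdot\mincut$ crossing edges, so the majority test in Case 2 only fires after roughly $\rho\cdot\mincut$ common neighbors are revealed, and the nearest uncolored common neighbor is then only guaranteed to be among the $\rho\cdot\mincut$ closest --- outside the range $i\leq\kappa\rho\cdot\mincut$ where (\ref{eq:Closest}) says anything. The $\kappa$ in the type-1 threshold is precisely what ties the dichotomy to hypothesis (b); it is not bookkeeping slack. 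Second, your proposed code (anchor identities plus side labels) is too small: the decoder must know each anchor's cut-edge degree to decide whether Case 1 or Case 2 applies, which is why the paper also spends a factor $\left(2k\cdot\mincut\right)^{4k/(\kappa\rho)}$ on degree assignments. So the missing propagation lemma is not merely unstated; your setup as written would not support it.
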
\begin{proof}See Appendix \ref{sec:Proof-of-Lemma:alpha_random_graphs}.\end{proof}

We highlight a few concrete examples covered by this lemma. 
\begin{itemize}
\item The following instances of homogeneous geometric graphs are worth
mentioning. The first is a random geometric graph $\mathcal{G}_{n,r}$,
provided that $r^{2}>c\log n$ for some sufficiently large $c>0$.
The second is a generalized ring in which two vertices are connected
as long as they are at most a few vertices apart. For both cases,
$\kappa$ and $\rho$ are constants bounded away from zero, indicating
that
\[
\tau^{\mathrm{cut}}\lesssim\log n.
\]

\item Another situation concerns those expander graphs with good expansion
properties, including but not limited to Erd\H{o}s\textendash Rényi
graphs, random regular graphs, and small world graphs. Since the expansion
properties of these graphs obey $h_{\mathcal{G}}/\mincut=\Theta\left(1\right)$,
we conclude from Lemma \ref{lemma:alpha_random_graphs} that
\[
\tau^{\mathrm{cut}}\lesssim\log n.
\]

\end{itemize}
As a final remark, we are not aware of a graph for which $\tau^{\mathrm{cut}}$
exceeds the order of $\log n$. In all aforementioned examples, one
always has $\tau^{\mathrm{cut}}\lesssim\log n$. In-depth study about
the upper limit on $\tau^{\mathrm{cut}}$ might lead to further simplification
of our results, which we leave for future work.

\subsection{ML Decoding and Minimax Lower Bounds}

This section presents recovery conditions based on the minimum information
divergence and certain graphical metrics, which accommodate general
graph structures, channel characteristics, and input alphabets. We
defer detailed discussion of our results to Section \ref{sub:Interpretation-and-Discussion}. 

To begin with, the following theorem---whose proof can be found in
Appendix \ref{sec:Proof-of-Theorem-AchieveM-Deg}---characterizes
a regime where the ML decoder is guaranteed to work.

\begin{theorem}\label{thm:AchieveM-Deg}Consider any connected graph
$\mathcal{G}$. For any $\delta>0$ and any $0<\alpha<1$, the ML
rule $\psi_{\mathrm{ml}}$ achieves
\[
P_{\mathrm{e}}\left(\psi_{\mathrm{ml}}\right)\leq\frac{1}{\left(2n\right)^{\delta}-1},
\]
provided that 
\begin{equation}
\sup_{0<\alpha<1}\left\{ \left(1-\alpha\right)\mathsf{Hel}_{\alpha}^{\min}\right\} \cdot\mincut\text{ }\geq\text{ }8\tau^{\mathrm{cut}}+\left(\delta+8\right)\log\left(2n\right)+4\log M.\label{eq:Hellinger-ML-General}
\end{equation}
\end{theorem}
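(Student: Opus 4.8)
The plan is to analyze the ML decoder directly through a union bound over all hypotheses confusable with the ground truth $\boldsymbol{x}$, controlling each alternative's pairwise error by a Chernoff/Hellinger estimate and controlling the \emph{number} of alternatives by the cut structure of $\mathcal{G}$ through $\tau^{\mathrm{cut}}$. The first step is to translate the error event into the language of cuts. For a candidate $\boldsymbol{w}$, introduce the configuration $\boldsymbol{z}$ with $z_i := x_i^{-1}w_i$ (this reads $z_i = w_i - x_i$ in the abelian case); by the bijective property \eqref{eq:CrossCutAssumption-alpha}, an edge $(i,j)$ carries matching pairwise differences, $w_i - w_j = x_i - x_j$, if and only if $z_i = z_j$. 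Hence the set $\mathcal{D}(\boldsymbol{w})$ of disagreeing edges is exactly the multiway cut induced by the level sets $\mathcal{S}_1,\ldots,\mathcal{S}_r$ of $\boldsymbol{z}$, and $\boldsymbol{w}$ equals $\boldsymbol{x}$ up to a global offset precisely when $\boldsymbol{z}$ is constant. Every genuine alternative thus induces $r\geq 2$ level sets, so $e(\mathcal{D}) = \tfrac{1}{2}\sum_t e(\mathcal{S}_t,\mathcal{S}_t^{\mathrm{c}}) \geq \mincut$ since $\mathcal{G}$ is connected.

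Next I bound the probability that a fixed $\boldsymbol{w}$ beats $\boldsymbol{x}$ under ML. Likelihood contributions cancel on agreeing edges, so this happens only if $\prod_{(i,j)\in\mathcal{D}}\mathbb{P}_{w_i-w_j}(y_{ij})/\mathbb{P}_{x_i-x_j}(y_{ij}) \geq 1$. Applying Markov's inequality to the $\alpha$-th power of this ratio, using independence across edges and the identity $\int(\mathrm{d}P)^{\alpha}(\mathrm{d}Q)^{1-\alpha} = 1-(1-\alpha)\mathsf{Hel}_\alpha(P\,\|\,Q)$ together with $1-t\leq e^{-t}$, I obtain for any fixed $\alpha\in(0,1)$
\[
\mathbb{P}\{\boldsymbol{w}\text{ beats }\boldsymbol{x}\}\;\leq\;\big(1-(1-\alpha)\mathsf{Hel}_\alpha^{\min}\big)^{e(\mathcal{D})}\;\leq\;\exp\!\big(-(1-\alpha)\mathsf{Hel}_\alpha^{\min}\cdot e(\mathcal{D})\big),
\]
where the first inequality uses that $w_i-w_j\neq x_i-x_j$ on $\mathcal{D}$ forces $\mathsf{Hel}_\alpha(\mathbb{P}_{w_i-w_j}\,\|\,\mathbb{P}_{x_i-x_j})\geq\mathsf{Hel}_\alpha^{\min}$. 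Optimizing $\alpha$ uniformly over all alternatives gives the clean per-alternative bound $\exp(-\gamma\,e(\mathcal{D}))$ with $\gamma:=\sup_{0<\alpha<1}\{(1-\alpha)\mathsf{Hel}_\alpha^{\min}\}$.

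The crux is the third step: counting alternatives grouped by cut size $m=e(\mathcal{D})$ without an $e^{O(m^2)}$ blow-up. Writing $|\partial\mathcal{S}_t|=e(\mathcal{S}_t,\mathcal{S}_t^{\mathrm{c}})$, each part satisfies $\mathcal{S}_t\in\mathcal{N}(|\partial\mathcal{S}_t|)$, with $\sum_t|\partial\mathcal{S}_t|=2m$ and $r\leq 2m/\mincut$ since $|\partial\mathcal{S}_t|\geq\mincut$. The obstacle is that bounding each part crudely by $\mathcal{N}(2m)$ yields $|\mathcal{N}(2m)|^{r}=e^{\Theta(m^2/\mincut^2)\tau^{\mathrm{cut}}}$, which is hopelessly lossy. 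The fix is to charge every part to its own boundary and exploit that the \emph{exponents add rather than multiply}: using $|\mathcal{N}(k\cdot\mincut)|\leq e^{k\tau^{\mathrm{cut}}}$ and rounding each $|\partial\mathcal{S}_t|$ up to a multiple of $\mincut$,
\[
\prod_{t=1}^{r}\big|\mathcal{N}(|\partial\mathcal{S}_t|)\big|\;\leq\;\exp\!\Big(\tau^{\mathrm{cut}}\textstyle\sum_t\big(\tfrac{|\partial\mathcal{S}_t|}{\mincut}+1\big)\Big)\;\leq\;\exp\!\Big(\tfrac{4m}{\mincut}\,\tau^{\mathrm{cut}}\Big),
\]
which is only linear in $m$. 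Multiplying by the label freedom $M^{r}\leq M^{2m/\mincut}$ (one factor absorbed by the global offset) and the $e^{O(m/\mincut)}$ cost of specifying the composition and ordering of the parts, the number of alternatives with cut exactly $m$ is at most $\exp\!\big(\tfrac{m}{\mincut}(c_1\tau^{\mathrm{cut}}+c_2\log M+c_3)\big)$ for absolute constants $c_i$.

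Finally I assemble the union bound. Summing the per-alternative factor $e^{-\gamma m}$ against this count over $m\geq\mincut$ produces a geometric series
\[
P_{\mathrm{e}}(\psi_{\mathrm{ml}})\;\leq\;\sum_{k\geq 1}\exp\!\Big(-k\big[\gamma\cdot\mincut-c_1\tau^{\mathrm{cut}}-c_2\log M-c_3\big]\Big),
\]
where $k$ indexes the cut size in units of $\mincut$. Under hypothesis \eqref{eq:Hellinger-ML-General} the bracket is at least $\delta\log(2n)$ --- the stated coefficients $8$, $8$, and $4$ comfortably covering $c_1$, $c_3$, and $c_2$ after the rounding, ordering, and offset overheads (each of which may cost an extra factor of two) --- so each term is at most $(2n)^{-\delta k}$ and the series sums to $\frac{1}{(2n)^{\delta}-1}$, as claimed. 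The genuinely delicate point is the additive (not multiplicative) accumulation of the cut-set exponents across the $r$ level sets in Step three; this is precisely what keeps the coefficient of $\tau^{\mathrm{cut}}$ a universal constant and prevents the bound from degrading with the number of level sets.
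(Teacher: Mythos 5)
Your proposal is correct and follows essentially the same route as the paper: the paper likewise groups alternatives into classes $\mathcal{A}_k$ according to the number of disagreeing edges measured in units of $\mincut$, applies the same Chernoff/Hellinger per-alternative bound $\exp\big(-(1-\alpha)\mathsf{Hel}_\alpha^{\min}\cdot e(\mathcal{D})\big)$, proves your Step-three counting estimate as its Lemma~\ref{lemma:cut-size-alpha-beta} (same additive-exponent trick charging each level set to its own boundary, same enumeration of labels and boundary-size assignments), and sums the same geometric series with the same factor-of-two absorptions. The only slip --- calling the composition/ordering cost an \emph{absolute} constant $c_3$, when it genuinely scales like $\log(2n)$ per unit of $m/\mincut$ --- is immaterial, since your final accounting correctly charges it against the $8\log(2n)$ budget, exactly as the paper does.
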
\begin{remark}Theorem \ref{thm:AchieveM-Deg} continues
to hold if $\mathsf{Hel}_{\alpha}^{\min}$ is replaced by $D_{\alpha}^{\min}$.\end{remark}

The sufficient recovery condition given in Theorem \ref{thm:AchieveM-Deg}
is universal and holds for all graphs, and depends only on the min-cut
size and the cut-homogeneity exponent irrespective of other graphical
metrics. Similarly, the above sufficient condition extends to the
scenario with location-dependent output distributions, as stated below.

\begin{theorem}\label{theorem:general-Hellinger}The recovery condition
of Theorem \ref{thm:AchieveM-Deg} continues to hold under the transition
probabilities (\ref{eq:channel-model-general}), if $\mathsf{Hel}_{\alpha}^{\min}$
and $D_{\alpha}^{\min}$ are replaced by $\overline{\mathsf{Hel}}_{\alpha}^{\min}$
and $\overline{D}_{\alpha}^{\min}$, respectively, where $\overline{D}_{\alpha}^{\min}:=-\frac{1}{1-\alpha}\log\big(1-\left(1-\alpha\right)\overline{\mathsf{Hel}}_{\alpha}^{\min}\big)$.\end{theorem}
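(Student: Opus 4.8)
The plan is to re-examine the proof of Theorem~\ref{thm:AchieveM-Deg} (given in Appendix~\ref{sec:Proof-of-Theorem-AchieveM-Deg}) and verify that the \emph{only} place where the channel transition measures enter is a per-edge Chernoff/Bhattacharyya factor, which can be bounded uniformly by the location-dependent minimum divergence $\overline{\mathsf{Hel}}_{\alpha}^{\min}$ without disturbing any of the graph-theoretic bookkeeping. The ML error event is handled by a union bound over all alternative hypotheses $\tilde{\boldsymbol{x}}$, grouped according to the cut $(\mathcal{S},\mathcal{S}^{\mathrm{c}})$ induced by the disagreement set $\{i : x_i \neq \tilde{x}_i\}$. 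For a fixed pair $(\boldsymbol{x},\tilde{\boldsymbol{x}})$, the two-point likelihood-ratio test involves only the edges crossing this cut, since on every non-crossing edge the two hypotheses produce identical pairwise inputs and hence the likelihood ratios cancel.

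First I would reproduce the Chernoff bound for the two-point test. Under the true input $\boldsymbol{x}$, the probability that $\tilde{\boldsymbol{x}}$ attains a larger likelihood is at most the product, over the crossing edges $(i,j)\in\mathcal{E}(\mathcal{S},\mathcal{S}^{\mathrm{c}})$, of the affinities $\int (\mathbb{P}^{ij}_{k})^{\alpha}(\mathbb{P}^{ij}_{l})^{1-\alpha}$, where $l = x_i - x_j$ and $k = \tilde{x}_i - \tilde{x}_j \neq l$ are the two distinct inputs realized on that edge. By \eqref{eq:Hellinger-alpha}, each such affinity equals $1-(1-\alpha)\mathsf{Hel}_{\alpha}(\mathbb{P}^{ij}_{k}\,\|\,\mathbb{P}^{ij}_{l})$, and by the very definition \eqref{eq:Helmin_general} of $\overline{\mathsf{Hel}}_{\alpha}^{\min}$---which takes the minimum over both all input pairs $l\neq k$ and all edges $(i,j)\in\mathcal{E}$---this is at most $1-(1-\alpha)\overline{\mathsf{Hel}}_{\alpha}^{\min} = \exp(-(1-\alpha)\overline{D}_{\alpha}^{\min})$, the last identity being the edge-wise analogue of \eqref{eq:defn-Renyi-Hel}. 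The key observation is that a single uniform bound must hold simultaneously for every crossing edge, which may realize different input differences and, now, different channels $\mathbb{P}^{ij}$; this is exactly what the enlarged minimum in \eqref{eq:Helmin_general} supplies. Multiplying over the $e(\mathcal{S},\mathcal{S}^{\mathrm{c}}) \geq \mincut$ crossing edges yields a two-point error bound of $\exp(-(1-\alpha)\overline{\mathsf{Hel}}_{\alpha}^{\min}\cdot e(\mathcal{S},\mathcal{S}^{\mathrm{c}}))$, structurally identical to the bound in the original proof with $\mathsf{Hel}_{\alpha}^{\min}$ (resp.\ $D_{\alpha}^{\min}$) merely replaced by $\overline{\mathsf{Hel}}_{\alpha}^{\min}$ (resp.\ $\overline{D}_{\alpha}^{\min}$).

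From here everything downstream is divergence-agnostic and transfers verbatim. The remaining steps---counting the alternatives that induce a cut of a given size through $|\mathcal{N}(\cdot)|$, invoking the cut-homogeneity exponent $\tau^{\mathrm{cut}}$ of \eqref{eq:DefnExponent} to control the weighted sum $\sum_m |\mathcal{N}(m)|\exp(-(1-\alpha)\overline{\mathsf{Hel}}_{\alpha}^{\min}\,m)$, and taking the supremum over $\alpha\in(0,1)$---depend only on the graph and on the already-established per-edge exponent, not on how that exponent was derived. Consequently the same condition \eqref{eq:Hellinger-ML-General} with the barred metrics guarantees $P_{\mathrm{e}}(\psi_{\mathrm{ml}})\leq 1/((2n)^{\delta}-1)$.

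I do not anticipate a genuine obstacle here: the argument is a careful inspection confirming that the minimum is taken over the correct (enlarged) index set, so the factorized Chernoff bound still lower-bounds the exponent by $e(\mathcal{S},\mathcal{S}^{\mathrm{c}})\cdot\overline{\mathsf{Hel}}_{\alpha}^{\min}$. The one point warranting explicit checking is the stated correspondence $\overline{D}_{\alpha}^{\min}=-\frac{1}{1-\alpha}\log(1-(1-\alpha)\overline{\mathsf{Hel}}_{\alpha}^{\min})$: because $D_{\alpha}$ is a strictly increasing function of $\mathsf{Hel}_{\alpha}$ for fixed $\alpha\in(0,1)$ (see \eqref{eq:defn-Renyi-Hel}), minimizing $D_{\alpha}$ over the index set of \eqref{eq:Helmin_general} is equivalent to applying that monotone transform to $\overline{\mathsf{Hel}}_{\alpha}^{\min}$, so the barred $D$-version in the theorem statement is consistent; the inequality $\overline{D}_{\alpha}^{\min}\geq\overline{\mathsf{Hel}}_{\alpha}^{\min}$ inherited from \eqref{eq:Hel-D-KL} then lets one pass freely between the two formulations.
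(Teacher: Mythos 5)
Your proposal is correct and matches the paper's (implicit) argument: the paper proves this extension exactly by observing that in the proof of Theorem \ref{thm:AchieveM-Deg} the channel enters only through the per-edge factors $1-(1-\alpha)\mathsf{Hel}_{\alpha}(\cdot\|\cdot)$ in the Chernoff bound \eqref{eq:Bound-Hel}, each of which is uniformly bounded by $1-(1-\alpha)\overline{\mathsf{Hel}}_{\alpha}^{\min}=\exp\big(-(1-\alpha)\overline{D}_{\alpha}^{\min}\big)$ under the enlarged minimum \eqref{eq:Helmin_general}, while the cut-counting via Lemma \ref{lemma:cut-size-alpha-beta} and $\tau^{\mathrm{cut}}$ is unaffected. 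Your additional check that the barred R\'enyi metric is the monotone transform of the barred Hellinger metric, so that $\overline{D}_{\alpha}^{\min}\geq\overline{\mathsf{Hel}}_{\alpha}^{\min}$ lets one state the condition in either form, is exactly the final step of the paper's original proof as well.
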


Next, we present a fundamental lower limit on $\mathsf{KL}^{\min}$
that admits perfect information recovery, based on the same graphical
metrics in addition to the maximum vertex degree. 

\begin{theorem}[\textbf{KL Version}]\label{thm:ConverseM-Degree-KL}Fix
$\zeta\geq0$ and $0<\epsilon\leq1/2$. For any graph $\mathcal{G}$,
if the KL divergence satisfies
\begin{eqnarray}
\mathsf{KL}^{\min}\cdot\mincut & \leq & \max\left\{ \left(1-\epsilon\right)\tau^{\mathrm{cut}}-H(\epsilon),\text{ }\frac{\left(1-\epsilon\right)\log m^{\mathsf{kl}}\left(\zeta\right)-H(\epsilon)}{1+\zeta}\right\} \label{eq:Fano-KL-thm-homogeneous}
\end{eqnarray}
or
\begin{eqnarray}
\mathsf{KL}^{\min}\cdot d_{\max} & \leq & \text{ }\frac{\left(1-\epsilon\right)\left(\log n+\log m^{\mathsf{kl}}\left(\zeta\right)\right)-H(\epsilon)}{1+\zeta},\label{eq:Fano-KL-thm-homogeneous-1}
\end{eqnarray}
then the minimax probability of error exceeds $\inf_{\psi}P_{\mathsf{e}}\left(\psi\right)\geq\epsilon$.\end{theorem}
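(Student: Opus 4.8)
The plan is to establish each of the three right-hand sides as an independent Fano-type converse and then combine them. For a finite family $\mathcal{M}\subset\mathcal{X}^{n}$ of candidate inputs, pairwise inequivalent under the global-offset identification of (\ref{eq:Defn-dist}), the minimax error is bounded below by the Bayes error under a uniform prior on $\mathcal{M}$, which in turn is controlled by the standard Fano inequality $\log|\mathcal{M}|-I(\boldsymbol{X};\boldsymbol{Y})\le H(P_{\mathsf{e}})+P_{\mathsf{e}}\log|\mathcal{M}|$, i.e.\ $I(\boldsymbol{X};\boldsymbol{Y})\ge(1-P_{\mathsf{e}})\log|\mathcal{M}|-H(P_{\mathsf{e}})$. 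The map $p\mapsto(1-p)\log|\mathcal{M}|-H(p)$ is strictly decreasing on $[0,1/2]$ as soon as $|\mathcal{M}|\ge2$, so the hypothesis $0<\epsilon\le1/2$ lets me conclude: whenever $I(\boldsymbol{X};\boldsymbol{Y})\le(1-\epsilon)\log|\mathcal{M}|-H(\epsilon)$ one has $\inf_{\psi}P_{\mathsf{e}}(\psi)\ge\epsilon$. It therefore suffices, for three separate choices of $\mathcal{M}$, to upper bound $I(\boldsymbol{X};\boldsymbol{Y})$ and lower bound $\log|\mathcal{M}|$ so as to match the three stated right-hand sides.

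For the information bound I would use the variational identity $I(\boldsymbol{X};\boldsymbol{Y})=\min_{Q}\frac{1}{|\mathcal{M}|}\sum_{\boldsymbol{x}\in\mathcal{M}}\mathsf{KL}(\mathbb{P}_{\boldsymbol{y}\mid\boldsymbol{x}}\,\|\,Q)$ and evaluate it at one fixed product reference $Q$, so that $I(\boldsymbol{X};\boldsymbol{Y})\le\frac{1}{|\mathcal{M}|}\sum_{\boldsymbol{x}}\mathsf{KL}(\mathbb{P}_{\boldsymbol{y}\mid\boldsymbol{x}}\,\|\,Q)$. Because the channel (\ref{eq:channel-model}) is independent across edges, each summand factorizes as $\sum_{(i,j)\in\mathcal{E}}\mathsf{KL}(\mathbb{P}_{x_i-x_j}\,\|\,\mathbb{P}_{c_{ij}})$, where $c_{ij}$ is the reference input on edge $(i,j)$; this vanishes on every edge where $\boldsymbol{x}$ and the reference induce the same pairwise difference. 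Taking $Q$ to be the law of a common base configuration $\boldsymbol{x}^{(0)}\in\mathcal{M}$, the whole problem reduces to (i) counting, for each $\boldsymbol{x}$, the \emph{disagreement edges} relative to $\boldsymbol{x}^{(0)}$, and (ii) bounding the per-edge divergence, which by construction I keep at most $\mathsf{KL}^{\min}$, respectively $(1+\zeta)\mathsf{KL}^{\min}$.

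The three classes are as follows. For the $\tau^{\mathrm{cut}}$ branch I pick the integer $k$ attaining $\tau^{\mathrm{cut}}=\tau^{\mathrm{cut}}_{k}$ and set $\mathcal{M}=\{\boldsymbol{x}^{\mathcal{S}}:\mathcal{S}\in\mathcal{N}(k\cdot\mincut)\}$ (recall (\ref{eq:DefnNk-D})--(\ref{eq:DefnExponent})), where $\boldsymbol{x}^{\mathcal{S}}$ shifts the vertices of $\mathcal{S}$ by a fixed group element realizing $\mathsf{KL}^{\min}$ against the constant base; the disagreement edges are exactly the cut edges $\mathcal{E}(\mathcal{S},\mathcal{S}^{\mathrm c})$, at most $k\cdot\mincut$ of them, giving $I(\boldsymbol{X};\boldsymbol{Y})\le k\cdot\mincut\cdot\mathsf{KL}^{\min}$ and $\log|\mathcal{M}|=\log|\mathcal{N}(k\cdot\mincut)|=k\tau^{\mathrm{cut}}$, so that dividing the Fano requirement by $k$ together with $H(\epsilon)/k\le H(\epsilon)$ yields the first branch of the maximum in (\ref{eq:Fano-KL-thm-homogeneous}). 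For the second branch I fix one minimum cut and let its shift range over the $m^{\mathsf{kl}}(\zeta)$ values kept within $(1+\zeta)\mathsf{KL}^{\min}$ by (\ref{eq:m-KL}), so that $I(\boldsymbol{X};\boldsymbol{Y})\le(1+\zeta)\mathsf{KL}^{\min}\cdot\mincut$ and $\log|\mathcal{M}|=\log m^{\mathsf{kl}}(\zeta)$. Finally, for (\ref{eq:Fano-KL-thm-homogeneous-1}) I perturb a single vertex, letting both the vertex (over all $n$ of them) and its value (over the $m^{\mathsf{kl}}(\zeta)$ near-minimal choices) vary; each such hypothesis disagrees with the base only on the $\deg(v)\le d_{\max}$ incident edges, whence $I(\boldsymbol{X};\boldsymbol{Y})\le(1+\zeta)\mathsf{KL}^{\min}\cdot d_{\max}$ and $\log|\mathcal{M}|=\log n+\log m^{\mathsf{kl}}(\zeta)$.

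The main obstacle is the per-edge bookkeeping forced by the group/channel structure. Shifting a set (or a vertex) replaces a pairwise difference by a group-translate of it, and one must ensure that every disagreement edge --- in both of its orientations, since a cut edge may be traversed either way --- realizes an input pair whose KL divergence is at most $\mathsf{KL}^{\min}$, respectively $(1+\zeta)\mathsf{KL}^{\min}$. I would secure this by tailoring the constant base and the shift so that each affected edge sees exactly the divergence-minimizing (respectively near-minimizing) input pair, invoking the bijectivity (\ref{eq:CrossCutAssumption-alpha}) and the measurement symmetry to align the two orientations; realizing the maximizing reference $l$ in the definition (\ref{eq:m-KL}) of $m^{\mathsf{kl}}(\zeta)$ for the single-cut and single-vertex classes is precisely where these structural assumptions are needed. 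The remaining checks --- that $|\mathcal{M}|\ge2$, that distinct cuts or vertices give hypotheses inequivalent modulo the global offset, and that the stated right-hand sides are nonnegative so that Fano is non-vacuous --- are routine.
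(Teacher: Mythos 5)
Your proposal is correct and follows essentially the same route as the paper's proof: the same three hypothesis classes (cuts in $\mathcal{N}(k_{0}\cdot\mincut)$ for the $\tau^{\mathrm{cut}}$ branch, a fixed minimum cut shifted by the $m^{\mathsf{kl}}(\zeta)$ near-minimal values for the second branch, and single-vertex perturbations over all $n$ vertices and $m^{\mathsf{kl}}(\zeta)$ values for the $d_{\max}$ branch), combined with a Fano-type bound in terms of average KL divergence to the null. The only cosmetic difference is that you re-derive the needed Fano-type inequality from classical Fano plus the variational bound $I(\boldsymbol{X};\boldsymbol{Y})\leq\frac{1}{|\mathcal{M}|}\sum_{\boldsymbol{x}}\mathsf{KL}\big(\mathbb{P}_{\boldsymbol{y}\mid\boldsymbol{x}}\,\|\,\mathbb{P}_{\boldsymbol{y}\mid\boldsymbol{x}^{(0)}}\big)$, whereas the paper cites the ready-made inequality of Tsybakov; these are the same argument.
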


\begin{proof}See Appendix \ref{sec:Proof-of-Theorem-ConverseM-Degree}.\end{proof}

Notably, the conditions (\ref{eq:Fano-KL-thm-homogeneous}) and (\ref{eq:Fano-KL-thm-homogeneous-1})
do not imply each other. The first condition (\ref{eq:Fano-KL-thm-homogeneous})---which
characterizes the effects of cut-set distributions and alphabet size---is
dominant for inhomogeneous graphs where $\mincut\ll d_{\max}$ (e.g.~the
graph formed by connecting two $K_{n/2}$ with a single bridge as
described in Section \ref{sub:Key-Graphical-Metrics}). In comparison,
the other condition (\ref{eq:Fano-KL-thm-homogeneous-1}) becomes
tighter as $\frac{\mathrm{\mincut}}{d_{\max}}$ grows, which is particularly
useful when accounting for the family of homogeneous graphs where
$d_{\max}\asymp\mincut$. 

Finally, we complement the above KL version by another lower bound
developed directly based on the Hellinger divergence, although it
becomes loose for those inhomogeneous graphs obeying $\mincut\ll d_{\max}$.
This is particularly useful when investigating the scenario that demands
high-probability recovery (e.g. with success probability at least
$1-n^{-1}$). The proof can be found in Appendix \ref{sec:Proof-of-Theorem-Converse-ERG}.

\begin{theorem}[\textbf{Hellinger Version}]\label{thm:ConverseHellinger}Consider
any graph $\mathcal{G}$, any $\epsilon>0$, and $\alpha\leq\frac{1}{1+\epsilon}$.
Suppose that $d_{\max}\geq2\epsilon\alpha\log n$. If 
\begin{equation}
\left(1-\alpha\right)\mathsf{Hel}_{\alpha}^{\min}\cdot d_{\max}\leq\epsilon\alpha\log n-r_{\epsilon}\label{eq:Converse-Hel}
\end{equation}
for some residual\footnote{More precisely, $r_{\epsilon}:=\log2+\frac{2\left[\epsilon\alpha\log n-\log2\right]^{2}}{d_{\max}}$.}
$r_{\epsilon}$, then $\inf_{\psi}P_{\mathsf{e}}\left(\psi\right)\geq n^{-\epsilon}$.\end{theorem}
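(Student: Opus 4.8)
The plan is to establish this impossibility bound by reducing the full recovery problem to a single binary hypothesis test and then lower-bounding that test's Bayes error through a tilted change-of-measure (H\"older) inequality keyed to the order-$\alpha$ Hellinger affinity. First I would fix a pair $(k^{\star},l^{\star})$ with $k^{\star}\neq l^{\star}$ attaining the minimum, i.e.\ $\mathsf{Hel}_{\alpha}(\mathbb{P}_{l^{\star}}\,\|\,\mathbb{P}_{k^{\star}})=\mathsf{Hel}_{\alpha}^{\min}$, and pick any vertex $v$. Consider the two ground truths $\boldsymbol{x}^{(0)}$ and $\boldsymbol{x}^{(1)}$ that both set $x_{j}=0$ for every $j\neq v$ and set $x_{v}=k^{\star}$ resp.\ $x_{v}=l^{\star}$. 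On every edge not incident to $v$ the pairwise difference is $0$ under both configurations, so the corresponding outputs are i.i.d.\ $\mathbb{P}_{0}$ and hence ancillary; only the $\deg(v)$ edges incident to $v$ carry information, contributing $\mathbb{P}_{k^{\star}}$ versus $\mathbb{P}_{l^{\star}}$. Because $\boldsymbol{x}^{(0)}$ and $\boldsymbol{x}^{(1)}$ agree on $n-1\geq1$ coordinates, they are not global shifts of one another, so no output can be at zero $\mathrm{dist}$ from both; exact recovery therefore forces the decoder to distinguish the two induced product measures, and $\inf_{\psi}P_{\mathsf{e}}(\psi)$ is at least the Bayes error of the binary test between $\mathbb{P}_{k^{\star}}^{\otimes\deg(v)}$ and $\mathbb{P}_{l^{\star}}^{\otimes\deg(v)}$, a test on at most $d_{\max}$ i.i.d.\ samples.

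Next I would lower-bound this binary error. Writing $\beta:=\int(\mathrm{d}\mathbb{P}_{k^{\star}})^{\alpha}(\mathrm{d}\mathbb{P}_{l^{\star}})^{1-\alpha}=1-(1-\alpha)\mathsf{Hel}_{\alpha}^{\min}$ for the single-letter affinity, I would introduce on each informative coordinate the tilted measure proportional to $(\mathrm{d}\mathbb{P}_{k^{\star}})^{\alpha}(\mathrm{d}\mathbb{P}_{l^{\star}})^{1-\alpha}$, whose product has total mass $\beta^{\deg(v)}$. Applying H\"older's inequality to the mass this tilted measure places on the acceptance region $A$ of an arbitrary test and on its complement yields $\beta^{\deg(v)}\leq e_{1}^{\alpha}+e_{2}^{1-\alpha}$ for the two error probabilities $e_{1},e_{2}$, whence $\max(e_{1},e_{2})\geq(\beta^{\deg(v)}/2)^{1/\alpha}$ once $\alpha\leq\frac{1}{1+\epsilon}$ is used to ensure the $1/\alpha$-branch is the binding one. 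Since $\deg(v)\leq d_{\max}$ and $\beta\leq1$, this gives $\inf_{\psi}P_{\mathsf{e}}(\psi)\geq2^{-1/\alpha}\beta^{\,d_{\max}/\alpha}$.

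It then remains to translate this into the stated threshold. Taking logarithms and expanding $\log\beta=\log(1-(1-\alpha)\mathsf{Hel}_{\alpha}^{\min})\geq-(1-\alpha)\mathsf{Hel}_{\alpha}^{\min}-\big((1-\alpha)\mathsf{Hel}_{\alpha}^{\min}\big)^{2}$---valid because $d_{\max}\geq2\epsilon\alpha\log n$ together with the hypothesis forces $(1-\alpha)\mathsf{Hel}_{\alpha}^{\min}\leq\tfrac12$---I would obtain $\log\big(\inf_{\psi}P_{\mathsf{e}}\big)\geq-\tfrac{1}{\alpha}\big[\log2+(1-\alpha)\mathsf{Hel}_{\alpha}^{\min}d_{\max}+d_{\max}\big((1-\alpha)\mathsf{Hel}_{\alpha}^{\min}\big)^{2}\big]$. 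Bounding the quadratic term by $[\epsilon\alpha\log n-\log2]^{2}/d_{\max}$ through the hypothesis reproduces the residual $r_{\epsilon}=\log2+\frac{2[\epsilon\alpha\log n-\log2]^{2}}{d_{\max}}$, so that $(1-\alpha)\mathsf{Hel}_{\alpha}^{\min}d_{\max}\leq\epsilon\alpha\log n-r_{\epsilon}$ yields $\log\big(\inf_{\psi}P_{\mathsf{e}}\big)\geq-\epsilon\log n$, i.e.\ $\inf_{\psi}P_{\mathsf{e}}(\psi)\geq n^{-\epsilon}$.

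The main obstacle is the second step: extracting the sharp exponent $1/\alpha$---rather than the crude $\alpha=\tfrac12$ total-variation/Hellinger estimate---and the precise residual. This requires the tilted change-of-measure argument to be carried out carefully, identifying which of the two error types is governed by the $1/\alpha$-branch and keeping the log-affinity expansion to second order so that the quadratic correction matches $r_{\epsilon}$ verbatim; the possibly unbounded log-likelihood ratio must be handled through the affinity rather than pointwise. The tunable parameter $\alpha$ is essential here: it is what later permits optimizing the bound and aligning it, on homogeneous graphs with $d_{\max}\asymp\mincut$, with the $\mincut$-based achievability guarantee of Theorem \ref{thm:AchieveM-Deg}.
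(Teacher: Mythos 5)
Your reduction to a two-point test and your final residual algebra are fine, but the pivotal step --- converting the H\"older bound $\beta^{m}\leq e_{1}^{\alpha}+e_{2}^{1-\alpha}$ (with $m=\deg(v)$) into $\max(e_{1},e_{2})\geq(\beta^{m}/2)^{1/\alpha}$ --- is invalid on part of the range of $\alpha$ that the theorem covers. From that inequality one can only conclude that either $e_{1}\geq(\beta^{m}/2)^{1/\alpha}$ or $e_{2}\geq(\beta^{m}/2)^{1/(1-\alpha)}$, hence
\[
\max(e_{1},e_{2})\geq\left(\beta^{m}/2\right)^{1/\min\{\alpha,1-\alpha\}},
\]
so the exponent $1/\alpha$ is available only when $\alpha\leq1/2$. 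The condition $\alpha\leq\frac{1}{1+\epsilon}$ does not ensure this: for every $\epsilon<1$ the theorem permits $\alpha\in(1/2,\tfrac{1}{1+\epsilon}]$, where your bound degrades to the exponent $1/(1-\alpha)$ and only yields the weaker threshold $\epsilon(1-\alpha)\log n$ in place of $\epsilon\alpha\log n$. Worse, no patch of the two-point argument can work there: the optimal binary test between $\mathbb{P}_{k^{\star}}^{\otimes m}$ and $\mathbb{P}_{l^{\star}}^{\otimes m}$ errs at the Chernoff rate $\exp(-mC)$ with $C=\max_{s\in(0,1)}\{-\log\beta_{s}\}$. Taking two nearly indistinguishable, symmetric measures (so that $1-\beta_{s}\approx\frac{s(1-s)}{2}\chi^{2}$ for all $s$, hence $C\approx\chi^{2}/8$ while $(1-\alpha)\mathsf{Hel}_{\alpha}^{\min}\approx\frac{\alpha(1-\alpha)}{2}\chi^{2}$), the hypothesis (\ref{eq:Converse-Hel}) is compatible with $mC\approx\frac{\epsilon\log n}{4(1-\alpha)}\gg\epsilon\log n$ once $\alpha>3/4$; your two chosen hypotheses are then distinguishable with error far below $n^{-\epsilon}$, so no lower bound based on those two alone can prove the statement in that regime.

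What makes the theorem true on the whole range $\alpha\leq\frac{1}{1+\epsilon}$ is the multi-hypothesis structure that your reduction discards. The paper keeps the ensemble $\mathcal{H}_{1}$ of all $n$ single-coordinate alternatives and applies the $f$-divergence minimax bound of \cite{guntuboyina2011lower} with $f(x)=\frac{1}{1-\alpha}\left(1-x^{\alpha}\right)$, which after tensorization gives
\[
P_{\mathrm{e}}^{\alpha}\text{ }\geq\text{ }\big(1-(1-\alpha)\mathsf{Hel}_{\alpha}^{\min}\big)^{d_{\max}}-\left|\mathcal{H}_{1}\right|^{-(1-\alpha)},
\]
and the condition $\alpha\leq\frac{1}{1+\epsilon}$ enters solely to guarantee $|\mathcal{H}_{1}|^{-(1-\alpha)}=n^{-(1-\alpha)}\leq n^{-\epsilon\alpha}$, so that the hypothesis-count correction is absorbed into the factor $2$ appearing in $r_{\epsilon}$. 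In other words, you misidentified the role of $\alpha\leq\frac{1}{1+\epsilon}$: it has nothing to do with which branch of the two-point bound ``binds'' (that is governed by whether $\alpha\leq1/2$), and the error boost it actually encodes --- coming from $n$ simultaneous minimally-separated alternatives --- is exactly the ingredient missing from your argument. Restricted to $\alpha\leq1/2$ (in particular whenever $\epsilon\geq1$, which covers the instantiation $\alpha=1/2$, $\epsilon=1$ used elsewhere in the paper), your proof is correct and its residual bookkeeping matches the paper's; as stated, however, it does not establish the theorem.
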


\begin{remark}For any fixed $\epsilon>0$, one can optimize (\ref{eq:Converse-Hel})
over all $0<\alpha\leq\frac{1}{1+\epsilon}$ to derive a tighter condition.
\end{remark}

\subsection{Interpretation and Discussion\label{sub:Interpretation-and-Discussion}}

We now discuss the messages conveyed by the aforementioned results,
for which we emphasize a broad family of homogeneous graphs before
turning to the most general graphs. In what follows, our discussion
assumes $\frac{\mathrm{d}\mathbb{P}_{l}}{\mathrm{d}\mathbb{P}_{j}}=\mathcal{O}(1)$
for all $0\leq l,j<M$, in which case one has (by invoking Fact \ref{Fact:KL-Hellinger})
\begin{equation}
\mathsf{KL}^{\min}\text{ }\asymp\text{ }\mathsf{Hel}_{\frac{1}{2}}^{\min}.
\end{equation}
Operating upon such assumptions enables us to significantly simplify
the presentation, while still capturing the regime that is statistically
the most challenging (compared to its complement regime where $\frac{\mathrm{d}\mathbb{P}_{l}}{\mathrm{d}\mathbb{P}_{j}}\gg1$).

\subsubsection{Homogeneous Graphs}

Our recovery conditions are most useful when applied to homogeneous
graphs. Formally speaking, we term $\mathcal{G}$ a \emph{homogeneous
graph} if it satisfies
\begin{equation}
\mincut\text{ }\asymp\text{ }d_{\mathrm{avg}}\text{ }\asymp\text{ }d_{\max},\label{eq:homogeneous-graphs}
\end{equation}
which subsumes as special cases the widely adopted Erd\H{o}s\textendash Rényi
graphs, random geometric graphs, small world graphs, rings, grids,
and many other expander graphs. A few implications are in order. 
\begin{enumerate}
\item For all homogeneous graphs, one has 
\begin{eqnarray*}
\inf_{\psi}P_{\mathsf{e}}\left(\psi\right) & \overset{n\rightarrow\infty}{\longrightarrow} & 0\quad\quad\quad\text{if }\mathsf{Hel}_{\frac{1}{2}}^{\min}\cdot d_{\mathrm{avg}}\gtrsim\tau^{\mathrm{cut}}+\log n+\log M,\\
\inf_{\psi}P_{\mathsf{e}}\left(\psi\right) & \overset{n\rightarrow\infty}{\centernot\longrightarrow} & 0\quad\quad\quad\text{if }\mathsf{Hel}_{\frac{1}{2}}^{\min}\cdot d_{\mathrm{avg}}\lesssim\tau^{\mathrm{cut}}+\log n+\log m^{\mathsf{kl}}.
\end{eqnarray*}
In general, these results are within a multiplicative $\mathsf{gap}$
\[
\mathsf{gap}\lesssim\frac{\tau^{\mathrm{cut}}+\log n+\log M}{\tau^{\mathrm{cut}}+\log n+\log m^{\mathsf{kl}}}
\]
from optimal, which are orderwise tight when either $M\lesssim\mathrm{poly}\left(n\right)$
or $\log m^{\mathsf{kl}}\asymp\log M$. In particular, as long as
the alphabet size is not super-polynomial in $n$, we arrive at the
fundamental recovery condition for this class of graphs: 
\begin{equation}
\mathsf{Hel}_{\frac{1}{2}}^{\min}\cdot d_{\mathrm{avg}}\text{ }\gtrsim\text{ }\log n+\tau^{\mathrm{cut}}.\label{eq:condition-homogeneous}
\end{equation}

\item In comparison to the recovery guarantee developed for the Erd\H{o}s\textendash Rényi
model, the condition (\ref{eq:condition-homogeneous}) includes one
extra correction term $\tau^{\mathrm{cut}}$ concerning the cut-set
distribution. To provide some intuition about $\tau^{\mathrm{cut}}$,
suppose that the ground truth is $\boldsymbol{x}=\boldsymbol{0}$
and consider an alternative hypothesis $\tilde{\boldsymbol{x}}$ whose
non-zero entries are all identical. If we denote by $\mathcal{S}$
the vertex set corresponding to the support of $\tilde{\boldsymbol{x}}$,
then it is straightforward to see that all measurements that can help
distinguish $\boldsymbol{x}$ and $\tilde{\boldsymbol{x}}$ reside
in the cut set $\mathcal{E}(\mathcal{S},\mathcal{S}^{\mathrm{c}})$.
By definition, $\tau_{k}^{\mathrm{cut}}$ determines the total number
of distinct cuts whose size is within some fixed range. Since $\tau_{k}^{\mathrm{cut}}$
is defined in a logarithmic and normalized manner, this in turn specifies
how many bits are needed to represent all these cuts and, hence, all
hypotheses associated with them. As a consequence, $\tau^{\mathrm{cut}}$
presents another information-theoretic requirement. 
\item While our results fall short of a general upper bound on $\tau^{\mathrm{cut}}$,
we note that $\tau^{\mathrm{cut}}\lesssim\log n$ holds for a broad
class of interesting models studied in the literature (and in fact
all models that we are aware of), including but not limited to various
homogeneous geometric graphs and expander graphs (cf.~Lemma \ref{lemma:alpha_random_graphs}).
As a consequence, the recovery condition (\ref{eq:condition-homogeneous})
for these graphs further simplifies to 
\begin{equation}
\mathsf{Hel}_{\frac{1}{2}}^{\min}\cdot d_{\mathrm{avg}}\text{ }\gtrsim\text{ }\log n,\label{eq:Condition-various-homogeneous}
\end{equation}
which coincides with the one under the special Erd\H{o}s\textendash Rényi
model. Following the intuition given in Section \ref{sub:ML-ERG},
one must rely on around $d_{\mathrm{avg}}$ measurements to distinguish
two minimally separated hypotheses---i.e.~those that differ by a
single component---and hence the information bottleneck constitutes
around $\mathsf{Hel}_{\frac{1}{2}}^{\min}\cdot d_{\mathrm{avg}}$
bits, which needs to be at least $\log n$ bits in order to encode
$n$ minimally apart hypotheses. 
\item The condition (\ref{eq:Condition-various-homogeneous}) in turn leads
to an interesting observation: for a variety of homogeneous graphs,
the information-theoretic limits for graph-based decoding are determined
solely by the \emph{edge sparsity}, as opposed to the performance
guarantees for many tractable algorithms (e.g.~spectral methods or
semidefinite programming) whose success typically rely on strong \emph{second-order}
expansion properties. 
\end{enumerate}
Finally, by combining Theorem \ref{thm:AchieveM-Deg} and Theorem
\ref{thm:ConverseHellinger} (with $\epsilon=1$), we arrive at the
following criterion concerning ``high-probability'' recovery: for
various homogeneous graphs that obey $\tau^{\mathrm{cut}}\lesssim\log n$,
the probability of error $P_{\mathrm{e}}\left(\psi\right)\leq n^{-1}$
is possible if and only if 
\begin{equation}
\mathsf{Hel}_{\frac{1}{2}}^{\min}\cdot d_{\mathrm{avg}}\text{ }\gtrsim\text{ }\log n.
\end{equation}
In contrast to the preceding discussion, this statement holds regardless
of how $\frac{\mathrm{d}\mathbb{P}_{l}}{\mathrm{d}\mathbb{P}_{j}}$
scales. 

\subsubsection{General Graphs}

We now move on to discussing the results in their full generality.
One distinguishing feature from the family of homogeneous graphs is
that the recovery boundary is dictated by the size of the minimum
cut rather than the graph edge sparsity. For the convenience of the
reader, we summarize all key results in Table \ref{tab:Summary-of-results}.

\begin{table}
\centering%
\begin{tabular}{c|c}
\hline 
Measurement graphs & Fundamental recovery conditions\tabularnewline
\hline 
Erd\H{o}s\textendash Rényi graphs & $\mathsf{Hel}_{1/2}^{\min}\cdot d_{\mathrm{avg}}\gtrsim\log n$\tabularnewline
\hline 
homogeneous geometric graphs, expander graphs & $\mathsf{Hel}_{1/2}^{\min}\cdot d_{\mathrm{avg}}\gtrsim\log n$\tabularnewline
\hline 
homogeneous graphs & $\mathsf{Hel}_{1/2}^{\min}\cdot d_{\mathrm{avg}}\gtrsim\log n+\tau^{\mathrm{cut}}$\tabularnewline
\hline 
general graphs & $\mathsf{Hel}_{1/2}^{\min}\cdot\mincut\gtrsim g\left(n\right)+\tau^{\mathrm{cut}}$$\text{ }$$\text{ }$
$(1\lesssim g\left(n\right)\lesssim\log n)$\tabularnewline
\hline 
\end{tabular}\caption{\label{tab:Summary-of-results}Summary of key results for all graph
models ($M\lesssim\mathrm{poly}\left(n\right)$).}
\end{table}

\begin{enumerate}
\item \textbf{Tightness under general graphs}. The recovery conditions presented
in Theorems \ref{thm:AchieveM-Deg} and \ref{thm:ConverseM-Degree-KL}
can be summarized as follows 
\begin{eqnarray*}
\inf_{\psi}P_{\mathsf{e}}\left(\psi\right) & \overset{n\rightarrow\infty}{\longrightarrow} & 0\quad\quad\quad\text{if }\mathsf{Hel}_{\frac{1}{2}}^{\min}\cdot\mincut\gtrsim\tau^{\mathrm{cut}}+\log M+\log n,\\
\inf_{\psi}P_{\mathsf{e}}\left(\psi\right) & \overset{n\rightarrow\infty}{\centernot\longrightarrow} & 0\quad\quad\quad\text{if }\mathsf{Hel}_{\frac{1}{2}}^{\min}\cdot\mincut\lesssim\tau^{\mathrm{cut}}+\log m^{\mathsf{kl}}+\frac{\mincut}{d_{\max}}\log n.
\end{eqnarray*}
These are within a multiplicative $\mathsf{gap}$ from optimal, satisfying
that
\[
\mathsf{gap}\lesssim\frac{\tau^{\mathrm{cut}}+\log M+\log n}{\tau^{\mathrm{cut}}+\log m^{\mathsf{kl}}+\frac{\mincut}{d_{\max}}\log n}.
\]
Recognizing that $\tau^{\mathrm{cut}}\gtrsim1$, we see that the derived
bounds are orderwise optimal when $\log m^{\mathsf{kl}}\asymp\log M\asymp\log n$
(e.g.~in the outlier model presented in Section \ref{sec:Outlier}
under large alphabet). Even for the loosest case, the gap is at most
logarithmic (i.e.~$\mathcal{O}\left(\log n+\log M\right)$). 
\item \textbf{Information bottleneck}. In contrast to (\ref{eq:condition-homogeneous})
and (\ref{eq:Condition-various-homogeneous}), the amount of information
one has available to differentiate two minimally separated hypotheses
is approximately given by $\mathsf{Hel}_{\frac{1}{2}}^{\min}\cdot\mincut$
instead of $\mathsf{Hel}_{\frac{1}{2}}^{\min}\cdot d_{\mathrm{avg}}$.
This makes sense since the two hypotheses that are most difficult
to differentiate are no longer those that differ by one component.
Instead, the most challenging task lies in linking the variables across
the minimum cut, which can convey at most $\mathsf{Hel}_{\frac{1}{2}}^{\min}\cdot\mincut$
bits of information, forming the most fragile component for simultaneous
recovery.
\item \textbf{A unified non-asymptotic framework.} Our framework can accommodate
a variety of practical scenarios that respect the high-dimensional
regime: the alphabet size might be growing with $n$ while the channel
divergence metrics might be decaying. Furthermore, our problem falls
under the category of multi-hypothesis testing in the presence of
exponentially many hypotheses, where each hypothesis is \textit{not}
necessarily formed by i.i.d. sequences. Under such a setting, the
conventional Sanov bound \cite{csiszar1984sanov} based on the Chernoff
information measure \cite{dembo1998large} becomes unwieldy. In contrast,
our results build upon alternative probability divergence measures
(particularly the Hellinger / Rényi divergence). This results in a
simple unified framework that enables non-asymptotic characterization
of the minimax limits (modulo some constant factor) simultaneously
for most settings.
\end{enumerate}
In general, the current approach is unable to close the worst-case
gap $\mathcal{O}\left(\log n+\log M\right)$, which could be large
when either $n$ or $M$ are exceedingly large. In order to improve
the recovery conditions, one alternative is to derive a tighter lower
bound on the graphical metric $\tau^{\mathrm{cut}}$. For instance,
our bounds become orderwise tight whenever $\tau^{\mathrm{cut}}\gtrsim\log n$,
which arise in various graphs beyond the family of homogeneous graphs.
We leave this for future investigation. In addition, our general lower
bounds are developed based on Fano's inequality, since Fano's inequality
allows us to accommodate a set of input hypotheses that have significant
overlaps. Unfortunately, Fano's inequality typically relies on the
KL divergence between input hypotheses, which is in general not capable
of capturing the right error exponent for hypothesis testing. It would
be interesting to develop a variant of the Fano-type inequality based
directly on the Chernoff information measures.

\section{Consequences for Specific Applications \label{sec:Applications}}

In this section, we apply our general theory to a few concrete examples
that have been studied in prior literature. As will be seen, our general
theorems lead to order-wise tight characterization for all these canonical
examples.

\subsection{Stochastic Block Model\label{sub:SBM}}

We start by analyzing the stochastic block model (SBM), which is a
generative way to model community structure. In the standard SBM,
nodes are partitioned into two disjoint clusters (so one can assign
labels $x_{i}\in\left\{ 0,1\right\} $ for each node). Each pair of
nodes is connected with probability $\frac{\alpha\log n}{n}$ or $\frac{\beta\log n}{n}$
depending on whether they fall within the same cluster or not. The
goal is to infer the underlying clusters that produce the network.
Of particular interest is exact recovery of the entire clusters, which
has received considerable attention; see \cite{mossel2012stochastic,abbe2014exact,jalali2011clustering,mossel2014consistency,chen2014statistical,hajek2014achieving,montanari2015semidefinite,bandeira2015random,lei2014generic,yun2014accurate,chin2015stochastic,gao2015achieving}
for a highly incomplete list of references.

We focus on the regime where $\alpha,\beta=o\left(n/\log n\right)$
and $\alpha>\beta$, which subsumes all but the densest community
structures. Treating the SBM as a graphical channel over a complete
measurement graph (i.e.~$p_{\mathrm{obs}}=1$) with outputs being
either 0 or 1---which encodes whether two nodes belong to the same
cluster or not, we see that (cf.~Definition \ref{eq:channel-model})
\[
\mathbb{P}_{0}=\mathsf{Bern}\left(\frac{\alpha\log n}{n}\right),\quad\text{and}\quad\mathbb{P}_{1}=\mathsf{Bern}\left(\frac{\beta\log n}{n}\right).
\]
This allows us to compute
\begin{eqnarray*}
\mathsf{Hel}_{\frac{1}{2}}^{\min} & = & \left(\sqrt{\frac{\alpha\log n}{n}}-\sqrt{\frac{\beta\log n}{n}}\right)^{2}+\left(\sqrt{1-\frac{\alpha\log n}{n}}-\sqrt{1-\frac{\beta\log n}{n}}\right)^{2}\\
 & = & \left(1+o\left(1\right)\right)(\sqrt{\alpha}-\sqrt{\beta})^{2}\frac{\log n}{n}.
\end{eqnarray*}
In addition, the relation between KL divergence and $\chi^{2}$ divergence
(e.g.~\cite[Equation (7)]{van2014renyi}) suggests that
\begin{eqnarray}
\mathsf{KL}^{\min} & \leq & \mathsf{KL}\left(\frac{\beta\log n}{n}\hspace{0.2em}\|\hspace{0.2em}\frac{\alpha\log n}{n}\right)\leq\chi^{2}\left(\frac{\beta\log n}{n}\hspace{0.2em}\|\hspace{0.2em}\frac{\alpha\log n}{n}\right)\nonumber \\
 & \overset{(\text{a})}{=} & \frac{\left(\frac{\beta\log n}{n}-\frac{\alpha\log n}{n}\right)^{2}}{\frac{\alpha\log n}{n}\left(1-\frac{\alpha\log n}{n}\right)}=\left(1+o(1)\right)\frac{\left(\alpha-\beta\right)^{2}}{\alpha}\frac{\log n}{n},\label{eq:KL-Bern}
\end{eqnarray}
where (a) follows from the identity $\chi^{2}\left(p\hspace{0.2em}\|\hspace{0.2em}q\right)=\frac{\left(p-q\right)^{2}}{q}+\frac{\left(p-q\right)^{2}}{1-q}=\frac{\left(p-q\right)^{2}}{q\left(1-q\right)}.$
With these two estimates in place, Theorem \ref{thm:ML-Hellinger-ERG}
and Corollary \ref{corollary:ConverseHellinger-ERG} immediately give
\begin{eqnarray}
\inf_{\psi}P_{\mathrm{e}}\left(\psi\right) & \overset{n\rightarrow\infty}{\longrightarrow} & 0\quad\text{if }(\sqrt{\alpha}-\sqrt{\beta})^{2}\geq2\left(1+o\left(1\right)\right),\\
\inf_{\psi}P_{\mathrm{e}}\left(\psi\right) & \overset{n\rightarrow\infty}{\centernot\longrightarrow} & 0\quad\text{if }\left(\alpha-\beta\right)^{2}\leq\left(1-o\left(1\right)\right)\alpha.\label{eq:KL_SBM}
\end{eqnarray}

In fact, precise phase transition for exact cluster recovery has only
been determined last year \cite{mossel2014consistency,abbe2014exact}.
There results assert that
\begin{eqnarray}
\inf_{\psi}P_{\mathrm{e}}\left(\psi\right) & \overset{n\rightarrow\infty}{\longrightarrow} & 0\quad\text{if }(\sqrt{\alpha}-\sqrt{\beta})^{2}>2,\\
\inf_{\psi}P_{\mathrm{e}}\left(\psi\right) & \overset{n\rightarrow\infty}{\centernot\longrightarrow} & 0\quad\text{if }(\sqrt{\alpha}-\sqrt{\beta})^{2}<2,\label{eq:necessary-SBM}
\end{eqnarray}
justifying that the sufficient condition we develop is precise. When
it comes to the necessary condition, one can verify that the condition
(\ref{eq:necessary-SBM}) is more stringent than\footnote{To see this, observe that $\left(\sqrt{\alpha}-\sqrt{\beta}\right)^{2}<2$
is identical to $\left(\alpha-\beta\right)^{2}<2\left(\sqrt{\alpha}+\sqrt{\beta}\right)^{2}$,
which is more stringent than $\left(\alpha-\beta\right)^{2}<4\left(\alpha+\beta\right)$
due to the elementary inequality $(a+b)^{2}\leq2\left(a^{2}+b^{2}\right)$.} $\left(\alpha-\beta\right)^{2}<4\left(\alpha+\beta\right)$. In comparison,
the boundary of our condition (\ref{eq:KL_SBM}) is sandwiched between
the curves $\left(\alpha-\beta\right)^{2}\leq\frac{1}{2}\left(\alpha+\beta\right)$
and $\left(\alpha-\beta\right)^{2}\leq\alpha+\beta$. These taken
collectively indicate that our theory is tight up to a small constant
factor. 

Several remarks are in order. To begin with, our results accommodate
all values of $\alpha,\beta$ up to $o\left(n/\log n\right)$, which
is broader than \cite{abbe2014exact} that concentrates on the sparsest
possible regime (i.e.~$\alpha,\beta\asymp1$). Leaving out this technical
matter, a more interesting observation is that the achievability bound
we develop for the ML rule matches the fundamental recovery limit
in a precise manner, which seems to imply that the squared Hellinger
distance is the right metric that dictates the recovery limits for
the SBMs.

When finishing up this paper, we became aware of a very recent work
\cite{abbe2015community} that characterizes the fundamental limits
for the generalized SBM, that is, the model where $n$ nodes are partitioned
into multiple clusters. Extending our framework so as to accommodate
the SBM in its full generality is a topic of future work. 

\subsection{Outlier Model \label{sec:Outlier}}

We now turn to another model called the \emph{outlier model}, which
subsumes as special cases several applications including alignment,
synchronization, and joint matching (e.g. \cite{wang2012exact,huang2013consistent,chen2013matching}). 

Suppose that the measurements $y_{ij}$'s are independently corrupted
following a distribution
\begin{equation}
y_{ij}=\begin{cases}
x_{i}-x_{j},\quad & \text{with probability }p_{\mathrm{true}},\\
\mathsf{Unif}{}_{M},\quad & \text{else},
\end{cases}\label{eq:CorruptionModel}
\end{equation}
where $\mathsf{Unif}{}_{M}$ is the uniform distribution over $\left\{ 0,\cdots,M-1\right\} $,
$p_{\mathrm{true}}$ stands for the non-corruption rate, and ``$-$''
is some general subtraction operation defined in Section \ref{sec:Problem-Formulation}.
In words, a fraction $1-p_{\mathrm{true}}$ of measurements act as
\emph{random outliers} and contain no useful information. Note that
under this outlier model, one has
\[
m^{\mathsf{kl}}\left(\epsilon\right)\equiv M-1,\quad\forall\epsilon\geq0.
\]

The following corollary---an immediate consequence of Theorem \ref{thm:ML-Hellinger-ERG}
and Corollary \ref{corollary:ConverseHellinger-ERG}---presents concrete
recovery limits for the outlier model. For ease of presentation, we
restrict our discussion to the Erd\H{o}s\textendash Rényi model, but
remark that all results extend to homogeneous geometric graphs and
other expander graphs (up to some constant factors) if one replaces
$p_{\mathrm{obs}}n$ with the average vertex degree.

\begin{corollary}\label{corollary-Outlier}Fix $\epsilon>0$. Consider
the outlier model (\ref{eq:CorruptionModel}), and assume $\mathcal{G}\sim\mathcal{G}_{n,p_{\mathrm{obs}}}$with
$p_{\mathrm{obs}}>\frac{c_{1}\log n}{n}$ for some sufficiently large
$c_{1}>0$. Then, one has 
\begin{eqnarray}
\inf_{\psi}P_{\mathrm{e}}\left(\psi\right) & \overset{n\rightarrow\infty}{\longrightarrow} & 0\quad\text{if }\frac{1}{M}\left(\sqrt{1-p_{\mathrm{true}}+Mp_{\mathrm{true}}}-\sqrt{1-p_{\mathrm{true}}}\right)^{2}\geq\left(1+\epsilon\right)\frac{\log n+2\log M}{p_{\mathrm{obs}}n},\label{eq:Hel-outlier-achievable-InfoLimited}\\
\inf_{\psi}P_{\mathrm{e}}\left(\psi\right) & \overset{n\rightarrow\infty}{\centernot\longrightarrow} & 0\quad\text{if }p_{\mathrm{true}}\leq\max\left\{ \frac{\left(1-\epsilon\right)\left(\log n+\log M\right)}{p_{\mathrm{obs}}n\log\left(1+\frac{p_{\mathrm{true}}M}{1-p_{\mathrm{true}}}\right)},\text{ }\frac{M}{M-1}\left(\frac{\log n}{p_{\mathrm{obs}}n}-\frac{1}{M}\right)\right\} .\label{eq:KL-outlier-converse-InfoLimited}
\end{eqnarray}
\end{corollary}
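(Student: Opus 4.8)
The plan is to reduce the corollary to a closed-form evaluation of the channel divergence metrics for the outlier channel~(\ref{eq:CorruptionModel}), after which both assertions follow by substitution into Theorem~\ref{thm:ML-Hellinger-ERG} and the minimax lower bound of Theorem~\ref{corollary:ConverseHellinger-ERG}. First I would write the transition measure explicitly: conditioned on the input $l=x_i-x_j$, the output equals $l$ with probability $a:=p_{\mathrm{true}}+\frac{1-p_{\mathrm{true}}}{M}$ and equals each of the remaining $M-1$ symbols with probability $b:=\frac{1-p_{\mathrm{true}}}{M}$, i.e.\ $\mathbb{P}_l(y)=p_{\mathrm{true}}\mathbb{I}\{y=l\}+\frac{1-p_{\mathrm{true}}}{M}$. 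The channel is fully symmetric, so every ordered pair $l\neq k$ produces the identical divergence and all minima are attained simultaneously. A short computation using $a+(M-1)b=1$ then gives $\mathsf{Hel}_{1/2}^{\min}=2(\sqrt{a}-\sqrt{b})^2=\frac{2}{M}\big(\sqrt{1-p_{\mathrm{true}}+Mp_{\mathrm{true}}}-\sqrt{1-p_{\mathrm{true}}}\big)^2$ and $\mathsf{KL}^{\min}=(a-b)\log\frac{a}{b}=p_{\mathrm{true}}\log\big(1+\frac{Mp_{\mathrm{true}}}{1-p_{\mathrm{true}}}\big)$, while by symmetry $m^{\mathsf{kl}}(\zeta)\equiv M-1$ for every $\zeta\geq0$.

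With these in hand, the achievability bound~(\ref{eq:Hel-outlier-achievable-InfoLimited}) is immediate: specializing Theorem~\ref{thm:ML-Hellinger-ERG} to $\alpha=\tfrac12$ (which merely lower-bounds the supremum over $\alpha$) turns its hypothesis into $\tfrac12\mathsf{Hel}_{1/2}^{\min}\cdot p_{\mathrm{obs}}n\geq(1+\delta)\log(2n)+2\log(M-1)$; inserting the expression for $\mathsf{Hel}_{1/2}^{\min}$ and absorbing the $\log(2n)$ versus $\log n$ and $\log(M-1)$ versus $\log M$ discrepancies into the slack factor $(1+\epsilon)$ reproduces~(\ref{eq:Hel-outlier-achievable-InfoLimited}). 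Likewise, the first branch of the converse~(\ref{eq:KL-outlier-converse-InfoLimited}) is a direct substitution of $\mathsf{KL}^{\min}$ and $m^{\mathsf{kl}}(\zeta)=M-1$, taking $\zeta=0$ since all off-diagonal KL divergences already coincide, into Theorem~\ref{corollary:ConverseHellinger-ERG}(a).

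The second branch of~(\ref{eq:KL-outlier-converse-InfoLimited}) is the one demanding a dedicated argument, and I would obtain it through a genie-aided per-node reduction of the kind underlying the converse machinery. Revealing every node variable except a single $x_v$ can only help the decoder, so it suffices to lower-bound the error of recovering $x_v$ from the measurements incident to $v$. After translating each observation by its now-known neighbor value, these become $\deg(v)$ i.i.d.\ draws equal to $x_v$ with probability $a$ and to each wrong symbol with probability $b$, and maximum-likelihood decoding reduces to picking the empirical mode. The error is at least the probability $(1-a)^{\deg(v)}$ that the true symbol is \emph{never} observed, in which case the chosen mode is necessarily wrong. A first-moment count over the $n$ nodes then forces non-vanishing error once $n(1-a)^{p_{\mathrm{obs}}n}\gtrsim1$; since $p_{\mathrm{obs}}n\gg\log n$ forces $a$ to be small throughout the relevant region, $-\log(1-a)=(1+o(1))a$, and rearranging $a\cdot p_{\mathrm{obs}}n\lesssim\log n$ is exactly $p_{\mathrm{true}}\leq\frac{M}{M-1}\big(\frac{\log n}{p_{\mathrm{obs}}n}-\frac1M\big)$.

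The hard part is making this last step rigorous, because the failure events $A_v=\{\text{true symbol never observed at }v\}$ are not independent: nodes $u,v$ share the single edge $(u,v)$, which induces positive correlation that could depress $\mathbb{P}\big(\bigcup_v A_v\big)$. I would therefore replace the naive union bound by a second-moment (Paley--Zygmund) argument, bounding $\sum_{u,v}\mathbb{P}(A_u\cap A_v)$ against $\big(\sum_v\mathbb{P}(A_v)\big)^2$; the shared edge perturbs only one of the $\deg(v)$ factors, so the two moments agree to leading order and at least one node fails with constant probability. The supporting estimates are that degrees in $\mathcal{G}_{n,p_{\mathrm{obs}}}$ concentrate so that $\deg(v)\asymp p_{\mathrm{obs}}n$ uniformly, and that the assumption $p_{\mathrm{obs}}>\frac{c_1\log n}{n}$ with $c_1$ large keeps the second-order term $\exp(-\tfrac12 a^2\deg(v))$ negligible, which is precisely what sharpens the threshold to the clean form $a\cdot p_{\mathrm{obs}}n=\log n$; a final routine check that $\log m^{\mathsf{kl}}=\log(M-1)\asymp\log M$ lets the alphabet dependence of the converse mirror that of the achievability bound.
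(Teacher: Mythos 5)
Your reduction of the corollary to closed-form divergence evaluations is exactly the paper's route: the formulas you derive for $\mathsf{KL}^{\min}$, $\mathsf{Hel}_{1/2}^{\min}$ and the identity $m^{\mathsf{kl}}(\zeta)\equiv M-1$ are precisely the content of the paper's Lemma \ref{lemma-Outlier}, and your handling of the achievability claim (Theorem \ref{thm:ML-Hellinger-ERG} specialized to $\alpha=1/2$) and of the first converse branch (Theorem \ref{corollary:ConverseHellinger-ERG}(a) with $\zeta=0$) coincides with the paper's proof. The only place you genuinely depart is the second converse branch. The paper observes that the sub-graph $\mathcal{G}_{\mathrm{true}}$ of correct measurements is itself Erd\H{o}s--R\'enyi, $\mathcal{G}_{\mathrm{true}}\sim\mathcal{G}_{n,\,a p_{\mathrm{obs}}}$ with $a:=p_{\mathrm{true}}+\frac{1-p_{\mathrm{true}}}{M}$, and then invokes the classical connectivity threshold; you instead re-derive that threshold through isolated vertices plus a second-moment argument, which is in fact the standard proof of the cited classical result, so conceptually the two routes agree.

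Your execution of that re-derivation, however, has a concrete gap. You condition on degree concentration and take the isolation probability to be $(1-a)^{\deg(v)}\approx(1-a)^{p_{\mathrm{obs}}n}$, asserting that the second-order term is negligible once $c_{1}$ is large. It is not: the corollary permits $p_{\mathrm{obs}}\asymp\log n/n$, and at the claimed boundary $a\,p_{\mathrm{obs}}n=\log n$ this forces $a\approx 1/c_{1}$, a fixed constant; then
$n(1-a)^{p_{\mathrm{obs}}n}=n\exp\bigl\{-a\,p_{\mathrm{obs}}n\,(1+a/2+O(a^{2}))\bigr\}\approx n^{-a/2}\rightarrow 0$,
so your first moment vanishes exactly at the stated threshold and Paley--Zygmund has nothing to work with; as written you only prove the converse on the strictly smaller region $a\,p_{\mathrm{obs}}n\leq(1+a/2)^{-1}\log n$. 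The fix is to \emph{not} condition on the measurement graph: averaging over $\mathcal{G}$ gives the exact identity $\mathbb{E}_{\mathcal{G}}\bigl[(1-a)^{\deg(v)}\bigr]=(1-a\,p_{\mathrm{obs}})^{n-1}$, and since $a\,p_{\mathrm{obs}}\leq\log n/n=o(1)$ the exponent is $a\,p_{\mathrm{obs}}n\,(1+o(1))$ with no constant-factor loss --- equivalently, run the isolated-vertex computation directly in $\mathcal{G}_{n,\,a p_{\mathrm{obs}}}$, as the paper implicitly does. A second, smaller imprecision: the genie reduction at a \emph{fixed} node $v$ only yields an error lower bound of order $\mathbb{P}(A_{v})\approx 1/n$, which vanishes. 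To conclude non-vanishing error you must argue conditionally on the event $\bigcup_{v}A_{v}$ --- e.g., on that event the true hypothesis class is posterior-dominated by the class obtained by re-assigning the isolated node to any observed (translated) value, so any decoder errs with conditional probability bounded below by a constant --- and only then apply your second-moment bound to $\mathbb{P}\bigl(\bigcup_{v}A_{v}\bigr)$. Both repairs are routine, but they are needed to reach the constant $\frac{M}{M-1}\bigl(\frac{\log n}{p_{\mathrm{obs}}n}-\frac{1}{M}\bigr)$ as stated.
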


To establish this corollary, we start by considering the graph $\mathcal{G}_{\mathrm{true}}$
that comprises all edges where $y_{ij}=x_{i}-x_{j}$. It is self-evident
that $\mathcal{G}_{\mathrm{true}}\sim\mathcal{G}_{n,\left(p_{\mathrm{true}}+\frac{1-p_{\mathrm{true}}}{M}\right)p_{\mathrm{obs}}}$,
and thus $\left(\left(1-\frac{1}{M}\right)p_{\mathrm{true}}+\frac{1}{M}\right)p_{\mathrm{obs}}>\frac{\log n}{n}$
is necessary to ensure connectivity (otherwise there will be no basis
to link the node variables across disconnected components). Apart
from this, everything boils down to calculating $\mathsf{KL}^{\min}$
and $\mathsf{Hel}^{\min}$, which we gather in the following lemma.

\begin{lem}\label{lemma-Outlier}Consider the outlier model (\ref{eq:CorruptionModel}).
For any $0\leq p_{\mathrm{true}}<1$, one has
\begin{equation}
\mathsf{KL}^{\min}=p_{\mathrm{true}}\log\left(1+\frac{p_{\mathrm{true}}M}{1-p_{\mathrm{true}}}\right)\quad\text{and}\quad\mathsf{\mathsf{Hel}}_{\frac{1}{2}}^{\min}=\frac{2}{M}\left(\sqrt{1-p_{\mathrm{true}}+Mp_{\mathrm{true}}}-\sqrt{1-p_{\mathrm{true}}}\right)^{2}.\label{eq:KL-smallM}
\end{equation}
More simply, these metrics can be bounded as
\begin{equation}
\mathsf{KL}^{\min}\leq\frac{p_{\mathrm{true}}^{2}M}{1-p_{\mathrm{true}}}\quad\text{and}\quad\mathsf{\mathsf{Hel}}_{\frac{1}{2}}^{\min}\geq\frac{p_{\mathrm{true}}^{2}M}{2\left(1-p_{\mathrm{true}}+Mp_{\mathrm{true}}\right)}.\label{eq:KL-smallM-1}
\end{equation}
\end{lem}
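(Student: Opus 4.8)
The plan is to exploit the permutation symmetry of the outlier model, which makes every pair of transition measures $(\mathbb{P}_l,\mathbb{P}_k)$ with $l\neq k$ statistically identical; consequently the minimum defining each metric is attained simultaneously at all pairs, and it suffices to evaluate a single one. First I would record the transition measure explicitly: from (\ref{eq:CorruptionModel}), given $x_i-x_j=l$ the output equals $l$ with probability $p_{\mathrm{true}}$ and is uniform otherwise, so $\mathbb{P}_l$ assigns mass $a:=p_{\mathrm{true}}+\frac{1-p_{\mathrm{true}}}{M}$ to the symbol $l$ and mass $b:=\frac{1-p_{\mathrm{true}}}{M}$ to each of the remaining $M-1$ symbols.

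Fix any $l\neq k$. The measures $\mathbb{P}_l$ and $\mathbb{P}_k$ coincide everywhere except at the two symbols $l$ and $k$: at $l$ one has $(\mathbb{P}_l(l),\mathbb{P}_k(l))=(a,b)$, at $k$ one has $(b,a)$, and at every other symbol both equal $b$. Hence only these two coordinates contribute to any $f$-divergence. For the KL divergence this yields $\mathsf{KL}(\mathbb{P}_l\,\|\,\mathbb{P}_k)=a\log(a/b)+b\log(b/a)=(a-b)\log(a/b)$; substituting $a-b=p_{\mathrm{true}}$ and $a/b=1+\frac{Mp_{\mathrm{true}}}{1-p_{\mathrm{true}}}$ gives the stated closed form, which is independent of $(l,k)$ and therefore equals $\mathsf{KL}^{\min}$. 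For the squared Hellinger distance the same two coordinates give $\mathsf{Hel}_{\frac{1}{2}}(\mathbb{P}_l\,\|\,\mathbb{P}_k)=2(\sqrt{a}-\sqrt{b})^2$; writing $\sqrt{a}=\frac{1}{\sqrt{M}}\sqrt{1-p_{\mathrm{true}}+Mp_{\mathrm{true}}}$ and $\sqrt{b}=\frac{1}{\sqrt{M}}\sqrt{1-p_{\mathrm{true}}}$ produces the claimed expression, again pair-independent and hence equal to $\mathsf{Hel}_{\frac{1}{2}}^{\min}$.

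The bounds in (\ref{eq:KL-smallM-1}) follow by elementary estimates. The KL upper bound uses $\log(1+x)\leq x$ with $x=\frac{Mp_{\mathrm{true}}}{1-p_{\mathrm{true}}}$, giving $\mathsf{KL}^{\min}\leq\frac{p_{\mathrm{true}}^2 M}{1-p_{\mathrm{true}}}$. For the Hellinger lower bound I would rationalize: with $a':=1-p_{\mathrm{true}}+Mp_{\mathrm{true}}$ and $b':=1-p_{\mathrm{true}}$ one has $(\sqrt{a'}-\sqrt{b'})^2=(a'-b')^2/(\sqrt{a'}+\sqrt{b'})^2$, where $a'-b'=Mp_{\mathrm{true}}$; bounding $\sqrt{a'}+\sqrt{b'}\leq 2\sqrt{a'}$ (valid since $b'\leq a'$) gives $(\sqrt{a'}-\sqrt{b'})^2\geq\frac{M^2 p_{\mathrm{true}}^2}{4a'}$, and multiplying by $\frac{2}{M}$ yields $\mathsf{Hel}_{\frac{1}{2}}^{\min}\geq\frac{M p_{\mathrm{true}}^2}{2(1-p_{\mathrm{true}}+Mp_{\mathrm{true}})}$.

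No step presents a genuine obstacle; the entire argument is a direct computation. If anything warrants care, it is the opening symmetry observation---recognizing that the minimum over $l\neq k$ need not be searched over all pairs because the model is invariant under simultaneous relabeling of inputs and outputs---together with the rationalization trick $(\sqrt{a'}-\sqrt{b'})^2=(a'-b')^2/(\sqrt{a'}+\sqrt{b'})^2$, which converts the awkward difference of square roots into a ratio amenable to a one-line estimate.
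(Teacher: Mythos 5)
Your proof is correct and follows essentially the same route as the paper: both compute the divergences for a generic pair $(\mathbb{P}_l,\mathbb{P}_k)$ by observing that the two measures differ only at the two symbols $l$ and $k$ (the paper writes them as $p\delta_k+(1-p)\mathsf{Unif}_M$), then obtain the KL bound via $\log(1+x)\leq x$ and the Hellinger bound via the rationalization $\sqrt{a'}-\sqrt{b'}=(a'-b')/(\sqrt{a'}+\sqrt{b'})$ with $\sqrt{a'}+\sqrt{b'}\leq 2\sqrt{a'}$. Your explicit remark that permutation symmetry makes every pair attain the minimum is stated only implicitly in the paper, but this is a presentational difference, not a different argument.
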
 \begin{proof}See Appendix \ref{sec:Proof-of-Lemma-Outlier}.
\end{proof}

To illustrate these guarantees numerically, we depict in Fig. \ref{fig:outlier}
an example of the preceding recovery conditions. In the sequel, we
will discuss the tightness and implications of the above result for
specific regimes, ranging from small alphabet to large alphabet. For
convenience of theoretical comparison, we supply an alternative form
obtained by applying the general theory but using the bounds (\ref{eq:KL-smallM-1}):
\begin{eqnarray}
\inf_{\psi}P_{\mathrm{e}}\left(\psi\right) & \overset{n\rightarrow\infty}{\longrightarrow} & 0\quad\text{if }p_{\mathrm{true}}\geq2\left(1+\epsilon\right)\sqrt{\frac{(1-p_{\mathrm{true}}+Mp_{\mathrm{true}})(\log n+2\log M)}{p_{\mathrm{obs}}nM}},\label{eq:Hel-outlier-achievable-InfoLimited-2}\\
\inf_{\psi}P_{\mathrm{e}}\left(\psi\right) & \overset{n\rightarrow\infty}{\centernot\longrightarrow} & 0\quad\text{if }p_{\mathrm{true}}\leq\max\left\{ \left(1-\epsilon\right)\sqrt{\frac{(1-p_{\mathrm{true}})(\log n+\log M)}{p_{\mathrm{obs}}nM}},\frac{\log n}{p_{\mathrm{obs}}n}\right\} .\label{eq:KL-outlier-converse-InfoLimited-2}
\end{eqnarray}

\begin{figure}
\centering

\begin{tabular}{cc}
\includegraphics[width=0.42\textwidth]{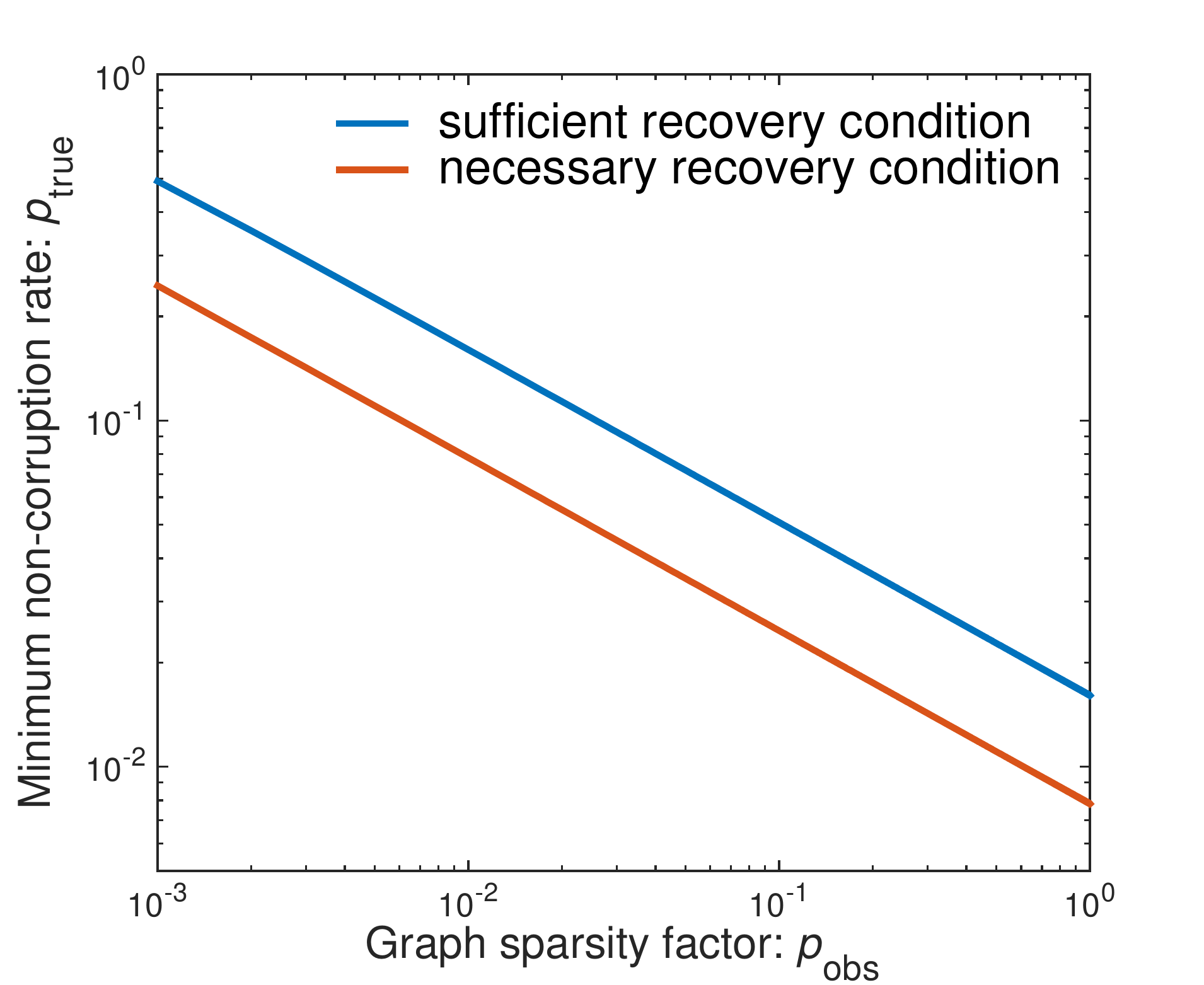} & \includegraphics[width=0.42\textwidth]{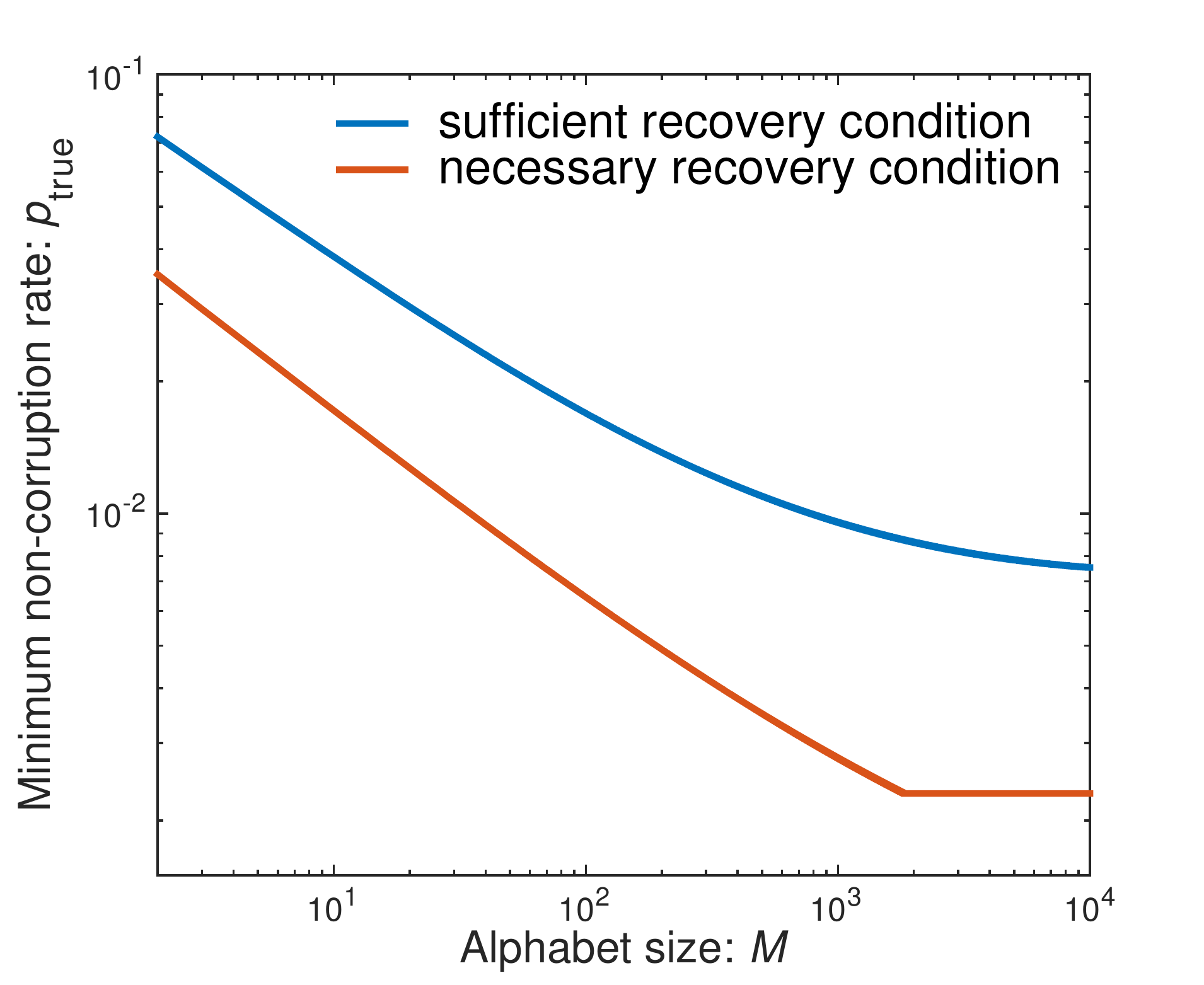}\tabularnewline
(a) & (b)\tabularnewline
\end{tabular}

\caption{The sufficient and the necessary conditions given in Corollary \ref{corollary-Outlier}
when $n=10^{5}$. The results are shown for: (a) $M=2$; (b) $p_{\mathrm{obs}}=0.05$.}
\label{fig:outlier}
\end{figure}

\subsubsection{Tightness under Binary Alphabet\label{sub:CBM}}

We start with the case where $M=2$, which was also studied by \cite{abbe2014decoding}.
When $p_{\mathrm{obs}}\gg\frac{\log n}{n}$, our results (\ref{eq:Hel-outlier-achievable-InfoLimited-2})
and (\ref{eq:KL-outlier-converse-InfoLimited-2}) assert that
\begin{eqnarray}
\inf_{\psi}P_{\mathrm{e}}\left(\psi\right) & \overset{n\rightarrow\infty}{\longrightarrow} & 0\quad\text{if }p_{\mathrm{true}}\geq\left(1+o\left(1\right)\right)\sqrt{\frac{2\log n}{p_{\mathrm{obs}}n}},\label{eq:Hel-outlier-achievable-InfoLimited-1}\\
\inf_{\psi}P_{\mathrm{e}}\left(\psi\right) & \overset{n\rightarrow\infty}{\centernot\longrightarrow} & 0\quad\text{if }p_{\mathrm{true}}\leq\left(1-o\left(1\right)\right)\sqrt{\frac{\log n}{2p_{\mathrm{obs}}n}}.\label{eq:KL-outlier-converse-InfoLimited-1}
\end{eqnarray}
As a result, our bounds are within a factor $2+o(1)$ from optimal,
which holds for all possible values of $(p_{{\rm obs}},p_{{\rm true}})$.
This constant gap is illustrated in Fig. \ref{fig:outlier}(a) as
well.

In contrast\textbf{, }the bounds presented in \cite{abbe2014decoding}
fall short of a uniform constant factor gap accommodating different
parameter configurations. Adopting our notation, \cite[Theorems 4.1 - 4.2]{abbe2014decoding}
reduce to\footnote{Note that $p_{\mathrm{true}}=1-2\varepsilon$ and $d=np_{\mathrm{obs}}$
for the notation $\varepsilon$ and $d$ defined in \cite{abbe2014decoding},
respectively. }: 
\begin{eqnarray*}
\inf_{\psi}P_{\mathsf{e}}\left(\psi\right) & \overset{n\rightarrow\infty}{\longrightarrow} & 0\quad\quad\quad\text{if }p_{{\rm true}}>\sqrt{\frac{2\log n}{p_{{\rm obs}}n}},\\
\inf_{\psi}P_{\mathsf{e}}\left(\psi\right) & \overset{n\rightarrow\infty}{\centernot\longrightarrow} & 0\quad\quad\quad\text{if }p_{{\rm true}}<\sqrt{\frac{2\left(1-3\tau/2\right)\log n}{p_{{\rm obs}}n}},
\end{eqnarray*}
where $0<\tau<\frac{2}{3}$ is some numerical value so that $p_{{\rm obs}}\leq2n^{\tau-1}$.
Hence, their bounds are tight up to a factor
\[
g\left(\tau\right)=\frac{1+o(1)}{\sqrt{1-3\tau/2}},
\]
which approaches 1 in the sparse graph regime as $\tau\rightarrow0$
(e.g. $p_{\mathrm{obs}}\asymp\frac{\log n}{n}$). On the other hand,
it does not deliver meaningful conditions for the case where $\tau\geq\frac{2}{3}$
(i.e. $2n^{-\frac{1}{3}}\leq p_{\mathrm{obs}}\leq1$). In comparison,
our bounds are looser for sparse graphs ($\tau<\frac{1}{2}$ or $p_{\mathrm{obs}}<\frac{2}{\sqrt{n}}$)
where $g\left(\tau\right)\leq2$, but tighter for dense graphs ($\tau\geq\frac{1}{2}$
or $p_{\mathrm{obs}}\geq\frac{2}{\sqrt{n}}$) where $g(\tau)\geq2$.

Notably, when $p_{\mathrm{obs}}\asymp\frac{\log n}{n}$, the fundamental
limit approaches $\sqrt{\frac{2\log n}{p_{{\rm obs}}n}}$ in an accurate
manner \cite{abbe2014decoding}. This again corroborates the tightness
of our achievability bound, implying that the squared Hellinger distance
is the right quantity to control in the sparsest possible regime.

\subsubsection{From Small Alphabet to Large Alphabet}

The recovery conditions given in Corollary \ref{corollary-Outlier}
can be further divided into and simplified for two respective regimes,
depending on whether $Mp_{\mathrm{true}}\lesssim1$ or $Mp_{\mathrm{true}}\gtrsim1$.
By substituting each of these two hypotheses into (\ref{eq:Hel-outlier-achievable-InfoLimited-2}),
deriving the corresponding minimum $p_{\mathrm{true}}$ for the respective
case, and then checking the compatibility of $p_{\mathrm{true}}M$
with the hypotheses, one immediately deduces:
\begin{enumerate}
\item When $M=o\left(\frac{p_{\mathrm{obs}}n}{\log n}\right)$, one has
\begin{eqnarray}
\inf_{\psi}P_{\mathrm{e}}\left(\psi\right) & \overset{n\rightarrow\infty}{\longrightarrow} & 0\quad\text{if }p_{\mathrm{true}}\geq2\left(1+o\left(1\right)\right)\sqrt{\frac{\log n+2\log M}{p_{\mathrm{obs}}nM}},\label{eq:Hel-outlier-achievable-smallM}\\
\inf_{\psi}P_{\mathrm{e}}\left(\psi\right) & \overset{n\rightarrow\infty}{\centernot\longrightarrow} & 0\quad\text{if }p_{\mathrm{true}}\leq\left(1-o\left(1\right)\right)\sqrt{\frac{\log n+\log M}{p_{\mathrm{obs}}nM}};\label{eq:KL-outlier-achievable-smallM}
\end{eqnarray}

\item When $M=\omega\left(\frac{p_{\mathrm{obs}}n}{\log n}\right)$, one
has
\begin{eqnarray}
\inf_{\psi}P_{\mathrm{e}}\left(\psi\right) & \overset{n\rightarrow\infty}{\longrightarrow} & 0\quad\text{if }p_{\mathrm{true}}\geq4\left(1+o\left(1\right)\right)\frac{\log n+2\log M}{p_{\mathrm{obs}}n},\label{eq:Hel-outlier-achievable-largeM}\\
\inf_{\psi}P_{\mathrm{e}}\left(\psi\right) & \overset{n\rightarrow\infty}{\centernot\longrightarrow} & 0\quad\text{if }p_{\mathrm{true}}\leq\frac{\log n}{p_{\mathrm{obs}}n}.\label{eq:KL-outlier-achievable-largeM}
\end{eqnarray}

\end{enumerate}
That being said, the recovery boundary presented in terms of $p_{\mathrm{true}}$
exhibits contrasting features in two separate regimes, as illustrated
in Fig. \ref{fig:outlier}(b). Some interpretations are in order.
\begin{enumerate}
\item \textbf{Information-limited regime} ($M=o\left(\frac{d_{\min}}{\log n}\right)$).
The amount of information that can be conveyed through each pairwise
measurement is captured by the divergence measure. In this small-alphabet
regime, a little algebra gives $\mathsf{KL}^{\min}\approx p_{\mathrm{true}}^{2}M$
(see Lemma \ref{lemma-Outlier}), which is increasing in $M$. As
a result, the alphabet size limits the amount of information that
we can harvest, and the fundamental recovery boundary improves with
$M$. For Erd\H{o}s\textendash Rényi graphs, the recovery conditions
are tight up to a factor of $2$ in the presence of a constant alphabet
size, and up to a factor of $2\sqrt{\frac{3}{2}}$ for all $M\ll d_{\min}/\log n$.
\item \textbf{Connectivity-limited regime} ($M=\omega\left(\frac{d_{\min}}{\log n}\right)$).
When $M$ further increases and enters this regime, the information
carried by each measurement saturates and no longer scales as $p_{\mathrm{true}}^{2}M$.
In this regime, the measurement graph $\mathcal{G}$ presents a fundamental
connectivity bottleneck. In fact, if $p_{\mathrm{true}}=o\left(\frac{\log n}{d_{\min}}\right)$,
then there will be at least one vertex that is not connected with
a single useful measurement, and hence there will be absolutely no
basis to infer the value of this isolated vertex. Our bounds in this
regime are order-wise optimal as long as the alphabet size is not
super-polynomial in $n$.
\end{enumerate}

\subsection{Haplotype Assembly\label{sub:Haplotype-Assembly}}

The pairwise measurement model can also be applied to analyze the
haplotype assembly problem discussed in Section \ref{sec:Introduction}.
As formulated in \cite{kamath2015optimal,si2014haplotype}, consider
$n$ SNPs on a chromosome, represented by a sequence $\{x_{1},\cdots,x_{n}\}\in\{0,1\}^{n}$
such that a major (resp. minor) allele is denoted by 0 (resp. 1).
Employing certain sequencing technologies, one obtains a collection
of \emph{independent} paired reads such that for any $(i,j)\in\mathcal{E}$,
\begin{equation}
y_{ij}^{(k)}=\left\{ \begin{array}{ll}
x_{i}\oplus x_{j}, & \text{w.p. }1-\theta,\\
x_{i}\oplus x_{j}\oplus1, & \text{w.p. }\theta.
\end{array}\right.\label{eq:Haplotypemodel}
\end{equation}
Here, $y_{ij}^{(k)}$ stands for the $k^{\text{th}}$ noisy read of
the parity between the $i^{\text{th}}$ and the $j^{\text{th}}$ SNPs,
and $0<\theta<1/2$ denotes the read error rate. We assume that the
reads taken on each edge are independent.

A realistic measurement graph that respects current sequencing technologies
is the one in which measurements are obtained only when the $i^{\text{th}}$
and the $j^{\text{th}}$ SNPs are geometrically close, i.e., $|i-j|\leq w$
for some constant\footnote{As discussed in \cite{kamath2015optimal}, the separation between
two DNA reads (called the \emph{insert size}) is typically bounded
within a small range, with the median insert size not exceeding a
few times the separation between adjacent SNPs. } $w>0$. This is captured by a \emph{generalized ring graph}, denoted
by ${\cal G}_{\mathrm{ring}}=\left(\mathcal{V},{\cal E}_{\mathrm{ring}}\right)$,
such that
\begin{equation}
\left(i,j\right)\in{\cal E}_{\mathrm{ring}}\quad\text{iff}\quad\left|i-j\right|\leq w.\label{eq:-1}
\end{equation}
The number $L_{i,j}$ of reads taken between $i$ and $j$ is assumed
to be dependent on their separation, i.e.\footnote{Careful readers will note that this assumption is different from the
model adopted in \cite{kamath2015optimal,si2014haplotype}, where
the total number of reads is fixed with the reads independently generated.
Nevertheless, the model considered here (which significantly simplifies
presentation) is sufficient to capture the right scaling of the performance
limits, since these two models are orderwise equivalent due to measure
concentration. }
\begin{equation}
L_{i,j}=Lp_{|i-j|}\label{eq:L_prob}
\end{equation}
for some parameters $L$ and $\left\{ p_{l}\mid1\leq l\leq w\right\} $. 

Additionally, a random and geometry-free measurement model has been
investigated in \cite{si2014haplotype} as well. The fundamental limit
under this model is orderwise\emph{ }equivalent to that under an Erd\H{o}s\textendash Rényi
graph with $L_{i,j}\equiv L$ for all $(i,j)\in\mathcal{E}$. For
the sake of completeness, we derive consequences for both models as
follows. 

\begin{corollary}\label{corollary-Haplotype}Consider the model (\ref{eq:Haplotypemodel}),
and assume that $\theta$ and $p_{l}$ are bounded away from 0.

(1) Suppose that $\mathcal{G}\sim{\cal G}_{\mathrm{ring}}$. There
exist some universal constants $c_{1}>c_{2}>0$ such that
\begin{eqnarray}
\inf_{\psi}P_{\mathsf{e}}\left(\psi\right) & \overset{n\rightarrow\infty}{\longrightarrow} & 0\quad\quad\quad\text{if }\left(1-2\theta\right)^{2}>c_{1}\frac{\log n}{L},\label{eq:Haplotype-ring-Achievability}\\
\inf_{\psi}P_{\mathsf{e}}\left(\psi\right) & \overset{n\rightarrow\infty}{\centernot\longrightarrow} & 0\quad\quad\quad\text{if }\left(1-2\theta\right)^{2}<c_{2}\frac{\log n}{L}.\label{eq:Haplotype-ring-converse}
\end{eqnarray}

(2) Suppose that $\mathcal{G}\sim\mathcal{G}_{n,p_{{\rm obs}}}$ and
$p_{{\rm obs}}>\frac{c_{3}\log n}{n}$ for some sufficiently large
constant $c_{3}>0$. Then there exist some universal constants $c_{4},c_{5}>0$
such that
\begin{eqnarray}
\inf_{\psi}P_{\mathsf{e}}\left(\psi\right) & \overset{n\rightarrow\infty}{\longrightarrow} & 0\quad\quad\quad\text{if }\left(1-2\theta\right)^{2}>\frac{c_{4}\log n}{Lnp_{\mathrm{obs}}},\label{eq:Haplotype-ER-Achievability}\\
\inf_{\psi}P_{\mathsf{e}}\left(\psi\right) & \overset{n\rightarrow\infty}{\centernot\longrightarrow} & 0\quad\quad\quad\text{if }\left(1-2\theta\right)^{2}<\frac{c_{5}\log n}{Lnp_{\mathrm{obs}}}.\label{eq:Haplotype-ER-converse}
\end{eqnarray}
 \end{corollary}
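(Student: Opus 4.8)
The plan is to cast the haplotype model as a binary ($M=2$) graphical channel and then read the two pairs of thresholds off the general machinery of Section \ref{sec:General-Graphs}. First I would identify the per-read channel: since $y_{ij}^{(k)}$ equals $x_i\oplus x_j$ with probability $1-\theta$, the two transition measures are $\mathbb{P}_0=\mathsf{Bern}(\theta)$ and $\mathbb{P}_1=\mathsf{Bern}(1-\theta)$, so a single read yields
\[
\mathsf{Hel}_{1/2}(\mathbb{P}_0\hspace{0.2em}\|\hspace{0.2em}\mathbb{P}_1)=2-4\sqrt{\theta(1-\theta)}=2-2\sqrt{1-(1-2\theta)^2},
\]
which satisfies $(1-2\theta)^2\le \mathsf{Hel}_{1/2}^{(1)}\le 2(1-2\theta)^2$ uniformly over $\theta\in(0,1/2)$. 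Because $\theta$ is bounded away from $0$, the likelihood ratio $\tfrac{\mathrm d\mathbb{P}_0}{\mathrm d\mathbb{P}_1}=\tfrac{1-\theta}{\theta}$ is bounded, so Fact \ref{Fact:KL-Hellinger} gives $\mathsf{KL}^{(1)}\asymp\mathsf{Hel}_{1/2}^{(1)}\asymp(1-2\theta)^2$ and, through (\ref{eq:Hel-D-KL}), the same order for the single-read R\'enyi quantity $(1-\alpha)D_\alpha^{(1)}$ at a fixed $\alpha$ (say $\alpha=\tfrac12$); this bounded-ratio step is precisely where the hypothesis ``$\theta$ bounded away from $0$'' is used to make the achievability and converse orders match. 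Since $p_l$ and $w$ are constants, the orderwise equivalence noted in the problem setup lets me replace the location-dependent counts $L_{i,j}=Lp_{|i-j|}$ by a uniform budget of $\Theta(L)$ independent reads on each edge.

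Next I would fold the $L$ reads of an edge into one channel and tensorize. The crucial point is to work with the R\'enyi/KL divergences rather than the Hellinger divergence directly: $D_\alpha$ and $\mathsf{KL}$ are \emph{additive} over product measures, so $(1-\alpha)D_\alpha^{\min}=L\,(1-\alpha)D_\alpha^{(1)}\asymp L(1-2\theta)^2$ and $\mathsf{KL}^{\min}=L\,\mathsf{KL}^{(1)}\asymp L(1-2\theta)^2$, both growing linearly in $L$. By contrast, $(1-\alpha)\mathsf{Hel}_\alpha^{(L)}=1-(\int(\mathrm d\mathbb{P}_0)^\alpha(\mathrm d\mathbb{P}_1)^{1-\alpha})^{L}$ \emph{saturates} at $1$ once $L(1-2\theta)^2\gg1$, which is exactly the regime the ring lives in; this is the main obstacle, and it forces me to invoke Theorem \ref{thm:AchieveM-Deg} in its R\'enyi form (the remark following it, with $\mathsf{Hel}_\alpha^{\min}$ replaced by $D_\alpha^{\min}$) for achievability, while the converse uses condition (\ref{eq:Fano-KL-thm-homogeneous-1}) of Theorem \ref{thm:ConverseM-Degree-KL}. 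Here $M=2$ forces $m^{\mathsf{kl}}(\zeta)\equiv1$, so every $\log m^{\mathsf{kl}}$ term drops out, leaving achievability governed by $(1-\alpha)D_\alpha^{\min}\cdot\mincut\gtrsim \tau^{\mathrm{cut}}+\log n$ and non-recovery by $\mathsf{KL}^{\min}\cdot d_{\max}\lesssim\log n$.

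It then remains to supply the graphical metrics. For $\mathcal{G}\sim\mathcal{G}_{n,p_{\mathrm{obs}}}$ with $p_{\mathrm{obs}}>c_3\log n/n$, standard random-graph estimates give $\mincut\asymp d_{\max}\asymp d_{\mathrm{avg}}\asymp np_{\mathrm{obs}}$ with high probability, and the expander part of Lemma \ref{lemma:alpha_random_graphs} gives $\tau^{\mathrm{cut}}\lesssim\log n$; conditioning on these high-probability events and substituting $L(1-2\theta)^2\cdot np_{\mathrm{obs}}\gtrsim\log n$ into Theorem \ref{thm:AchieveM-Deg} and $L(1-2\theta)^2\cdot np_{\mathrm{obs}}\lesssim\log n$ into (\ref{eq:Fano-KL-thm-homogeneous-1}) produces (\ref{eq:Haplotype-ER-Achievability})--(\ref{eq:Haplotype-ER-converse}). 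For the generalized ring ${\cal G}_{\mathrm{ring}}$ (connected since $w\ge1$), each vertex has degree $\Theta(w)$ while a contiguous arc severs $\Theta(w^2)$ edges, so $d_{\max}\asymp w$ and $\mincut\asymp w^2$; the generalized-ring instance of Lemma \ref{lemma:alpha_random_graphs}(1) certifies $\tau^{\mathrm{cut}}\lesssim\log n$ (alternatively, counting unions of arcs gives $\tau^{\mathrm{cut}}\asymp\log n$, which one may feed into (\ref{eq:Fano-KL-thm-homogeneous}) instead). Plugging $\mincut\asymp w^2$ and $d_{\max}\asymp w$ with $w=\Theta(1)$ into the same two theorems yields the achievability boundary $L(1-2\theta)^2\gtrsim\log n$ and the converse boundary $L(1-2\theta)^2\lesssim\log n$, i.e.~(\ref{eq:Haplotype-ring-Achievability})--(\ref{eq:Haplotype-ring-converse}). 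Beyond the saturation issue, the only remaining bookkeeping is to check that the location-dependent read profile $\{p_l\}$ and the ring's constant min-cut are absorbed into the universal constants $c_1,\dots,c_5$.
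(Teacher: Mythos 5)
Your proposal is correct and follows essentially the same route as the paper: the paper likewise exploits additivity of the R\'enyi divergence across the $L$ reads on each edge (precisely to dodge the Hellinger saturation you identify), feeds $D_{1/2}^{\min}\gtrsim L\left(1-2\theta\right)^{2}$ into the R\'enyi/location-dependent form of the general achievability theorems together with Lemma \ref{lemma:alpha_random_graphs}, and for the converse bounds $\mathsf{KL}^{\min}\lesssim L\left(1-2\theta\right)^{2}$ (via the $\chi^{2}$ divergence rather than your Fact \ref{Fact:KL-Hellinger} argument, both hinging on $\theta$ being bounded away from $0$) before invoking Theorem \ref{thm:ConverseM-Degree-KL}, handling the nonuniform read counts $L_{i,j}=Lp_{|i-j|}$ by the same kind of orderwise replacement you describe. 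One harmless slip: the generalized ring's minimum cut is the singleton cut $2w\asymp w$, not $\Theta\left(w^{2}\right)$ (a contiguous arc severs $w\left(w+1\right)$ edges, which exceeds $2w$), but since $w=\Theta\left(1\right)$ all of these are constants and your conclusion is unaffected.
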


\begin{proof}For the sufficient condition, we only need to calculate
the Rényi divergence. For each $(i,j)\in\mathcal{E}$, letting $D_{1/2}^{i,j}$
be the Rényi divergence of order $1/2$ between the distributions
of $\{y_{ij}^{(k)}\}_{1\leq k\leq L_{i,j}}$ given two distinct inputs
(i.e.~0 and 1), one obtains
\begin{eqnarray}
-D_{1/2}^{i,j} & \overset{(\text{i})}{=} & L_{i,j}\left\{ -2\log\left(1-\frac{1}{2}\mathsf{Hel}\left(\theta\hspace{0.2em}\|\hspace{0.2em}1-\theta\right)\right)\right\} \overset{(\text{ii})}{\geq}L_{i,j}\mathsf{Hel}\left(\theta\hspace{0.2em}\|\hspace{0.2em}1-\theta\right),\label{eq:lower-bound1}
\end{eqnarray}
where (i) follows from additivity of Rényi divergence \cite[Theorem 2.8]{van2014renyi},
and (ii) follows since $1-x\leq e^{-x}$. Furthermore, 
\begin{equation}
\frac{1}{2}\mathsf{Hel}\left(\theta\hspace{0.2em}\|\hspace{0.2em}1-\theta\right)=\left(\sqrt{\theta}-\sqrt{1-\theta}\right)^{2}=\frac{\left(1-2\theta\right)^{2}}{(\sqrt{\theta}+\sqrt{1-\theta})^{2}}\asymp\left(1-2\theta\right)^{2}.\label{eq:lower-bound2}
\end{equation}
Recall that $L_{i,j}=Lp_{|i-j|}\asymp L$ when $\mathcal{G}\sim\mathcal{G}_{\mathrm{ring}}$,
whereas $L_{i,j}=L$ when $\mathcal{G}\sim\mathcal{G}_{n,p_{\mathrm{obs}}}$.
These taken together with Theorem \ref{theorem:general-Hellinger}
and Lemma \ref{lemma:alpha_random_graphs} (resp. Theorem \ref{theorem:general-Hellinger-ERG})
establish the sufficient condition for $\mathcal{G}_{\mathrm{ring}}$
(resp. $\mathcal{G}_{n,p_{\mathrm{obs}}}$).

For the necessary condition, by replacing all $p_{l}$ with $\max_{l}p_{l}$
in (\ref{eq:L_prob}), we obtain a new model such that any sufficient
recovery condition for the original model holds for this new model
as well. We then move on to compute the KL divergence for the new
model:
\begin{equation}
\mathsf{KL}^{\min}\leq L\mathsf{KL}\left(\theta\hspace{0.2em}\|\hspace{0.2em}1-\theta\right)\overset{(\text{a})}{\leq}L\mathsf{\chi}^{2}\left(\theta\hspace{0.2em}\|\hspace{0.2em}1-\theta\right)=L\frac{\left(1-2\theta\right)^{2}}{\theta\left(1-\theta\right)}\asymp L\left(1-2\theta\right)^{2},\label{eq:KL-bound-HA}
\end{equation}
where (a) follows from \cite[Equation (7)]{van2014renyi}. Substitution
into Theorem \ref{thm:ConverseM-Degree-KL} finishes the proof.\end{proof}

We now compare our results with prior results. The fundamental limits
given in \cite{kamath2015optimal,si2014haplotype} were based on coverage
(or sample complexity) as a metric, that is, the total number of reads
required for perfect haplotype assembly. Recognizing that $nLw$ (resp.~${n \choose 2}p_{\mathrm{obs}}L$)
captures the order of the total number of paired reads for $\mathcal{G}_{\mathrm{ring}}$
(resp.~$\mathcal{G}_{n,p_{\mathrm{obs}}}$), we see that the minimal
sample complexity obeys 
\begin{eqnarray}
nLw\text{ } & \asymp & \text{ }\frac{n\log n}{\left(1-2\theta\right)^{2}},\quad\text{when }\mathcal{G}\sim\mathcal{G}_{\mathrm{ring}};\\
{n \choose 2}Lp_{\mathrm{obs}} & \asymp & \frac{n\log n}{\left(1-2\theta\right)^{2}},\quad\text{when }\mathcal{G}\sim\mathcal{G}_{n,p_{\mathrm{obs}}}.
\end{eqnarray}
Consequently, for the generalized ring graph, our results match the
sample complexity limits characterized in \cite{kamath2015optimal}
in an orderwise sense, which is proportional to 
\[
\frac{n\log n}{1-e^{-\mathsf{KL}(0.5||\theta)}}\asymp\frac{n\log n}{\mathsf{KL}(0.5\hspace{0.2em}\|\hspace{0.2em}\theta)}=\frac{n\log n}{\left(\frac{1}{2}+o(1)\right)\left(1-2\theta\right)^{2}}.
\]
On the other hand, for the Erd\H{o}s\textendash Rényi graphs, the
minimum sample complexity scales as $\frac{n\log n}{(1-2\theta)^{2}}$,
which coincides with the orderwise limits $\Theta(n\log n)$ derived
in \cite{si2014haplotype}.

Notably, our results are not restricted to the classical large-sample
asymptotics where $\theta$ is fixed while $n$ grows to infinity.
This strengthens \cite{kamath2015optimal,si2014haplotype} by accommodating
the regime where $\theta-1/2=o(1)$, which characterizes the non-asymptotic
tradeoff between $n$ and the read quality. As a final remark, while
our results are tight in capturing the right scaling w.r.t.~the read
error rate as well as the number of SNPs, our derivation is not tight
in characterizing the behavior w.r.t.~$p_{l}$ (or the notation $W$
given in \cite{kamath2015optimal}).

\section{Concluding Remarks\label{sec:Concluding-Remarks}}

This paper investigates simultaneous recovery of multiple node variables
based on noisy graph-based measurements, under the pairwise difference
model. The problem formulation spans numerous applications including
image registration, graph matching, community detection, and computational
biology. We develop a unified framework in understanding all problems
of this kind based on representing the available pairwise measurements
as a graph, and then representing the noise on the measurements using
a general channel with a given input/output transition measure. This
framework accommodates large alphabets, general channel transition
probabilities, and general graph structures in a non-asymptotic manner.
Our results underscore the interplay between the minimum channel divergence
measures and the minimum cut size of the measurement graph. Moreover,
for various homogeneous graphs, the recovery criterion relies almost
only on the first-order graphical metrics independent of other second-order
metrics like the spectral gap. We expect that such fundamental recovery
criterion will provide a general benchmark for evaluating the performance
of practical algorithms over many applications.

For concreteness, we restrict our attention to the pairwise difference
model in this paper, but we remark that the analysis framework is
somewhat generic and applies to a broader family of pairwise measurements.
For instance, consider a more general invertible pairwise relation,
denoted by $x_{i}\ominus x_{j}$, that satisfies
\begin{equation}
\begin{cases}
x_{1}\ominus x_{2}\neq x_{1}\ominus x_{3},\quad & \forall x_{2}\neq x_{3};\\
x_{1}\ominus x_{2}\neq x_{4}\ominus x_{2}, & \forall x_{1}\neq x_{4}.
\end{cases}\label{eq:CrossCutAssumption-alpha-1}
\end{equation}
As an example, the addition operator defined as $x_{i}\ominus x_{j}:=ax_{i}+bx_{j}\text{ }(\mathsf{mod}\text{ }M)$
falls within this class as long as both $(a,M)$ and $(b,M)$ are
coprime. Interestingly, most of the analyses carry over to such models
and reappear suitably generalized. Details concerning full generalization
of our results are left for future work. 

While our paper centers on the minimax recovery involving all possible
input configurations, there exists another family of applications
where the inputs fall within a more restricted class (e.g. the class
of inputs whose components are spread out over the entire alphabet).
In addition, it would be interesting to establish how the fundamental
limits can be improved under the partial recovery setting, namely,
the situation where one only demands reconstruction of a (large) fraction
of input variables. Even in the exact recovery situation, it remains
to be seen whether the universal pre-constants can be further tightened.
Moving away from the statistical guarantees, another important issue
is the computational feasibility of information recovery. It has recently
been demonstrated by \cite{chen2016community} that the information
and computation limits meet for many homogeneous graphs (e.g.~rings,
lines, small-world graphs, grids). It would be of great interest to
see whether there exists any computational gap away from the statistical
limits for a more general family of graphs.

\section*{Acknowledgments}

Y. Chen would like to thank Emmanuel Abbe, Afonso Bandeira, Amit Singer
for inspiring discussion on synchronization and graphical channels,
Jiaming Xu for stimulating discussion on hypothesis testing, Amir
Dembo for his valuable instruction on large and moderate deviation
theory. We thank Jonathan Scarlett and I-Hsiang Wang for helpful discussions
and suggestions. Y. Chen and A. J. Goldsmith are supported in part
by the NSF Center for Science of Information, and the AFOSR under
MURI Grant FA9550-12-1-0215. C. Suh is partly supported by the ICT
R\&D program of MSIP/IITP {[}B0101-15-1272, Development of Device
Collaborative Giga-Level Smart Cloudlet Technology{]}. 

\appendix

\section{Proof of Theorem \ref{thm:ML-Hellinger-ERG}\label{sec:Proof-of-Theorem-ML-Hellinger-ERG}}

Suppose that both the ground truth and the null hypothesis are $\boldsymbol{x}=\boldsymbol{x}^{*}$.
Consider the class of alternative hypotheses parametrized by $k$
($1\leq k\leq n$) as follows
\begin{equation}
\mathcal{H}_{k}:=\left\{ \boldsymbol{x}\mid\left\Vert \boldsymbol{x}-\boldsymbol{x}^{*}\right\Vert _{0}=n-k\right\} ,
\end{equation}
which comprises at most ${n \choose k}\left(M-1\right)^{n-k}$ distinct
hypotheses. For notational convenience, denote by $\mathbb{P}_{\boldsymbol{w}}\left(\cdot\right)$
(resp.~$\mathbb{P}_{\boldsymbol{0}}\left(\cdot\right)$) the probability
measure of $\boldsymbol{y}$ conditional on the alternative hypothesis
$\boldsymbol{x}=\boldsymbol{w}$ (resp.~the null hypothesis $\boldsymbol{x}=\boldsymbol{x}^{*}$).
We let $P_{\mathrm{e},\mathcal{H}_{k}}$ represent the probability
of error when restricted to the class $\mathcal{H}_{k}$ of alternative
hypotheses. For simplicity of presentation, we will assume $\boldsymbol{x}^{*}=\boldsymbol{0}$
in what follows, but all steps apply to other choices of $\boldsymbol{x}^{*}$.

For any $\boldsymbol{w}\in\mathcal{H}_{k}$, denote by $\mathcal{S}_{i}$
($0\leq i<M$) the set of vertices $v$ obeying $w_{v}=i$, and let
$n_{i}=\left|\mathcal{S}_{i}\right|$. Apparently, there are $\frac{1}{2}\sum_{i=1}^{M-1}e(\mathcal{S}_{i},\mathcal{S}_{i}^{\mathrm{c}})$
distinct locations $(l,j)$ satisfying $l>j$ and $w_{l}-w_{j}\neq0$,
where $e(\mathcal{S},\mathcal{S}^{\mathrm{c}})$ denotes the number
of cut edges as defined in Section \ref{sub:Notation}. With this
in mind, it follows from the Chernoff bound that 
\begin{eqnarray}
 &  & \mathbb{P}_{\boldsymbol{0}}\left\{ \left.\log\frac{\mathrm{d}\mathbb{P}_{\boldsymbol{w}}\left(\boldsymbol{y}\right)}{\mathrm{d}\mathbb{P}_{\boldsymbol{0}}\left(\boldsymbol{y}\right)}>0\right|\mathcal{E}\right\} =\mathbb{P}_{\boldsymbol{0}}\left\{ \left.\sum_{(l,j)\in\mathcal{E},\text{ }l>j}\alpha\log\frac{\mathrm{d}\mathbb{P}_{\boldsymbol{w}}\left(y_{lj}\right)}{\mathrm{d}\mathbb{P}_{\boldsymbol{0}}\left(y_{lj}\right)}>0\right|\mathcal{E}\right\} \nonumber \\
 &  & \quad\leq\prod_{(l,j)\in\mathcal{E},\text{ }l>j}\mathbb{E}_{{\bf 0}}\left[e^{\alpha\log\frac{\mathrm{d}\mathbb{P}_{\boldsymbol{w}}\left(y_{lj}\right)}{\mathrm{d}\mathbb{P}_{\boldsymbol{0}}\left(y_{lj}\right)}}\right]\\
 &  & \quad=\prod_{(l,j)\in\mathcal{E},\text{ }l>j}\Big[1-\left(1-\alpha\right)\mathsf{Hel}_{\alpha}\left(\mathbb{P}_{\boldsymbol{w}}\left(y_{lj}\right)\hspace{0.2em}\|\hspace{0.2em}\mathbb{P}_{\boldsymbol{0}}\left(y_{lj}\right)\right)\Big]\label{eq:Hellinger-Exp}\\
 &  & \quad=\exp\left(-\left(1-\alpha\right)\sum_{(l,j)\in\mathcal{E},\text{ }l>j}D_{\alpha}\left(\mathbb{P}_{\boldsymbol{w}}\left(y_{lj}\right)\hspace{0.2em}\|\hspace{0.2em}\mathbb{P}_{\boldsymbol{0}}\left(y_{lj}\right)\right)\right)\label{eq:Hel-Exp-Log}\\
 &  & \quad\leq\exp\left(-\left(1-\alpha\right)\frac{\sum_{i=0}^{M-1}e\left(\mathcal{S}_{i},\mathcal{S}_{i}^{\text{c}}\right)}{2}D_{\alpha}^{\min}\right),\label{eq:Bound-Hel}
\end{eqnarray}
where (\ref{eq:Hellinger-Exp}) follows from the definition of the
Hellinger divergence, (\ref{eq:Hel-Exp-Log}) comes from the definition
(\ref{eq:defn-Renyi-Hel}), and (\ref{eq:Bound-Hel}) arises since
$D_{\alpha}\left(\mathbb{P}_{\boldsymbol{w}}\left(y_{lj}\right)\hspace{0.2em}\|\hspace{0.2em}\mathbb{P}_{\boldsymbol{0}}\left(y_{lj}\right)\right)\neq0$
if and only if $w_{l}-w_{j}\neq0$. 

Additionally, define the quantity 
\[
N_{\bm{w}}:=\frac{1}{2}\left|\cup_{i}\left\{ (l,j)\in\left(\mathcal{S}_{i},\mathcal{S}_{i}^{\text{c}}\right)\right\} \right|=\frac{1}{2}\sum_{i=0}^{M-1}\left|\mathcal{S}_{i}\right|\left(n-\left|\mathcal{S}_{i}\right|\right),
\]
then it follows from the definition of $\mathcal{G}_{n,p_{\mathrm{obs}}}$
that
\[
\frac{\sum_{i=0}^{M-1}e\left(\mathcal{S}_{i},\mathcal{S}_{i}^{\mathrm{c}}\right)}{2}\sim\mathsf{Binomial}\left(N_{\bm{w}},p_{\mathrm{obs}}\right).
\]
Unconditioning on $\mathcal{E}$ in the inequality (\ref{eq:Bound-Hel})
gives 
\begin{eqnarray}
\mathbb{P}_{\boldsymbol{0}}\left\{ \log\frac{\mathrm{d}\mathbb{P}_{\boldsymbol{w}}\left(\boldsymbol{y}\right)}{\mathrm{d}\mathbb{P}_{\boldsymbol{0}}\left(\boldsymbol{y}\right)}>0\right\}  & \leq & \mathbb{E}\left[\exp\left(-\left(1-\alpha\right)\frac{\sum_{i=0}^{M-1}e\left(\mathcal{S}_{i},\mathcal{S}_{i}^{\text{c}}\right)}{2}D_{\alpha}^{\min}\right)\right]\nonumber \\
 & = & \sum_{l=0}^{N_{\bm{w}}}{N_{\bm{w}} \choose l}p_{\mathrm{obs}}^{l}\left(1-p_{\mathrm{obs}}\right)^{N_{\bm{w}}-l}\exp\left\{ -l\cdot\left(1-\alpha\right)D_{\alpha}^{\min}\right\} \nonumber \\
 & = & \left(1-p_{\mathrm{obs}}\right)^{N_{\bm{w}}}\sum_{l=0}^{N_{\bm{w}}}{N_{\bm{w}} \choose l}\left(\frac{p_{\mathrm{obs}}}{1-p_{\mathrm{obs}}}\right)^{l}\exp\left\{ -l\cdot\left(1-\alpha\right)D_{\alpha}^{\min}\right\} \nonumber \\
 & \overset{(\text{a})}{=} & \left(1-p_{\mathrm{obs}}\right)^{N_{\bm{w}}}\left(1+\frac{p_{\mathrm{obs}}}{1-p_{\mathrm{obs}}}\exp\left\{ -\left(1-\alpha\right)D_{\alpha}^{\min}\right\} \right)^{N_{\bm{w}}}\nonumber \\
 & = & \left(1-p_{\mathrm{obs}}+p_{\mathrm{obs}}\exp\left\{ -\left(1-\alpha\right)D_{\alpha}^{\min}\right\} \right)^{N_{\bm{w}}}\nonumber \\
 & \overset{(\text{b})}{\leq} & \exp\left\{ -N_{\bm{w}}p_{\mathrm{obs}}\left(1-\exp\left(-\left(1-\alpha\right)D_{\alpha}^{\min}\right)\right)\right\} \nonumber \\
 & \overset{(\text{c})}{=} & \exp\left\{ -N_{\bm{w}}p_{\mathrm{obs}}\left(1-\alpha\right)\mathsf{Hel}_{\alpha}^{\min}\right\} \nonumber \\
 & = & \exp\left\{ -p_{\mathrm{obs}}\left(1-\alpha\right)\mathsf{Hel}_{\alpha}^{\min}\frac{1}{2}\left(n^{2}-\sum_{i=0}^{M-1}n_{i}^{2}\right)\right\} ,\label{eq:prob-error}
\end{eqnarray}
where (a) follows from the binomial theorem, (b) relies on the elementary
inequality $1-x\leq e^{-x}$, (c) comes from the definition (\ref{eq:DefnRenyi-min}),
and the last line follows since 
\[
N_{\bm{w}}=\frac{1}{2}\sum_{i=0}^{M-1}\left|\mathcal{S}_{i}\right|\left(n-\left|\mathcal{S}_{i}\right|\right)=\frac{1}{2}\sum_{i=0}^{M-1}n_{i}\left(n-n_{i}\right)=\frac{1}{2}\left(n^{2}-\sum_{i=0}^{M-1}n_{i}^{2}\right).
\]

It remains to control $\sum_{i=0}^{M-1}n_{i}^{2}$. Recognize that
the input is unique only up to global offset, that is, for any $l$,
the inputs $\boldsymbol{w}$ and $\boldsymbol{w}-l\cdot{\bf 1}$ result
in the same pairwise inputs $\left[w_{i}-w_{j}\right]_{1\leq i,j\leq n}$.
Therefore, we assume without loss of generality that\footnote{Otherwise, if $n_{i}=\max\left\{ n_{1},n_{2},\cdots,n_{M-1}\right\} $
instead, we can always enforce a global shift $i$ on $\boldsymbol{w}$
to yield $\boldsymbol{w}-i{\bf 1}$ in order to satisfy this condition
without affecting the output distribution. } 
\begin{equation}
k=n_{0}\geq\max\left\{ n_{1},n_{2},\cdots,n_{M-1}\right\} .\label{eq:defn-k}
\end{equation}
Letting $\rho:=\left\lfloor \frac{n}{k}\right\rfloor $, we claim
that $\sum_{i=0}^{M-1}n_{i}^{2}$ under the constraint $n_{i}\leq k$
is maximized by the configuration 
\[
\begin{cases}
n_{0}=n_{1}=\cdots=n_{\rho-1}=k,\\
n_{\rho}=n-k\rho,\\
n_{\rho+1}=\cdots=n_{M-1}=0,
\end{cases}
\]
which we will prove by contradiction. Without loss of generality,
suppose that the maximizing solution is $n_{0}\geq n_{1}\geq\cdots\geq n_{M-1}$,
and denote by $\tilde{\rho}$ the smallest index such that $n_{\tilde{\rho}}\leq k-1$.
If $\tilde{\rho}\leq\rho-1$, then by replacing $\left(n_{\tilde{\rho}},n_{\tilde{\rho}+1}\right)$
with $\left(n_{\tilde{\rho}}+1,n_{\tilde{\rho}+1}-1\right)$, we obtain
a strictly better feasible solution since
\begin{eqnarray*}
\left(n_{\tilde{\rho}}+1\right)^{2}+\left(n_{\tilde{\rho}+1}-1\right)^{2} & = & n_{\tilde{\rho}}^{2}+n_{\tilde{\rho}+1}^{2}+2\left(n_{\tilde{\rho}}-n_{\tilde{\rho}+1}\right)+2>n_{\tilde{\rho}}^{2}+n_{\tilde{\rho}+1}^{2}.
\end{eqnarray*}
This results in contradiction, and hence $\tilde{\rho}=\rho$. Similarly,
we cannot have $n_{\rho}<n-k\rho$, since replacing $\left(n_{\rho},n_{\rho+1}\right)$
with $\left(n_{\rho}+1,n_{\rho+1}-1\right)$ leads to a strictly better
solution. Consequently, for all $\left\{ n_{i}:0\leq i<M\right\} $
satisfying (\ref{eq:defn-k}), one has 
\begin{equation}
\sum_{i=0}^{M-1}n_{i}^{2}\leq\left\lfloor \frac{n}{k}\right\rfloor \cdot k^{2}+\left(n-k\left\lfloor \frac{n}{k}\right\rfloor \right)^{2},\label{eq:UB-sum-square}
\end{equation}
leaving us two cases below to deal with.

\textbf{Case 1}. Suppose that $k\leq n/2$. The inequality $n-k\left\lfloor \frac{n}{k}\right\rfloor \leq k$
leads to
\[
\sum_{i=0}^{M-1}n_{i}^{2}\leq\left\lfloor \frac{n}{k}\right\rfloor \cdot k^{2}+\left(n-k\left\lfloor \frac{n}{k}\right\rfloor \right)k=nk.
\]
This combined with (\ref{eq:prob-error}) yields
\begin{eqnarray*}
\mathbb{P}_{\boldsymbol{0}}\left\{ \log\frac{\mathrm{d}\mathbb{P}_{\boldsymbol{w}}\left(\boldsymbol{y}\right)}{\mathrm{d}\mathbb{P}_{\boldsymbol{0}}\left(\boldsymbol{y}\right)}>0\right\}  & \leq & \exp\left(-\frac{p_{\mathrm{obs}}\left(n^{2}-nk\right)}{2}\left(1-\alpha\right)\mathsf{Hel}_{\alpha}^{\min}\right).
\end{eqnarray*}
Employing the union bound over $\mathcal{H}_{k}$ we obtain
\begin{eqnarray*}
P_{\mathrm{e},\mathcal{H}_{k}} & \leq & {n \choose k}\left(M-1\right)^{n-k}\exp\left(-\frac{p_{\mathrm{obs}}\left(n^{2}-nk\right)}{2}\left(1-\alpha\right)\mathsf{Hel}_{\alpha}^{\min}\right)\\
 & = & \exp\left(\log{n \choose k}+(n-k)\log\left(M-1\right)-\frac{p_{\mathrm{obs}}\left(n^{2}-nk\right)}{2}\left(1-\alpha\right)\mathsf{Hel}_{\alpha}^{\min}\right).
\end{eqnarray*}
Under the assumption (\ref{eq:Achievability-Hel}), one has 
\begin{eqnarray*}
\left(1-\alpha\right)\mathsf{Hel}_{\alpha}^{\min}\cdot p_{\mathrm{obs}}n & \geq & \log n+2\log\left(M-1\right)
\end{eqnarray*}
for some $0<\alpha<1$, which further gives
\begin{eqnarray*}
\frac{p_{\mathrm{obs}}\left(n^{2}-nk\right)}{2}\left(1-\alpha\right)\mathsf{Hel}_{\alpha}^{\min} & \geq & \frac{\left(n-k\right)\log n}{2}+(n-k)\log\left(M-1\right).
\end{eqnarray*}
Putting the above computation together yields
\begin{eqnarray}
P_{\mathsf{e},\mathcal{H}_{k}} & \leq & \exp\left(\log{n \choose k}-\frac{\left(n-k\right)\log n}{2}\right)\overset{(\text{i})}{\leq}2^{n}\cdot n^{-\frac{1}{2}\left(n-k\right)}\text{ }\overset{(\text{ii})}{\leq}\text{ }2^{n}\cdot n^{-\frac{1}{4}n}\nonumber \\
 & \leq & C_{1}n^{-c_{1}n}\label{eq:Pe-small-k}
\end{eqnarray}
for some universal constants $C_{1},c_{1}>0$, where (i) uses the
fact that ${n \choose k}\leq2^{n}$, (ii) holds since $k\leq n/2$,
and the last inequality follows since $2^{n}\ll n^{-\Theta\left(n\right)}$.
This approaches zero (super)-exponentially fast.

\textbf{Case 2}. We now move on to the case where $k>n/2$. In this
regime one has $\left\lfloor \frac{n}{k}\right\rfloor =1$, and thus
(\ref{eq:UB-sum-square}) gives 
\[
\sum\nolimits{}_{i=0}^{M-1}n_{i}^{2}\leq k^{2}+\left(n-k\right)^{2}=n^{2}-2k\left(n-k\right),
\]
This taken collectively with (\ref{eq:prob-error}) implies that
\begin{eqnarray}
\mathbb{P}_{\boldsymbol{0}}\left\{ \log\frac{\mathrm{d}\mathbb{P}_{\boldsymbol{w}}\left(\boldsymbol{y}\right)}{\mathrm{d}\mathbb{P}_{\boldsymbol{0}}\left(\boldsymbol{y}\right)}>0\right\}  & \leq & \exp\left(-p_{\mathrm{obs}}\left(1-\alpha\right)k\left(n-k\right)\mathsf{Hel}_{\alpha}^{\min}\right).
\end{eqnarray}
Apply the union bound over $\mathcal{H}_{k}$ to deduce that
\begin{eqnarray}
P_{\mathsf{e},\mathcal{H}_{k}} & \leq & {n \choose k}\left(M-1\right)^{n-k}\exp\left(-p_{\mathrm{obs}}\left(1-\alpha\right)k\left(n-k\right)\mathsf{Hel}_{\alpha}^{\min}\right)\nonumber \\
 & = & \exp\left(\log{n \choose k}+(n-k)\log\left(M-1\right)-p_{\mathrm{obs}}\left(1-\alpha\right)k\left(n-k\right)\mathsf{Hel}_{\alpha}^{\min}\right).\label{eq:Perror-large-k}
\end{eqnarray}
For any constant $\delta>0$, if the minimum Hellinger divergence
obeys
\begin{eqnarray*}
\left(1-\alpha\right)\mathsf{Hel}_{\alpha}^{\min}\cdot p_{\mathrm{obs}}n & \geq & \left(1+\delta\right)\log\left(2n\right)+2\log\left(M-1\right)
\end{eqnarray*}
for some $0<\alpha<1$, then in the regime where $k>n/2$ one has
\begin{eqnarray*}
p_{\mathrm{obs}}k\left(n-k\right)\left(1-\alpha\right)\mathsf{Hel}_{\alpha}^{\min} & \geq & \frac{\left(1+\delta\right)k\left(n-k\right)\log\left(2n\right)}{n}+\left(n-k\right)\log\left(M-1\right).
\end{eqnarray*}
Substitution into (\ref{eq:Perror-large-k}) gives
\begin{eqnarray*}
P_{\mathsf{e},\mathcal{H}_{k}} & \leq & \exp\left(\log{n \choose k}-\frac{\left(1+\delta\right)k\left(n-k\right)}{n}\log\left(2n\right)\right),
\end{eqnarray*}
which can be further divided into two cases. 

\begin{enumerate}[(i)] 

\item If $k/n>1-\delta/4$ and $k/n>1/2$, then the error probability
is bounded by
\begin{eqnarray*}
P_{\mathsf{e},\mathcal{H}_{k}} & \leq & \exp\left(\left(n-k\right)\log\left(2n\right)-\frac{\left(1+\delta\right)k}{n}\left(n-k\right)\log(2n)\right)\\
 & = & \exp\left(\left(n-k\right)\log\left(2n\right)\cdot\left(1-\left(1+\delta\right)k/n\right)\right)\\
 & \leq & \exp\left(\left(n-k\right)\log\left(2n\right)\cdot\left(1-\left(1+\delta\right)\max\left\{ 1-\frac{\delta}{4},\text{ }\frac{1}{2}\right\} \right)\right)\\
 & \leq & \exp\left(-\tilde{\delta}\left(n-k\right)\log\left(2n\right)\right),
\end{eqnarray*}
where $\tilde{\delta}:=\max\left\{ \frac{3}{4}\delta-\frac{1}{4}\delta^{2},\text{ }\frac{\delta-1}{2}\right\} $.

\item If $\frac{k}{n}=1-\tau$ for some $\frac{\delta}{4}\leq\tau\leq\frac{1}{2}$,
then 
\begin{eqnarray}
P_{\mathsf{e},\mathcal{H}_{k}} & \leq & \exp\left(nH\left(\tau\right)-\left(1+\delta\right)n\left(1-\tau\right)\tau\log\left(2n\right)\right)\label{eq:case2-i}\\
 & \leq & \exp\left(-n\left(\left(1-\tau\right)\tau\log\left(2n\right)-H\left(\tau\right)\right)\right)\nonumber \\
 & \leq & C_{2}\exp\left(-c_{2}\delta n\log n\right).
\end{eqnarray}
for some universal constants $c_{2},C_{2}>0$, where (\ref{eq:case2-i})
makes use of the fact \cite[Example 11.1.3]{cover2006elements} that
$\frac{1}{n}\log{n \choose k}\leq H(k/n)=H(\tau)$, with $H(\tau)$
denoting the binary entropy function. 

\end{enumerate}

Putting the above inequalities together and applying the union bound
reveal that 
\begin{eqnarray*}
P_{\mathsf{e}} & \leq & \sum_{k=\left\lceil \frac{n}{M}\right\rceil }^{n/2}P_{\mathsf{e},\mathcal{H}_{k}}+\sum_{k=n/2+1}^{\left(1-\frac{\delta}{4}\right)n}P_{\mathsf{e},\mathcal{H}_{k}}+\sum_{k=\left(1-\frac{\delta}{4}\right)n}^{n-1}P_{\mathsf{e},\mathcal{H}_{k}}\\
 & \leq & \frac{n}{2}\cdot C_{1}n^{-c_{1}n}+\frac{n}{2}\cdot C_{2}\exp\left(-c_{2}\delta n\log n\right)+\sum_{k=\left(1-\frac{\delta}{4}\right)n}^{n-1}\exp\left(-\tilde{\delta}\left(n-k\right)\log\left(2n\right)\right)\\
 & \leq & \frac{n}{2}\cdot C_{1}n^{-c_{1}n}+\frac{n}{2}\cdot C_{2}\exp\left(-c_{2}\delta n\log n\right)+\frac{1}{\left(2n\right)^{\tilde{\delta}}}\frac{1}{1-\left(2n\right)^{-\tilde{\delta}}}\\
 & \leq & C_{0}e^{-c_{0}\delta n\log n}+\frac{1}{\left(2n\right)^{\tilde{\delta}}-1}.
\end{eqnarray*}
with $c_{0},C_{0}>0$ denoting some universal constants.

\section{Proof of Theorems \ref{corollary:ConverseHellinger-ERG}(a) and \ref{thm:ConverseM-Degree-KL}\label{sec:Proof-of-Theorem-ConverseM-Degree}}

This section is mainly devoted to proving Theorem \ref{thm:ConverseM-Degree-KL},
which subsumes Theorem \ref{corollary:ConverseHellinger-ERG}(a) as
a special case. Without loss of generality, assume that the minimum
KL divergence can be approached by the following pairs of indices
\begin{eqnarray*}
\mathsf{KL}\left(\mathbb{P}_{1}\hspace{0.2em}\|\hspace{0.2em}\mathbb{P}_{0}\right) & = & \mathsf{KL}^{\min},\\
\mathsf{KL}\left(\mathbb{P}_{l}\hspace{0.2em}\|\hspace{0.2em}\mathbb{P}_{0}\right) & \leq & \left(1+\zeta\right)\mathsf{KL}^{\min},\quad\text{ }\quad2\leq l\leq m^{\mathsf{kl}}\left(\zeta\right),
\end{eqnarray*}
and suppose that both the ground truth and the null hypothesis are
$\boldsymbol{x}=\boldsymbol{x}^{*}=\boldsymbol{0}$. We would like
to ensure that the observation $\boldsymbol{y}$ conditional on $\boldsymbol{x}=\boldsymbol{0}$
is distinguishable from the observation $\boldsymbol{y}$ under any
alternative hypothesis $\boldsymbol{x}\neq\boldsymbol{0}$. 

(1) To begin with, recall the definition 
\[
\mathcal{N}\left(k\cdot\mincut\right):=\left\{ \mathcal{S}\subseteq\mathcal{V}\text{: }e\left(\mathcal{S},\mathcal{S}^{\mathrm{c}}\right)\leq k\cdot\mincut\right\} .
\]
For each vertex set $\mathcal{S}\in\mathcal{N}\left(k\cdot\mincut\right)$,
we generate one representative hypothesis $\boldsymbol{w}$ such that
\[
w_{i}=\begin{cases}
1,\quad & \text{if }i\in\mathcal{S},\\
0,\quad & \text{otherwise}.
\end{cases}
\]
This produces a collection of $|\mathcal{N}\left(k\cdot\mincut\right)|$
distinct alternative hypotheses, denoted by $\mathcal{B}_{k}$. For
each $\boldsymbol{w}\in\mathcal{B}_{k}$, the distributions $\mathbb{P}_{\boldsymbol{w}}$
and $\mathbb{P}_{\boldsymbol{0}}$ disagree only over those locations
residing in the associated cut set, which amounts to at most $k\cdot\mincut$
components. It then follows from the independence assumption of $y_{ij}$
that 
\begin{equation}
\mathsf{KL}\left(\mathbb{P}_{\boldsymbol{w}}\hspace{0.2em}\|\hspace{0.2em}\mathbb{P}_{\boldsymbol{0}}\right)=e\left(\mathcal{S},\mathcal{S}^{\mathrm{c}}\right)\mathsf{KL}^{\min}\leq k\cdot\mincut\cdot\mathsf{KL}^{\min}.\label{eq:KL-ub}
\end{equation}

Suppose that $k_{0}:=\arg\max_{k\geq1}\tau_{k}^{\mathrm{cut}}$ and
fix $0<\epsilon\leq\frac{1}{2}$. Applying the Fano-type inequality
\cite[Equation (2.70)]{tsybakov2009introduction} suggests that if
\begin{eqnarray}
\frac{1}{\left|\mathcal{B}_{k_{0}}\right|}\sum_{\boldsymbol{w}\in\mathcal{B}_{k_{0}}}\mathsf{KL}\left(\mathbb{P}_{\boldsymbol{w}}\hspace{0.2em}\|\hspace{0.2em}\mathbb{P}_{\boldsymbol{0}}\right) & \leq & \left(1-\epsilon\right)\log\Big|\mathcal{N}\left(k_{0}\cdot\mincut\right)\Big|-H\left(\epsilon\right),\label{eq:KL}
\end{eqnarray}
then one necessarily has $\inf_{\psi}P_{\mathrm{e}}\left(\psi\right)\geq\epsilon$.
With (\ref{eq:KL-ub}) and the definition (\ref{eq:DefnExponent})
in mind, we see that (\ref{eq:KL}) would follow from 
\[
\mathsf{KL}^{\min}\cdot k_{0}\mincut\leq\left(1-\epsilon\right)k_{0}\tau^{\mathrm{cut}}-H\left(\epsilon\right),
\]
which can further be ensured if
\[
\mathsf{KL}^{\min}\cdot\mincut\leq\left(1-\epsilon\right)\tau^{\mathrm{cut}}-H\left(\epsilon\right).
\]

(2) Next, suppose that the minimum cut is attained by $\left(\mathcal{S}_{\mathsf{mc}},\mathcal{S}_{\mathsf{mc}}^{\mathrm{c}}\right)$.
Consider another class $\mathcal{C}$ of hypotheses consisting of
$m^{\mathsf{kl}}$ hypotheses. The $l$th candidate $\boldsymbol{w}^{(l)}$
is given by
\[
\forall1\leq l\leq m^{\mathsf{kl}}:\quad w_{i}^{(l)}=\begin{cases}
l,\quad & \text{if }i\in\mathcal{S}_{\mathsf{mc}},\\
0,\quad & \text{otherwise},
\end{cases}
\]
all of which obey 
\begin{equation}
\mathsf{KL}\left(\boldsymbol{w}^{(l)}\|\boldsymbol{0}\right)\leq\left(1+\zeta\right)\mincut\cdot\mathsf{KL}^{\min}.\label{eq:KL-mincut}
\end{equation}
Applying the Fano inequality once again, we get $\inf_{\psi}P_{\mathrm{e}}\left(\psi\right)\geq\epsilon$
as long as 
\begin{equation}
\frac{1}{m^{\mathsf{kl}}}\sum_{l=1}^{m^{\mathsf{kl}}}\mathsf{KL}\left(\boldsymbol{w}^{(l)}\|\boldsymbol{0}\right)\leq\left(1-\epsilon\right)\log m^{\mathsf{kl}}-H\left(\epsilon\right).\label{eq:H-KL}
\end{equation}
Observe from (\ref{eq:KL-mincut}) that (\ref{eq:H-KL}) can be ensured
under the condition
\[
\mathsf{KL}^{\min}\cdot\left(1+\zeta\right)\mincut\leq\left(1-\epsilon\right)\log m^{\mathsf{kl}}-H\left(\epsilon\right).
\]

(3) Finally, consider the set of configurations with \emph{binary}
alphabet having support size $1$, i.e.~the following $M-1$ classes
of hypotheses 
\begin{equation}
\mathcal{H}_{l}:=\left\{ \boldsymbol{x}\mid\left\Vert \boldsymbol{x}\right\Vert _{0}=1,\text{ }\boldsymbol{x}\in\left\{ 0,l\right\} ^{n}\right\} ,\quad1\leq l<M,\label{eq:DefnClass}
\end{equation}
where each class $\mathcal{H}_{l}$ is composed of $n$ distinct alternative
hypotheses. This guarantees that for any $\boldsymbol{w}\in\mathcal{H}_{l}$,
the distribution of $\left\{ y_{ij}\right\} \mid\boldsymbol{x}=\boldsymbol{0}$
differ from that of $\left\{ y_{ij}\right\} \mid\boldsymbol{x}=\boldsymbol{w}$
in at most $d_{\max}$ locations.

For any hypothesis class $\mathcal{H}$ and any $0<\epsilon<\frac{1}{2}$,
the Fano-type inequality \cite[Equation (2.70)]{tsybakov2009introduction}
suggests that $\inf_{\psi}P_{\mathsf{e}}\left(\psi\right)\geq\epsilon$
occurs as long as 
\begin{eqnarray}
\frac{1}{\left|\mathcal{H}\right|}\sum_{\boldsymbol{w}\in\mathcal{H}}\mathsf{KL}\left(\mathbb{P}_{\boldsymbol{w}}\hspace{0.2em}\|\hspace{0.2em}\mathbb{P}_{\boldsymbol{0}}\right) & \leq & \frac{\left|\mathcal{H}\right|+1}{\left|\mathcal{H}\right|}\left\{ \left(1-\epsilon\right)\log\left|\mathcal{H}\right|-H\left(\epsilon\right)\right\} .\label{eq:Fano-KL}
\end{eqnarray}
By picking $\mathcal{H}$ to be $\mathcal{H}=\bigcup_{l=1}^{m^{\mathsf{kl}}}\mathcal{H}_{l}$---which
obeys $\left|\mathcal{H}\right|=m^{\mathsf{kl}}n$---we can see from
definition of $m^{\mathsf{kl}}$ that
\[
\frac{1}{\left|\mathcal{H}\right|}\sum_{\boldsymbol{w}\in\mathcal{H}}\mathsf{KL}\left(\mathbb{P}_{\boldsymbol{w}}\hspace{0.2em}\|\hspace{0.2em}\mathbb{P}_{\boldsymbol{0}}\right)\leq\left(1+\zeta\right)d_{\max}\mathsf{KL}^{\min}
\]
and hence (\ref{eq:Fano-KL}) would hold under the condition
\begin{eqnarray}
\mathsf{KL}^{\min}\cdot\left(1+\zeta\right)d_{\max} & \leq & \left(1-\epsilon\right)\left(\log n+\log m^{\mathsf{kl}}\right)-H\left(\epsilon\right).\label{eq:KL-condition-dmax}
\end{eqnarray}

Putting the above results together establishes Theorem \ref{thm:ConverseM-Degree-KL}. 

We now specialize to Theorem \ref{corollary:ConverseHellinger-ERG}(a),
which follows immediately from (\ref{eq:KL-condition-dmax}). Specifically,
for an Erd\H{o}s\textendash Rényi graph $\mathcal{G}\sim\mathcal{G}_{n,p_{\mathrm{obs}}}$,
the Chernoff-type inequality \cite[Theorems 4.4-4.5]{mitzenmacher2005probability}
indicates that for any $\epsilon>0$, 
\begin{equation}
d_{\max}\leq\left(1+\epsilon\right)np_{\mathrm{obs}}\label{eq:d_max_G}
\end{equation}
holds with probability exceeding $1-n^{-10}$, provided that $p_{\mathrm{obs}}>\frac{c\log n}{n}$
for some sufficiently large constant $c>0$. Substitution into (\ref{eq:KL-condition-dmax})
immediately leads to Theorem \ref{corollary:ConverseHellinger-ERG}(a).

\section{Proof of Theorems \ref{corollary:ConverseHellinger-ERG}(b) and \ref{thm:ConverseHellinger}\label{sec:Proof-of-Theorem-Converse-ERG}}

We start with the proof of Theorem \ref{thm:ConverseHellinger}, which
accounts for a much broader context than Theorem \ref{corollary:ConverseHellinger-ERG}(b).
In similar spirit of Theorem \ref{thm:ConverseM-Degree-KL}, assume
that the minimum Hellinger divergence is achieved by the following
pair of indices
\begin{eqnarray*}
\mathsf{Hel}_{\alpha}\left(\mathbb{P}_{1}\hspace{0.2em}\|\hspace{0.2em}\mathbb{P}_{0}\right) & = & \mathsf{Hel}_{\alpha}^{\min},
\end{eqnarray*}
and let the ground truth and the null hypothesis be $\boldsymbol{x}=\boldsymbol{x}^{*}=\boldsymbol{0}$. 

For any class $\mathcal{H}$ of alternative hypotheses, the minimax
lower bound \cite[Theorem II.1]{guntuboyina2011lower} suggests that
every $f$-divergence $D_{f}\left(\cdot\right)$ obeys
\[
\sum_{\boldsymbol{w}\in\mathcal{H}}D_{f}\left(\mathbb{P}_{\boldsymbol{w}}\hspace{0.2em}\|\hspace{0.2em}\mathbb{P}_{\boldsymbol{0}}\right)\geq f\left(\left|\mathcal{H}\right|\left(1-P_{\mathrm{e}}\right)\right)+\left(\left|\mathcal{H}\right|-1\right)f\left(\frac{\left|\mathcal{H}\right|P_{\mathrm{e}}}{\left|\mathcal{H}\right|-1}\right),
\]
where $\mathbb{P}_{\boldsymbol{w}}$ is the probability measure of
$\left[y_{ij}\right]_{(i,j)\in\mathcal{E}}$ conditional on $\boldsymbol{x}=\boldsymbol{w}$.
When specialized to the Hellinger divergence of order $\alpha$ (which
corresponds to $f(x)=\frac{1}{1-\alpha}\left(1-x^{\alpha}\right)$),
the above inequality leads to
\begin{eqnarray*}
\left(1-\alpha\right)\sum_{\boldsymbol{w}\in\mathcal{H}}\mathsf{Hel}_{\alpha}\left(\mathbb{P}_{\boldsymbol{w}}\hspace{0.2em}\|\hspace{0.2em}\mathbb{P}_{\boldsymbol{0}}\right) & \geq & 1-\left|\mathcal{H}\right|^{\alpha}\left(1-P_{\mathrm{e}}\right)^{\alpha}+\left(\left|\mathcal{H}\right|-1\right)\left\{ 1-\left(\frac{\left|\mathcal{H}\right|P_{\mathrm{e}}}{\left|\mathcal{H}\right|-1}\right)^{\alpha}\right\} \\
 & = & \left|\mathcal{H}\right|-\left|\mathcal{H}\right|^{\alpha}\left(1-P_{\mathrm{e}}\right)^{\alpha}-\left(\left|\mathcal{H}\right|-1\right)^{1-\alpha}\left|\mathcal{H}\right|^{\alpha}P_{\mathrm{e}}^{\alpha}\\
 & \geq & \left|\mathcal{H}\right|-\left|\mathcal{H}\right|^{\alpha}-\left|\mathcal{H}\right|P_{\mathrm{e}}^{\alpha}.
\end{eqnarray*}
Put another way,
\begin{eqnarray}
P_{\mathrm{e}}^{\alpha} & \geq & 1-\frac{\left(1-\alpha\right)\sum_{\boldsymbol{w}\in\mathcal{H}}\mathsf{Hel}_{\alpha}\left(\mathbb{P}_{\boldsymbol{w}}\hspace{0.2em}\|\hspace{0.2em}\mathbb{P}_{\boldsymbol{0}}\right)}{\left|\mathcal{H}\right|}-\frac{1}{\left|\mathcal{H}\right|^{1-\alpha}}.\label{eq:pe_Hel}
\end{eqnarray}

Notably, for any product measures $P^{n}=P\times P\times\cdots\times P$
and $Q^{n}=Q\times Q\times\cdots\times Q$, the Hellinger divergence
satisfies the decoupling equality
\begin{eqnarray}
1-\left(1-\alpha\right)\mathsf{Hel}_{\alpha}\left(P^{n}\hspace{0.2em}\|\hspace{0.2em}Q^{n}\right) & = & {\displaystyle \int}\left(\mathrm{d}P^{n}\right)^{\alpha}\left(\mathrm{d}Q^{n}\right)^{1-\alpha}=\left({\displaystyle \int}\left(\mathrm{d}P\right)^{\alpha}\left(\mathrm{d}Q\right)^{1-\alpha}\right)^{n}\nonumber \\
 & = & \left(1-\left(1-\alpha\right)\mathsf{Hel}_{\alpha}\left(P\hspace{0.2em}\|\hspace{0.2em}Q\right)\right)^{n}.\label{eq:decoupling}
\end{eqnarray}
If all hypotheses $\boldsymbol{w}\in\mathcal{H}$ satisfy $\left\Vert \boldsymbol{w}-\boldsymbol{x}^{*}\right\Vert _{0}\leq k$,
then $\mathbb{P}_{\boldsymbol{w}}$ and $\mathbb{P}_{\boldsymbol{0}}$
are different over at most $kd_{\max}$ locations. Thus, if the divergence
measure at each of these locations is identical and equal to some
given value $h_{\alpha}$, then it follows from the independence assumption
of $y_{ij}$ that
\begin{eqnarray*}
1-\left(1-\alpha\right)\mathsf{Hel}_{\alpha}\left(\mathbb{P}_{\boldsymbol{w}}\hspace{0.2em}\|\hspace{0.2em}\mathbb{P}_{\boldsymbol{0}}\right) & \geq & \Big(1-\left(1-\alpha\right)h_{\alpha}\Big)^{kd_{\max}}.
\end{eqnarray*}
This together with (\ref{eq:pe_Hel}) suggests that: as long as $\left(1-\alpha\right)h_{\alpha}\leq\frac{1}{2}$,
one necessarily has 
\begin{eqnarray*}
P_{\mathrm{e}}^{\alpha} & \geq & \Big(1-\left(1-\alpha\right)h_{\alpha}\Big)^{kd_{\max}}-\left|\mathcal{H}\right|^{-\left(1-\alpha\right)}\\
 & \geq & e{}^{-\left(\left(1-\alpha\right)h_{\alpha}+\left(1-\alpha\right)^{2}h_{\alpha}^{2}\right)kd_{\max}}-\left|\mathcal{H}\right|^{-\left(1-\alpha\right)},
\end{eqnarray*}
which results from the inequality that $\log\left(1-x\right)\geq-x-x^{2}$
for any $0\leq x\leq1/2$.

As a consequence, if the following condition holds
\[
e{}^{-\left(\left(1-\alpha\right)h_{\alpha}+\left(1-\alpha\right)^{2}h_{\alpha}^{2}\right)kd_{\max}}-\left|\mathcal{H}\right|^{-\left(1-\alpha\right)}\text{ }\geq\text{ }\xi^{\alpha}
\]
or, equivalently,
\begin{equation}
\left(1-\alpha\right)h_{\alpha}\left[1+\left(1-\alpha\right)h_{\alpha}\right]\leq-\frac{\log\left(\xi^{\alpha}+\left|\mathcal{H}\right|^{-\left(1-\alpha\right)}\right)}{kd_{\max}},\label{eq:Hel_UB}
\end{equation}
then the minimax probability of error must exceed $\inf_{\psi}P_{\mathrm{e}}\geq\xi$.
Solving the quadratic inequality (\ref{eq:Hel_UB}) and utilizing
the fact $\sqrt{1+4x}-1\geq2x-4x^{2}$ $(x\geq0)$, we see that (\ref{eq:Hel_UB})
would follow as long as
\begin{equation}
\left(1-\alpha\right)h_{\alpha}\leq-\frac{\log\left(\xi^{\alpha}+\left|\mathcal{H}\right|^{-\left(1-\alpha\right)}\right)}{kd_{\max}}-\frac{2\log^{2}\left(\xi^{\alpha}+\left|\mathcal{H}\right|^{-\left(1-\alpha\right)}\right)}{\left(kd_{\max}\right)^{2}}.\label{eq:Hel-condition}
\end{equation}

Finally, setting $\xi=n^{-\epsilon}$ and $\mathcal{H}=\mathcal{H}_{1}$
(cf. Definition (\ref{eq:DefnClass})), one has $\left|\mathcal{H}\right|=n$,
$k=1$ and $h_{\alpha}=\mathsf{Hel}_{\alpha}^{\min}$. In the regime
where 
\[
\epsilon\leq\frac{1-\alpha}{\alpha}\quad\Longleftrightarrow\quad\alpha\leq\frac{1}{1+\epsilon},
\]
we have
\[
\xi^{\alpha}=n^{-\epsilon\alpha}\geq n^{-\left(1-\alpha\right)}=\left|\mathcal{H}\right|^{-\left(1-\alpha\right)}.
\]
The condition (\ref{eq:Hel-condition}) is then guaranteed to hold
if

\begin{equation}
\left(1-\alpha\right)\mathsf{Hel}_{\alpha}^{\min}\leq-\frac{\log\left(2\xi^{\alpha}\right)}{d_{\max}}-\frac{2\log^{2}\left(2\xi^{\alpha}\right)}{d_{\max}^{2}},\label{eq:H-ub}
\end{equation}
which would follow if 
\begin{equation}
\Longleftarrow\quad\left(1-\alpha\right)\mathsf{Hel}_{\alpha}^{\min}\leq\frac{\epsilon\alpha\log n-\log2}{d_{\max}}-\frac{2\left[\epsilon\alpha\log n-\log2\right]^{2}}{d_{\max}^{2}},\label{eq:H-ub-2}
\end{equation}
where we have used $\xi^{\alpha}=n^{-\epsilon\alpha}$. Besides, the
condition $\left(1-\alpha\right)h_{\alpha}\leq\frac{1}{2}$ becomes
$\left(1-\alpha\right)\mathsf{Hel}_{\alpha}^{\min}\leq\frac{1}{2}$,
which can be ensured under (\ref{eq:H-ub-2}) together with the condition
\[
\frac{\epsilon\alpha\log n}{d_{\max}}\leq\frac{1}{2}
\]
as claimed.

Finally, recall that when $\mathcal{G}\sim\mathcal{G}_{n,p_{\mathrm{obs}}}$,
one has $d_{\max}\leq(1+\epsilon)p_{\mathrm{obs}}n$ as long as $np_{\mathrm{obs}}/\log n$
is sufficiently large. Plugging this into the preceding bound completes
the proof of Theorem \ref{corollary:ConverseHellinger-ERG}(b).

\section{Proof of Theorem \ref{thm:AchieveM-Deg}\label{sec:Proof-of-Theorem-AchieveM-Deg}}

Note that $\psi_{\mathrm{ml}}$ distinguishes the null hypothesis
$\boldsymbol{x}=\boldsymbol{x}^{*}=\left\{ x_{i}^{*}\right\} _{1\leq i\leq n}$
from the alternative hypothesis $\boldsymbol{x}=\boldsymbol{w}=\left\{ w_{i}\right\} _{1\leq i\leq n}$
only based on those components ($i,j$) where
\[
x_{i}^{*}-x_{j}^{*}\neq w_{i}-w_{j},
\]
and its recovery capability depends only on the distinction of output
distributions over these locations. For ease of presentation, we will
suppose in the rest of the proof that both the ground truth and the
null hypothesis are $\boldsymbol{x}=\boldsymbol{0}$, but note that
the proof carries over to all other ground truth values. 

Let's divide the set of all alternative hypotheses into several classes
$\mathcal{A}_{k}$ so that for each $k\geq1$,
\begin{equation}
\mathcal{A}_{k}:=\left\{ \boldsymbol{w}\neq{\bf 0}:\text{ }\left|\mathcal{E}\cap\mathrm{supp}\left(\boldsymbol{w}\ominus\boldsymbol{w}\right)\right|<k\cdot\mincut\right\} ,\label{eq:Defn-Ak}
\end{equation}
where we employ the notation 
\[
\mathcal{E}\cap\mathrm{supp}\left(\boldsymbol{w}\ominus\boldsymbol{w}\right):=\left\{ \left(i,j\right)\in\mathcal{E}\mid w_{i}-w_{j}\neq0,\text{ }\text{ }i>j\right\} .
\]
Apparently, any cut set cannot contain more than $n^{2}$ edges, and
hence $\mathcal{A}_{k}=\emptyset$ for any $k\geq n^{2}/\mincut$.
For any $\boldsymbol{w}\in\mathcal{A}_{k}$, if we let $\mathcal{S}_{l}$
represent the set of vertices taking the value $l$, then by definition
of $\mathcal{A}_{k}$ one has 
\begin{equation}
\sum_{l=0}^{M-1}e\left(\mathcal{S}_{l},\mathcal{S}_{l}^{\mathrm{c}}\right)=2\left|\mathcal{E}\cap\mathrm{supp}\left(\boldsymbol{w}\ominus\boldsymbol{w}\right)\right|<2k\cdot\mincut.\label{eq:feasibility}
\end{equation}
On the other hand, consider the case where $k=1$. All $\boldsymbol{w}\in\mathcal{A}_{1}$
are equivalent to ${\bf 0}$ up to some global offset. This is because
for any non-trivial cut $(\mathcal{S}_{l},\mathcal{S}_{l}^{\mathrm{c}})$,
one must have $\left|\mathcal{E}\cap\mathrm{supp}\left(\boldsymbol{w}\ominus\boldsymbol{w}\right)\right|\geq e\left(\mathcal{S}_{l},\mathcal{S}_{l}^{\mathrm{c}}\right)\geq\mincut$,
which violates the feasibility constraint $\left|\mathcal{E}\cap\mathrm{supp}\left(\boldsymbol{w}\ominus\boldsymbol{w}\right)\right|<\mincut$.
In the following lemma, we link the cardinality of each hypothesis
class $\mathcal{A}_{k}$ with the cut-homogeneity exponent $\tau^{\mathrm{cut}}$
defined in (\ref{eq:DefnExponent}).

\begin{lem}\label{lemma:cut-size-alpha-beta}For any $k\leq n^{2}/\mincut$,
the hypothesis class $\mathcal{A}_{k}$ defined in (\ref{eq:Defn-Ak})
satisfies
\begin{eqnarray}
\frac{\log\left|\mathcal{A}_{k}\right|}{k} & < & 2\log M+2\log\left(2k\cdot\mincut\right)+4\tau^{\mathrm{cut}}\nonumber \\
 & \leq & 2\log M+4\log\left(2n\right)+4\tau^{\mathrm{cut}}.\label{eq:Cut-set-size-alpha-beta}
\end{eqnarray}
\end{lem}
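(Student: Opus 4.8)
The plan is to bound $|\mathcal A_k|$ by a direct counting argument that encodes each hypothesis $\boldsymbol w\in\mathcal A_k$ through the partition $\{\mathcal S_l\}_{0\le l<M}$ it induces, where $\mathcal S_l$ is the set of vertices labeled $l$; write $c_l:=e(\mathcal S_l,\mathcal S_l^{\mathrm c})$. I would first record three facts. By (\ref{eq:feasibility}) we have $\sum_l c_l<2k\cdot\mincut$. Next, if $\boldsymbol w$ has $r\ge2$ non-empty parts, then each is a proper non-empty subset and hence $c_l\ge\mincut$, so the cut budget forces $r\le 2k-1$ (the case $r=1$ gives only the $M-1$ constant labelings, all equivalent to $\boldsymbol 0$, contributing at most $M-1$). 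Finally each part satisfies $c_l\le\sum_{l'}c_{l'}<2k\cdot\mincut$, so $\mathcal S_l\in\mathcal N(c_l)$.

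The engine of the proof is the definition of the cut-homogeneity exponent in (\ref{eq:DefnExponent}): since $\tau^{\mathrm{cut}}=\max_{k'>0}\tfrac1{k'}\log|\mathcal N(k'\cdot\mincut)|$, every integer $j\ge1$ obeys $|\mathcal N(j\cdot\mincut)|\le e^{j\tau^{\mathrm{cut}}}$, and therefore $|\mathcal N(c)|\le e^{\lceil c/\mincut\rceil\,\tau^{\mathrm{cut}}}$ for any $c\ge\mincut$. I would then count the hypotheses with exactly $r$ parts by specifying three pieces of data: (i) the $r$-element set of labels used, at most $\binom{M}{r}\le M^{r}$ choices; (ii) the cut-size vector $(c_1,\dots,c_r)$ listed in the order of the chosen labels, each $c_i\in\{\mincut,\dots,2k\cdot\mincut-1\}$, hence at most $(2k\cdot\mincut)^r$ choices; and (iii) for a fixed cut-size vector, the parts themselves, bounded by $\prod_i|\mathcal N(c_i)|\le e^{(\sum_i\lceil c_i/\mincut\rceil)\tau^{\mathrm{cut}}}$.

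The decisive step is controlling that last product. Writing $j_i:=\lceil c_i/\mincut\rceil\ge1$, the constraints $\sum_i c_i<2k\cdot\mincut$ and $r\le 2k-1$ give $\sum_i j_i\le\sum_i c_i/\mincut+r<2k+r<4k$, so $\prod_i|\mathcal N(c_i)|<e^{4k\tau^{\mathrm{cut}}}$ uniformly over the cut-size vector. This is where I expect the main obstacle to lie: the naive estimate $\mathcal S_l\in\mathcal N(2k\cdot\mincut)$ applied to each of the $r\approx 2k$ parts would cost $e^{2kr\,\tau^{\mathrm{cut}}}=e^{\Theta(k^2)\tau^{\mathrm{cut}}}$ and lose a full factor of $k$ in the exponent. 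The point is that the \emph{individual} cut sizes sum to only $2k\cdot\mincut$, so the small parts contribute only the small factors $|\mathcal N(c_i)|$; collapsing the whole budget $\sum_i j_i$ into a single application of the $\tau^{\mathrm{cut}}$ bound is what produces the correct linear-in-$k$ exponent, and it is the only place where the global cut budget rather than a per-part estimate must be exploited.

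Assembling the three factors and summing the geometric series over $2\le r\le 2k-1$ would then give
\[
|\mathcal A_k|\;<\;(M-1)+\sum_{r=2}^{2k-1}\big(2k M\,\mincut\big)^{r}e^{4k\tau^{\mathrm{cut}}}\;<\;3\big(2k M\,\mincut\big)^{2k-1}e^{4k\tau^{\mathrm{cut}}},
\]
the dominant contribution coming from $r=2k-1$. Taking logarithms and dividing by $k$ yields $\tfrac1k\log|\mathcal A_k|<(2-\tfrac1k)\log(2kM\mincut)+4\tau^{\mathrm{cut}}+\tfrac{\log 3}{k}$; since $M\ge2$ forces $2kM\mincut\ge4>3$, the negative term $-\tfrac1k\log(2kM\mincut)$ absorbs $\tfrac{\log 3}{k}$, and the first stated bound $\tfrac1k\log|\mathcal A_k|<2\log M+2\log(2k\cdot\mincut)+4\tau^{\mathrm{cut}}$ follows with the strict inequality intact (the slack is exactly the use of $r\le 2k-1$ rather than $2k$). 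The second inequality is then immediate from $k\le n^2/\mincut$, which gives $2k\cdot\mincut\le 2n^2\le(2n)^2$ and hence $2\log(2k\cdot\mincut)\le 4\log(2n)$.
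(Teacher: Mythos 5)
Your proof is correct and follows essentially the same route as the paper's: you encode each $\boldsymbol{w}\in\mathcal{A}_{k}$ by (i) the set of labels used (at most $M^{r}$ with $r\lesssim 2k$ non-empty parts, since each proper non-empty part cuts at least $\mincut$ edges), (ii) the cut-size vector (at most $(2k\cdot\mincut)^{r}$ choices), and (iii) the parts themselves, bounded via $|\mathcal{N}(c)|\leq\exp\left(\lceil c/\mincut\rceil\tau^{\mathrm{cut}}\right)$ with the global budget $\sum_{i}\lceil c_{i}/\mincut\rceil<4k$ giving the factor $e^{4k\tau^{\mathrm{cut}}}$ --- exactly the paper's three-step counting argument and its key use of the aggregate cut budget. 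Your explicit summation over $r$ and separate handling of the $M-1$ constant labelings are minor refinements of the same bound $|\mathcal{A}_{k}|<M^{2k}\left(2k\cdot\mincut\right)^{2k}\exp\left(4k\tau^{\mathrm{cut}}\right)$, and the final passage to $4\log(2n)$ via $k\cdot\mincut\leq n^{2}$ is identical.
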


\begin{proof}See Appendix \ref{sec:Proof-of-Lemma:cut-size-alpha-beta}.\end{proof}

We are now in position to characterize the recovery ability of $\psi_{\mathrm{ml}}$.
Let $\mathbb{P}_{\boldsymbol{w}}\left(\cdot\right)$ denote the measure
given $\boldsymbol{x}=\boldsymbol{w}$. For any $0<\alpha<1$, it
follows from (\ref{eq:Bound-Hel}) that 
\begin{eqnarray}
\mathbb{P}_{\boldsymbol{0}}\left\{ \log\frac{\mathrm{d}\mathbb{P}_{\boldsymbol{w}}\left(\boldsymbol{y}\right)}{\mathrm{d}\mathbb{P}_{\boldsymbol{0}}\left(\boldsymbol{y}\right)}>0\right\}  & \leq & \exp\left(-\left(1-\alpha\right)\frac{\sum_{i=0}^{M-1}e\left(\mathcal{S}_{i},\mathcal{S}_{i}^{\text{c}}\right)}{2}D_{\alpha}^{\min}\right),\label{eq:Bound-Hel-ERG}
\end{eqnarray}
When restricted to the hypotheses in $\mathcal{A}_{k}\backslash\mathcal{A}_{k-1}$
for any $2\leq k\leq n^{2}/\mincut$, we know from the definition
of $\mathcal{A}_{k}$ that 
\[
\left(k-1\right)\mincut\leq\left|\mathcal{E}\cap\mathrm{supp}\left(\boldsymbol{w}\ominus\boldsymbol{w}\right)\right|=\frac{\sum_{i=0}^{M-1}e\left(\mathcal{S}_{i},\mathcal{S}_{i}^{\text{c}}\right)}{2}<k\mincut.
\]
It then follows from the union bound that
\begin{align}
 & \mathbb{P}_{\boldsymbol{0}}\left\{ \exists\boldsymbol{w}\in\mathcal{A}_{k}\backslash\mathcal{A}_{k-1}:\text{ }\log\frac{\mathrm{d}\mathbb{P}_{\boldsymbol{w}}\left(\boldsymbol{y}\right)}{\mathrm{d}\mathbb{P}_{\boldsymbol{0}}\left(\boldsymbol{y}\right)}>0\right\} \leq\left|\mathcal{A}_{k}\right|\exp\left(-\left(1-\alpha\right)\frac{\sum_{i=0}^{M-1}e\left(\mathcal{S}_{i},\mathcal{S}_{i}^{\text{c}}\right)}{2}D_{\alpha}^{\min}\right)\nonumber \\
 & \quad\leq\exp\left(-\left(k-1\right)\left(\left(1-\alpha\right)D_{\alpha}^{\min}\cdot\mincut-\frac{k}{k-1}\cdot\frac{\log\left|\mathcal{A}_{k}\right|}{k}\right)\right)\nonumber \\
 & \quad\leq\exp\left(-\left(k-1\right)\left(\left(1-\alpha\right)D_{\alpha}^{\min}\cdot\mincut-\frac{2\log\left|\mathcal{A}_{k}\right|}{k}\right)\right)\label{eq:UB_Ak}\\
 & \quad\leq\exp\left\{ -\left(k-1\right)\left(\left(1-\alpha\right)D_{\alpha}^{\min}\cdot\mincut-\left(4\log M+8\log\left(2n\right)+8\tau^{\mathrm{cut}}\right)\right)\right\} ,\label{eq:inequality2}
\end{align}
where (\ref{eq:inequality2}) results from Lemma \ref{lemma:cut-size-alpha-beta}.
This suggests that under the condition
\[
\left(1-\alpha\right)D_{\alpha}^{\min}\cdot\mincut\geq\left(\delta+8\right)\log\left(2n\right)+8\tau^{\mathrm{cut}}+4\log M
\]
for some $0<\alpha<1$, one achieves
\begin{eqnarray*}
P_{\mathrm{e}}\left(\psi_{\mathrm{ml}}\right) & \leq & \sum_{k=2}^{n^{2}/\mincut}\mathbb{P}_{\boldsymbol{0}}\left\{ \exists\boldsymbol{w}\in\mathcal{A}_{k}\backslash\mathcal{A}_{k-1}:\text{ }\log\frac{\mathrm{d}\mathbb{P}_{\boldsymbol{w}}\left(\boldsymbol{y}\right)}{\mathrm{d}\mathbb{P}_{\boldsymbol{0}}\left(\boldsymbol{y}\right)}>0\right\} \\
 & \leq & \sum_{k\geq2}\exp\left\{ -\left(k-1\right)\left(\left(1-\alpha\right)D_{\alpha}^{\min}\cdot\mincut-\left(4\log M+8\log\left(2n\right)+8\tau^{\mathrm{cut}}\right)\right)\right\} \\
 & \leq & \sum_{k\geq1}\exp\left(-k\cdot\delta\log\left(2n\right)\right)\\
 & \leq & \frac{1}{\left(2n\right)^{\delta}}\cdot\frac{1}{1-\left(2n\right)^{-\delta}}=\frac{1}{\left(2n\right)^{\delta}-1}.
\end{eqnarray*}

To finish up, recognizing that $D_{\alpha}^{\min}\geq\mathsf{Hel}_{\alpha}^{\min}$
immediately establishes the recovery condition based on $\mathsf{Hel}_{\alpha}^{\min}$.

\section{Proof of Lemma \ref{lemma:alpha_random_graphs}\label{sec:Proof-of-Lemma:alpha_random_graphs}}

(1) Define the \emph{cut-edge degree }of a vertex $v$ to be the number
of edges in $\mathcal{E}\left(\mathcal{S},\mathcal{S}^{\mathrm{c}}\right)$
that $v$ is incident to. Consider any cut $\left(\mathcal{S},\mathcal{S}^{\mathrm{c}}\right)$
with size 
\begin{equation}
e\left(\mathcal{S},\mathcal{S}^{\mathrm{c}}\right)\leq k\cdot\mincut.\label{eq:cut-size-constraint}
\end{equation}
We shall separate all vertices into two types as follows:
\begin{itemize}
\item \emph{Type-1 vertex}: any vertex whose cut-edge degree is at least
$\frac{1}{2}\kappa\rho\cdot\mincut$;
\item \emph{Type-2 vertex}: any vertex whose cut-edge degree is less than
$\frac{1}{2}\kappa\rho\cdot\mincut$.
\end{itemize}
For ease of presentation, we will color all vertices in $\mathcal{S}$
black and all vertices in $\mathcal{S}^{\mathrm{c}}$ white; each
feasible coloring scheme thus corresponds to one valid cut $\left(\mathcal{S},\mathcal{S}^{\mathrm{c}}\right)$
in $\mathcal{N}\left(k\cdot\mincut\right)$. 

\begin{figure}

\centering

\includegraphics[width=0.38\textwidth]{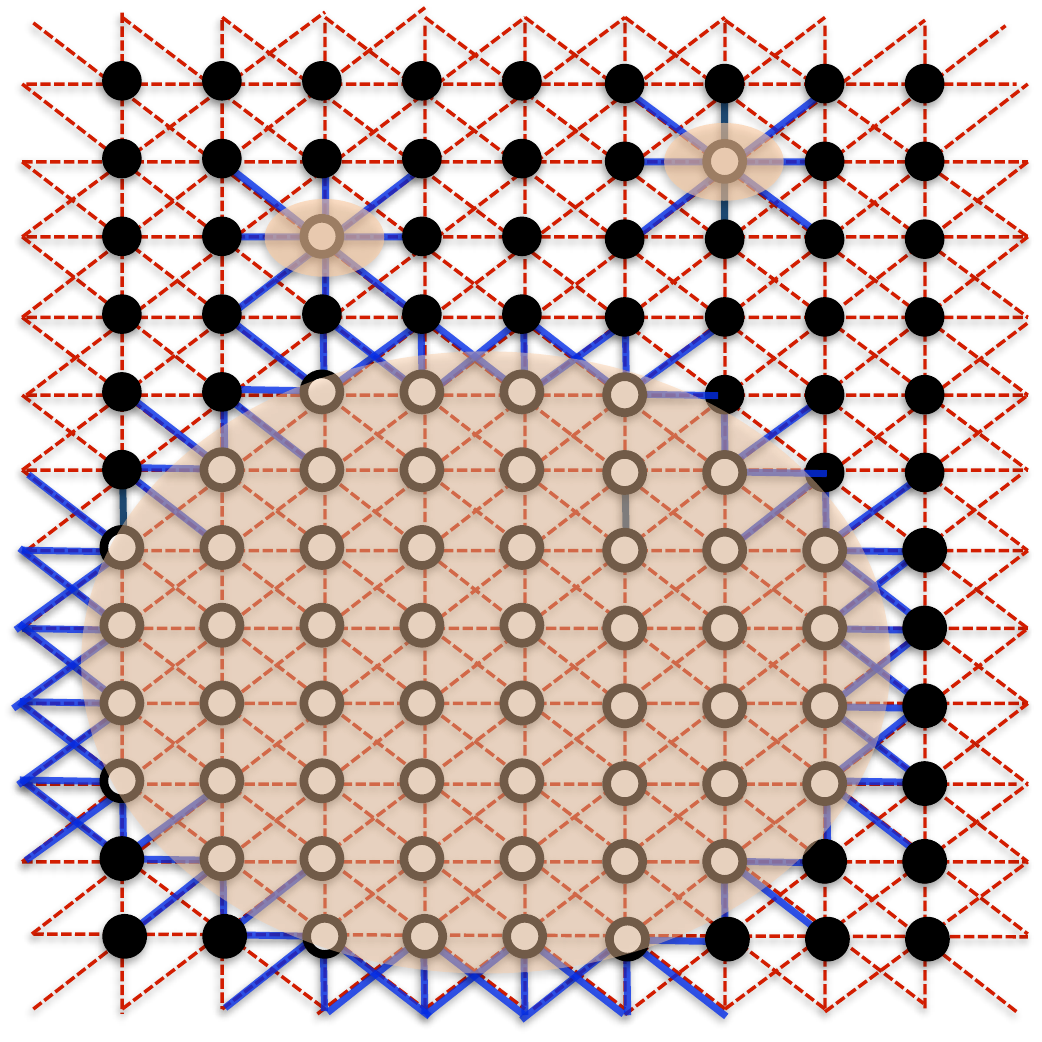}\caption{An example of the cut $(\mathcal{S},\mathcal{S}^{\mathrm{c}})$ in
a geometric graph. Here, $\mathcal{S}$ consists of all black vertices,
while $\mathcal{S}^{\mathrm{c}}$ contains all white vertices. The
blue solid edges represent the cut edges.}
\label{fig:geometric}
\end{figure}

To develop some intuitive understanding of the above notions, we depict
in Fig. \ref{fig:geometric} an example of a cut $\left(\mathcal{S},\mathcal{S}^{\mathrm{c}}\right)$
in a geometric graph, where $\mathcal{S}^{\mathrm{c}}$ consists of
all vertices residing within the shaded area, and the blue solid edges
indicate the cut edges. Typically, type-1 vertices, which are incident
to many cut edges, are lying on or close to the boundary of the cut.
In Fig. \ref{fig:geometric}, these correspond to those vertices lying
around the boundary of the shaded area in addition to those singleton
white vertices. In contrast, type-2 vertices often refer to those
staying away from the cut boundary (e.g. those white nodes in the
center of the shaded area). It may be useful to keep this figure in
mind when reading about the subsequent proof.

To prove Lemma \ref{lemma:alpha_random_graphs}, we start by examining
how many combinations of type-1 vertices are feasible and how many
ways there are to color them. By definition, for any cut obeying (\ref{eq:cut-size-constraint}),
the number $V_{1}$ of type-1 vertices satisfies $V_{1}\leq\frac{2e(\mathcal{S},\mathcal{S}^{\mathrm{c}})}{\frac{1}{2}\kappa\rho\mincut}\leq\frac{4k}{\kappa\rho}$
(note that each edge is incident to two vertices and might be counted
twice). Simple combinatorial arguments thus suggest that there are
at most $n^{4k/\kappa\rho}$ distinct ways to pick these type-1 vertices,
and then no more than $2^{V_{1}}\leq2^{4k/\kappa\rho}$ ways to color
all these type-1 vertices, and finally at most ${2e(\mathcal{S},\mathcal{S}^{\mathrm{c}}) \choose V_{1}}\leq\left(2k\cdot\mincut\right)^{V_{1}}\leq\left(2k\cdot\mincut\right)^{4k/\kappa\rho}$
different combinations of cut-edge degrees among them. Taken together
these counting arguments imply that there exist no more than\footnote{Here, we use the fact that $2k\cdot\mincut\leq n^{2}$, and hence
$\left(2k\cdot\mincut\right)^{4k/\kappa\rho}\leq n^{8k/\kappa\rho}$.} 
\[
n^{4k/\kappa\rho}\left(2k\cdot\mincut\right)^{4k/\kappa\rho}2^{4k/\kappa\rho}<\left(2n\right)^{8k/\kappa\rho}
\]
distinct ways to select the set of type-1 vertices as well as assign
colors and cut-edge degrees for each of them, if one is required to
satisfy the cut size constraint (\ref{eq:cut-size-constraint}).

We claim that for any cut $\left(\mathcal{S},\mathcal{S}^{\text{c}}\right)$
obeying (\ref{eq:cut-size-constraint}), once the following three
pieces of information are gathered:\begin{enumerate}[(i)]

\item which vertices are type-1 vertices,

\item  the cut-edge degrees of these type-1 vertices,

\item  the colors of these type-1 vertices (i.e. whether they belong
to $\mathcal{S}$ or $\mathcal{S}^{\mathrm{c}}$),

\end{enumerate}then the colors of all remaining vertices (and hence
all information about this cut) can be uniquely determined. Following
the preceding pictorial interpretation, the whole point of this claim
is to demonstrate that as long as some appropriate conditions regarding
the cut boundary is known, then one can figure out all remaining cut
information. To establish this claim, we shall consider the following
two cases separately. Without loss of generality, the following discussion
concentrates only on \emph{black} type-1 vertices. 
\begin{itemize}
\item \textbf{Case 1.} Consider any vertex $v$ whose color has been revealed
to be black, and whose cut-edge degree does \emph{not }exceed
\begin{equation}
\left(1-\frac{1}{2}\kappa\right)\rho\cdot\mincut,\label{eq:v-assumption}
\end{equation}
namely, $v$ is connected with no more than $\left(1-\frac{1}{2}\kappa\right)\rho\cdot\mincut$
white vertices. For any of its neighbors $u$ (i.e. $(u,v)\in\mathcal{E}$),
if the color of $u$ has not been revealed, then we claim that it
must be black. To see this, suppose instead that $u$ is white, then
from the above connectivity assumption (\ref{eq:v-assumption}) of
$v$, the number of black vertices that $u$ is linked with is at
least 
\begin{eqnarray*}
 &  & \left|\mathcal{V}\left(u\right)\cap\mathcal{V}\left(v\right)\right|-\left(1-\frac{1}{2}\kappa\right)\rho\cdot\mincut\\
 &  & \quad\geq\text{ }\rho\mincut-\left(1-\frac{1}{2}\kappa\right)\rho\cdot\mincut\text{ }=\text{ }\frac{1}{2}\kappa\rho\mincut,
\end{eqnarray*}
where the inequality follows from Assumption (\ref{eq:GeometricOverlap}).
This means that $u$ must be a type-1 vertex (cf. definition of type-1
vertices) and its color must have been revealed, thus resulting in
contradiction. In summary, all neighbors of $v$ with unknown colors
are necessarily black. 
\end{itemize}
\begin{itemize}[listparindent =2em] \item \textbf{Case 2. }Consider
any vertex $v$ whose color has been revealed to be black, and whose
cut-edge degree is known to be larger than $\left(1-\frac{1}{2}\kappa\right)\rho\cdot\mincut$.
Again, consider any of its neighbors $u$ whose color remains unknown,
which must be incident to fewer than $\frac{1}{2}\kappa\rho\cdot\mincut$
cut edges since by construction it is a type-2 vertex. This already
suggests the following fact: if there are at least $\frac{1}{2}\kappa\rho\cdot\mincut$
vertices falling in $\mathcal{\mathcal{V}}\left(u\right)\cap\mathcal{\mathcal{V}}\left(v\right)$
known to be white (resp. black), then the color of $u$ must be white
(resp. black), since by definition a type-2 vertex cannot be connected
to $\frac{1}{2}\kappa\rho\cdot\mincut$ vertices of opposite color.
As a result, we can uniquely determine the color of $u$ unless 
\begin{itemize}
\item (\textbf{P1}) the colors of fewer than $\kappa\rho\cdot\mincut$ vertices\footnote{Otherwise there are either $\frac{1}{2}\kappa\rho\mincut$ white vertices
or $\frac{1}{2}\kappa\rho\mincut$ black colors in $\mathcal{E}\left(u\right)\cap\mathcal{E}\left(v\right)$
with their colors revealed.} in $\mathcal{\mathcal{V}}\left(u\right)\cap\mathcal{\mathcal{V}}\left(v\right)$
have been revealed. 
\end{itemize}
This remaining situation is the subject of the discussion below. 

Suppose that the true color of $u$ is black. Recall that $u$ is
a type-2 vertex and hence it is connected to fewer than $\frac{1}{2}\kappa\rho$
white vertices. From Assumption (\ref{eq:GeometricOverlap}) and the
condition $\kappa<\frac{1}{2}$, any white neighbor $w$ of $u$ must
be connected with at least 
\[
\left|\mathcal{\mathcal{V}}\left(u\right)\cap\mathcal{\mathcal{V}}\left(w\right)\right|-\frac{1}{2}\kappa\rho\cdot\mincut\geq\left(1-\frac{1}{2}\kappa\right)\rho\mincut\geq\frac{1}{2}\kappa\rho\cdot\mincut
\]
black vertices falling within $\mathcal{\mathcal{V}}\left(u\right)\cap\mathcal{\mathcal{V}}\left(w\right)$,
and hence $w$ must be a type-1 vertex and its color has necessarily
been identified. Similarly, if $u$ is white, then the colors of \emph{all}
black vertices surrounding $u$ must have been revealed.\textbf{ }As
a result, all vertices in $\mathcal{V}(u)$ with unknown colors must
be of the same color as $u$. That being said, as long as one can
identify the color of one extra vertex in $\mathcal{V}\left(u\right)\cap\mathcal{V}\left(v\right)$,
then the color of $u$ and all remaining vertices in $\mathcal{V}\left(u\right)\cap\mathcal{V}\left(v\right)$
can be uniquely determined. 

Now let $w$ be the uncolored vertex in $\mathcal{V}\left(u\right)\cap\mathcal{V}\left(v\right)$
that is the nearest to $v$, which by (P1) must be within the $(\kappa\rho\mincut)$
closest vertices to $v$ in $\mathcal{V}\left(u\right)\cap\mathcal{V}\left(v\right)$.
From Assumption (\ref{eq:Closest}), we see that $w$ must be connected
to all but $\frac{1}{2}\rho\cdot\mincut$ neighbors surrounding $v$
and, as a result, be connected to at least
\begin{eqnarray*}
\text{cut-degree}(v)-\left|\mathcal{V}\left(v\right)\backslash\mathcal{V}\left(w\right)\right| & \geq & \left(1-\frac{1}{2}\kappa\right)\rho\cdot\mincut-\frac{1}{2}\rho\cdot\mincut\\
 & = & \frac{1}{2}\left(1-\kappa\right)\rho\cdot\mincut\text{ }\geq\text{ }\frac{1}{2}\kappa\rho\cdot\mincut.
\end{eqnarray*}
white vertices since $\kappa<\frac{1}{2}$, where $\text{cut-degree}(v)$
represents the cut-edge degree of $v$. Therefore, if $w$ is black,
then it has to be a type-1 vertex, which is contradictory, and we
have determined it to be white. \end{itemize}

Putting the above two cases together indicates that all vertices that
are connected to the set of type-1 vertices can be uniquely colored,
and we shall use $\mathcal{V}_{\text{new}}$ to denote them. If there
still exist uncolored vertices, a nonempty subset of them must be
connected to $\mathcal{V}_{\text{new}}$. Since all vertices in $\mathcal{V}_{\text{new}}$
are type-2 vertices and have cut-degrees not exceeding $\frac{1}{2}\kappa\rho\mincut\leq\left(1-\frac{1}{2}\kappa\right)\rho\mincut$,
repeating the arguments in Case 1 allows us to determine the color
of all vertices surrounding $\mathcal{V}_{\text{new}}$. This step
further shrinks the size of the uncolored set. Repeating this argument
until all vertices are colored, we establish the claim. All in all,
we have thus demonstrated that the number of feasible coloring schemes
is bounded above by $\left(2n\right)^{8k/\kappa\rho}$, which in turn
justifies 
\[
\tau_{k}^{\mathrm{cut}}\leq\frac{8\log\left(2n\right)}{\kappa\rho},\quad\forall k\geq1.
\]

(2) If $\mathcal{G}$ is an expander graph with edge expansion $h_{\mathcal{G}}$,
then for any vertex set $\mathcal{S}$ with $\left|\mathcal{S}\right|\leq\frac{n}{2}$,
one has
\begin{equation}
\left|\mathcal{S}\right|\leq e\left(\mathcal{S},\mathcal{S}^{\mathrm{c}}\right)/h_{\mathcal{G}}\label{eq:ExpanderBoundary}
\end{equation}
from the definition of $h_{\mathcal{G}}$. For any $d>0$, if one
requires that 
\begin{equation}
e\left(\mathcal{S},\mathcal{S}^{\mathrm{c}}\right)\leq kd,\label{eq:S-ub}
\end{equation}
then the above inequality leads to
\[
\left|\mathcal{S}\right|\leq kd/h_{\mathcal{G}},
\]
indicating that there are at most $2{n \choose \left\lfloor \frac{kd}{h_{\mathcal{G}}}\right\rfloor }\leq2n^{kd/h_{\mathcal{G}}}$
feasible cuts $\left(\mathcal{S},\mathcal{S}^{\mathrm{c}}\right)$
satisfying (\ref{eq:S-ub}). Setting $d=\mincut$ immediately leads
to
\[
\left|\mathcal{N}\left(k\cdot\mincut\right)\right|\leq2n^{k\mincut/h_{\mathcal{G}}},
\]
\[
\Rightarrow\quad\tau_{k}^{\mathrm{cut}}=\frac{1}{k}\log\left|\mathcal{N}\left(k\cdot\mincut\right)\right|\leq\frac{\mincut\log n}{h_{\mathcal{G}}}+\frac{\log2}{k},\quad\forall k\geq1
\]
as claimed.

\section{Proof of Lemma \ref{lemma-Outlier}\label{sec:Proof-of-Lemma-Outlier}}

We begin with explicit expressions of the divergence measures. For
any $k\neq l$ and $p\in\left[0,1\right]$, one has
\begin{eqnarray}
 &  & \mathsf{KL}\left(p\delta_{k}+(1-p)\text{Unif}_{M}\hspace{0.3em}\|\hspace{0.3em}p\delta_{l}+(1-p)\text{Unif}_{M}\right)\nonumber \\
 &  & \quad=\left(p+\frac{1-p}{M}\right)\log\left(\frac{p+\frac{1-p}{M}}{\frac{1-p}{M}}\right)+\frac{1-p}{M}\log\left(\frac{\frac{1-p}{M}}{p+\frac{1-p}{M}}\right)\label{eq:KLdivergence_1}\\
 &  & \quad=p\log\left(\frac{\left(M-1\right)p+1}{1-p}\right),\label{eq:KL_single_entry-general}
\end{eqnarray}
where $\delta_{k}$ denotes the Dirac measure on the point $k$, and
(\ref{eq:KLdivergence_1}) follows since the two distributions under
study differ only at two points $x=k$ and $x=l$. Similarly, one
obtains (cf. Definition (\ref{eq:Hellinger})) 
\begin{eqnarray}
 &  & \mathsf{Hel}_{\frac{1}{2}}\left(p\delta_{k}+(1-p)\text{Unif}_{M}\hspace{0.3em}\|\hspace{0.3em}p\delta_{l}+(1-p)\text{Unif}_{M}\right)\nonumber \\
 &  & \quad=2\left(\sqrt{p+\frac{1-p}{M}}-\sqrt{\frac{1-p}{M}}\right)^{2}=\frac{2}{M}\left(\sqrt{\left(M-1\right)p+1}-\sqrt{1-p}\right)^{2}.\label{eq:Hel_single_entry}
\end{eqnarray}
When applied to the outlier model, these suggest
\begin{equation}
\mathsf{KL}^{\min}=p_{\mathrm{true}}\log\left(1+\frac{p_{\mathrm{true}}M}{1-p_{\mathrm{true}}}\right)\leq\frac{p_{\mathrm{true}}^{2}M}{1-p_{\mathrm{true}}},\label{eq:KL-outlier-2}
\end{equation}
\begin{equation}
\text{and}\quad\mathsf{Hel}_{\frac{1}{2}}^{\min}=\frac{2}{M}\left(\sqrt{1-p_{\mathrm{true}}+Mp_{\mathrm{true}}}-\sqrt{1-p_{\mathrm{true}}}\right)^{2}.\label{eq:KL-outlier}
\end{equation}

It remains to control the Hellinger divergence. To this end, the elementary
identity $a-b=\frac{a^{2}-b^{2}}{a+b}$ gives 
\begin{eqnarray*}
 &  & \left(\sqrt{1-p_{\mathrm{true}}+Mp_{\mathrm{true}}}-\sqrt{1-p_{\mathrm{true}}}\right)^{2}=\left(\frac{p_{\mathrm{true}}M}{\sqrt{1-p_{\mathrm{true}}+Mp_{\mathrm{true}}}+\sqrt{1-p_{\mathrm{true}}}}\right)^{2}\\
 &  & \quad\geq\left(\frac{p_{\mathrm{true}}M}{2\sqrt{1-p_{\mathrm{true}}+Mp_{\mathrm{true}}}}\right)^{2}=\frac{p_{\mathrm{true}}^{2}M^{2}}{4\left(1-p_{\mathrm{true}}+Mp_{\mathrm{true}}\right)},
\end{eqnarray*}
indicating that $\mathsf{Hel}_{\frac{1}{2}}^{\min}\geq\frac{p_{\mathrm{true}}^{2}M}{2\left(1-p_{\mathrm{true}}+Mp_{\mathrm{true}}\right)}$
as claimed.

\section{Proof of Lemma \ref{lemma:cut-size-alpha-beta}\label{sec:Proof-of-Lemma:cut-size-alpha-beta}}

Consider any hypothesis $\boldsymbol{x}=\boldsymbol{w}\in\mathcal{A}_{k}$,
which obeys $\left|\mathcal{E}\cap\mathrm{supp}\left(\boldsymbol{w}\ominus\boldsymbol{w}\right)\right|<k\cdot\mincut$.
Denote by $\mathcal{S}_{l}$ the set of vertices that take the value
$l$ $(0\leq l<M)$, and let $\mathcal{I}_{\neg\emptyset}:=\left\{ l\mid\mathcal{S}_{l}\neq\emptyset\right\} $
represent the indices of those non-empty ones. Our proof proceeds
by evaluating the following quantities:
\begin{enumerate}
\item How many distinct choices of $\mathcal{I}_{\neg\emptyset}$ are admissible?
\item For each given $\mathcal{I}_{\neg\emptyset}$, how many combinations
of cut-set sizes $\left\{ e\left(\mathcal{S}_{l},\mathcal{S}_{l}^{\mathrm{c}}\right)\mid l\in\mathcal{I}_{\neg\emptyset}\right\} $
are feasible?
\item For each given cut-set size $N_{l}$, how many cuts $\left(\mathcal{S}_{l},\mathcal{S}_{l}^{\mathrm{c}}\right)$
are compatible with the constraint $e\left(\mathcal{S}_{l},\mathcal{S}_{l}^{\mathrm{c}}\right)\leq N_{l}$?
\end{enumerate}
Clearly, multiplying all these quantities together gives rise to an
upper bound on $\left|\mathcal{A}_{k}\right|$.

We now compute the above quantities separately. 
\begin{itemize}
\item To begin with, our assumption on the min-cut size ensures that 
\begin{equation}
e\left(\mathcal{S}_{l},\mathcal{S}_{l}^{\mathrm{c}}\right)\geq\mincut\label{eq:MinBoundary-Eq}
\end{equation}
for each non-empty $\mathcal{S}_{l}$. This together with the feasibility
constraint 
\begin{equation}
2\left|\mathcal{E}\cap\mathrm{supp}\left(\boldsymbol{w}\ominus\boldsymbol{w}\right)\right|=\sum_{l=0}^{M-1}e\left(\mathcal{S}_{l},\mathcal{S}_{l}^{\mathrm{c}}\right)\leq2k\cdot\mincut\label{eq:cut-size}
\end{equation}
guarantees that the number of non-empty $\mathcal{S}_{l}$'s cannot
exceed $2k$. Consequently, there exist at most $M^{2k}$ possible
combinations of $\mathcal{I}_{\neg\emptyset}$. 
\item Secondly, from (\ref{eq:cut-size}), the total cut-set size is bounded
above by $2k\cdot\mincut$. Therefore, for any given $\mathcal{I}_{\neg\emptyset}$,
there are no more than 
\[
{2k\mincut \choose \left|\mathcal{I}_{\neg\emptyset}\right|}\leq\left(2k\cdot\mincut\right)^{2k}
\]
 feasible ways to assign cut-set sizes $e\left(\mathcal{S}_{l},\mathcal{S}_{l}^{\mathrm{c}}\right)$
for all $l\in\mathcal{I}_{\neg\emptyset}$. 
\item Thirdly, suppose that for each $l\in\mathcal{I}_{\neg\emptyset}$,
\begin{equation}
e\left(\mathcal{S}_{l},\mathcal{S}_{l}^{\mathrm{c}}\right)=c_{l}\cdot\mincut\label{eq:Defn-cl}
\end{equation}
for some numerical values $c_{l}\geq1$. From the definition (\ref{eq:DefnExponent}),
the number of feasible choices of $\left(\mathcal{S}_{l},\mathcal{S}_{l}^{\mathrm{c}}\right)$
compatible with (\ref{eq:Defn-cl}) is bounded above by
\[
\left|\mathcal{N}\left(c_{l}\cdot\mincut\right)\right|\leq\left|\mathcal{N}\left(\left\lceil c_{l}\right\rceil \mincut\right)\right|\leq\exp\left(\left\lceil c_{l}\right\rceil \tau^{\mathrm{cut}}\right)\leq\exp\left(2c_{l}\tau^{\mathrm{cut}}\right).
\]
Recognize that the constraint (\ref{eq:cut-size}) requires
\[
\sum_{l}c_{l}<2k.
\]
As a result, when the cut sizes $e\left(\mathcal{S}_{l},\mathcal{S}_{l}^{\mathrm{c}}\right)$
are given, the total number of valid partitions $\left\{ \mathcal{S}_{l}\mid0\leq l<M\right\} $
cannot exceed
\begin{equation}
\prod_{l=0}^{M-1}\exp\left(2c_{l}\tau^{\mathrm{cut}}\right)<\exp\left(4k\tau^{\mathrm{cut}}\right).\label{eq:DifferentChoiceAlpha}
\end{equation}

\end{itemize}
Putting the above combinatorial bounds together implies that
\begin{eqnarray*}
\left|\mathcal{A}_{k}\right| & < & M^{2k}\left(2k\cdot\mincut\right)^{2k}\exp\left(4k\tau^{\mathrm{cut}}\right).
\end{eqnarray*}
Using the inequality $k\mincut\leq n^{2}$ we conclude the proof.

\section{Proof of Fact \ref{Fact:KL-Hellinger}\label{sec:Proof-of-Fact-KL-Hellinger}}

Recall that KL divergence and Hellinger divergence are both $f$-divergence
associated with the \emph{non-negative} convex functions $f_{1}\left(x\right)=x\log x-x+1$
and $f_{2}\left(x\right)=\left(\sqrt{x}-1\right)^{2}$, respectively.
That said, one can write 
\[
\mathsf{KL}\left(P\hspace{0.2em}\|\hspace{0.2em}Q\right)=\mathbb{E}_{Q}\left[f_{1}\left(\frac{\mathrm{d}P}{\mathrm{d}Q}\right)\right]\quad\text{and}\quad\mathsf{Hel}_{\frac{1}{2}}\left(P\hspace{0.2em}\|\hspace{0.2em}Q\right)=\mathbb{E}_{Q}\left[f_{2}\left(\frac{\mathrm{d}P}{\mathrm{d}Q}\right)\right].
\]
One can verify that the function $f_{1}$ can be uniformly bounded
above using $f_{2}$ in the following way:
\[
\left(2-0.5\left|\log x\right|\right)f_{2}\left(x\right)\leq f_{1}\left(x\right)\leq\left(2+\left|\log x\right|\right)f_{2}\left(x\right),\quad\forall x>0.
\]
This immediately establish that
\[
\mathsf{KL}\left(P\hspace{0.2em}\|\hspace{0.2em}Q\right)=\mathbb{E}_{Q}\left[f_{1}\left(\frac{\mathrm{d}P}{\mathrm{d}Q}\right)\right]\leq\left(2+\log R\right)\mathbb{E}_{Q}\left[f_{2}\left(\frac{\mathrm{d}P}{\mathrm{d}Q}\right)\right]=\left(2+\log R\right)\mathsf{Hel}_{\frac{1}{2}}\left(P\hspace{0.2em}\|\hspace{0.2em}Q\right)
\]
and
\[
\mathsf{KL}\left(P\hspace{0.2em}\|\hspace{0.2em}Q\right)=\mathbb{E}_{Q}\left[f_{1}\left(\frac{\mathrm{d}P}{\mathrm{d}Q}\right)\right]\geq\left(2-0.5\log R\right)\mathbb{E}_{Q}\left[f_{2}\left(\frac{\mathrm{d}P}{\mathrm{d}Q}\right)\right]=\left(2-0.5\log R\right)\mathsf{Hel}_{\frac{1}{2}}\left(P\hspace{0.2em}\|\hspace{0.2em}Q\right).
\]
These together with the well known inequality \cite[Lemma 2.4]{tsybakov2009introduction}
\[
\mathsf{KL}\left(P\hspace{0.2em}\|\hspace{0.2em}Q\right)\geq\mathsf{Hel}_{\frac{1}{2}}\left(P\hspace{0.2em}\|\hspace{0.2em}Q\right)
\]
establish (\ref{eq:KL-Hellinger}).

Similarly, from the inequality
\[
\left(2-0.4\left|\log x\right|\right)f_{2}\left(x\right)\leq f_{1}\left(x\right)\leq\left(2+0.4\left|\log x\right|\right)f_{2}\left(x\right),\quad\forall x\in(0,4.5],
\]
one can show that
\begin{equation}
\max\left\{ 2-0.4\log R,\text{ }1\right\} \cdot\mathsf{Hel}_{\frac{1}{2}}\left(P\hspace{0.2em}\|\hspace{0.2em}Q\right)\leq\mathsf{KL}\left(P\hspace{0.2em}\|\hspace{0.2em}Q\right)\leq\left(2+0.4\log R\right)\cdot\mathsf{Hel}_{\frac{1}{2}}\left(P\hspace{0.2em}\|\hspace{0.2em}Q\right)
\end{equation}
as long as $R\leq4.5$, as claimed.

\bibliographystyle{plain} \bibliographystyle{plain}
\bibliography{bibfileNetDecode}

\end{document}